\documentclass[11pt]{article}

\usepackage[margin=1in]{geometry}
\usepackage{amsmath,amssymb,amsthm,mathtools}
\usepackage{bm}
\usepackage{graphicx}
\usepackage[section]{placeins}
\usepackage[colorlinks=true,linkcolor=darkblue,citecolor=darkblue,urlcolor=darkblue]{hyperref}
\usepackage{enumitem}
\usepackage{xcolor}
\definecolor{darkblue}{rgb}{0.0,0.0,0.55}
\usepackage[round]{natbib}

\newtheorem{theorem}{Theorem}
\newtheorem{lemma}{Lemma}
\newtheorem{corollary}{Corollary}
\newtheorem{proposition}{Proposition}
\newtheorem{conjecture}{Conjecture}
\theoremstyle{definition}
\newtheorem{definition}{Definition}
\newtheorem{remark}{Remark}

\newcommand{\R}{\mathbb{R}}
\newcommand{\E}{\mathbb{E}}
\newcommand{\Prob}{\mathbb{P}}
\newcommand{\KL}{\mathrm{KL}}
\newcommand{\Tr}{\mathrm{Tr}}
\newcommand{\TV}{\mathrm{TV}}
\newcommand{\Unif}{\mathrm{Unif}}
\newcommand{\sign}{\mathrm{sign}}
\newcommand{\Ind}{\mathbf{1}}
\newcommand{\hb}{h_{\mathrm{b}}}
\newcommand{\eps}{\varepsilon}
\newcommand{\norm}[1]{\left\lVert #1 \right\rVert}
\newcommand{\Ncal}{\mathcal{N}}
\newcommand{\Ccal}{\mathcal{C}}
\newcommand{\Rbb}{\mathbb{R}}
\newcommand{\Xcal}{\mathcal{X}}

\newcommand{\proven}{}

\newcommand{\partialtag}{}

\title{\bf Information-Theoretic Lower Bounds for Bit-Constrained\\ Stochastic Optimization via a Reduction to\\ Compressed Gaussian Mean Estimation}
\author{%
  Munsik Kim\\
  \texttt{physicist456@gmail.com}%
}
\date{\today}

\begin{document}
\maketitle

\begin{abstract}
Low-precision pretraining (FP8, MXFP4, NVFP4) is now standard for frontier language models, yet the
literature is almost entirely \emph{achievability}: algorithms and empirical scaling laws, with no
matching characterization of what is information-theoretically possible. We study a \emph{$B$-bit
quantized stochastic first-order oracle} in which an optimizer interacts for $T$ rounds and receives,
each round, a $B$-bit adaptive, public-coin description of its stochastic gradient. Our main
contribution is an \emph{exact reduction} (Lemma~\ref{lem:reduction}) from optimizing a strongly convex
quadratic family to interactively compressed Gaussian mean estimation: under the $B$-bit oracle the
query carries no information, so optimization collapses exactly onto a sequential distributed-estimation
problem. This yields two \emph{unconditional} lower bounds---a communication bound $TB=\Omega(d)$
(Theorem~\ref{thm:comm}) and a statistical bound $T=\Omega(\sigma^2 d/\eps^2)$
(Theorem~\ref{thm:stat})---and the sharp \emph{product-form} bound
$T=\Omega\!\big((\sigma^2 d/\eps^2)\max\{1,d/B\}\big)$ (Theorem~\ref{thm:product}), also unconditional:
in the Gaussian-location model the Fisher-information--mutual-information constraint of
\citet{barnes2021fisher}---a $B$-bit transcript carries at most $O(TB/\sigma^2)$ of Fisher trace about
the mean, so bits rather than dimension limit the recoverable information---combined with the
multivariate van Trees inequality \citep{gilllevit1995} gives the bound directly, without the
bounded-likelihood-ratio truncation of \citet{braverman2016ddpi}. We give a \emph{near-matching}
achievability result with exact per-round bit accounting (Theorem~\ref{thm:upper}) under a
\emph{bounded-dynamic-range} oracle, tight up to a logarithmic factor; the lower bound is for truly
Gaussian (unbounded) gradients, and closing this oracle gap is left open. A sequential
rate--distortion perspective (Appendix~\ref{app:seqrd}) extends the reduction to temporally correlated
and drifting oracles and corrects an earlier conjecture: positive noise correlation \emph{raises} the
bound by $\tfrac{1+\rho}{1-\rho}$ rather than relaxing it (Corollary~\ref{cor:correlated}). We are
deliberately conservative about interpretation: the bounds give an information-theoretic baseline for
any low-bit gradient path, not an optimality claim about deployed FP4 systems.
\end{abstract}

\section{Introduction}
\label{sec:intro}

Lowering the numerical precision of gradients and weights is a leading way to reduce the cost of
training large language models. Eight-bit floating point is routine, and over the past year
four-bit formats have crossed from speculative to demonstrated: MXFP4 training with random Hadamard
transforms and stochastic rounding reaches near-lossless quality \citep{tseng2025mxfp4}, the Quartet
recipe closes the gap to FP8 with empirical scaling laws \citep{castro2025quartet}, and NVFP4 has
pretrained a 12B model over ten trillion tokens at FP8 parity \citep{nvidia2025nvfp4}.

These are all \emph{achievability} statements: a particular algorithm attains a particular accuracy at
a particular bit width. They do not answer the converse: \emph{given a budget of $B$ bits per gradient,
how many steps are unavoidably required to reach accuracy $\eps$?} Recent convergence analyses under
floating-point quantization \citep{tang2025adaptive} sharpen the upper-bound side but remain upper
bounds. The two literatures that might supply a converse each stop short: classical oracle-complexity
lower bounds \citep{nemirovski1983problem,agarwal2012lower,carmon2020stationary,arjevani2023nonconvex,braun2017nonsmooth}
grant the algorithm \emph{real-valued} gradients, while distributed estimation/optimization lower
bounds \citep{zhang2013distributed,braverman2016ddpi,han2018geometric,arjevani2015communication} are
framed as many machines holding data rather than one optimizer receiving a low-precision description
of its own gradient. This paper connects the two and is careful about where rigor stops.

\paragraph{Contributions and their status.} We state up front what is and is not ours. Our contribution
is the bridge that lets the distributed-estimation literature speak about optimization, and the
assembly of published Fisher-information tools into a sharp converse; the Fisher-trace and van Trees
inequalities we invoke are external and cited as such.
\begin{enumerate}[leftmargin=1.6em,itemsep=2pt]
  \item \textbf{Exact reduction / bridge} (Lemma~\ref{lem:reduction}), \emph{the main conceptual
  contribution}: under the $B$-bit oracle, optimizing $f_v(x)=\tfrac12\norm{x-\theta_v}^2$,
  $\theta_v=\delta v$, $v\in\{\pm1\}^d$ is equivalent to estimating $v$ from $T$ rounds of
  $B$-bit-compressed Gaussian observations; the query carries no information. This connects
  bit-constrained optimization to the distributed-estimation literature. \proven
  \item \textbf{Communication bound} (Theorem~\ref{thm:comm}): $TB=\Omega(d)$, by chain rule and binary
  Fano. \proven
  \item \textbf{Statistical bound} (Theorem~\ref{thm:stat}): $T=\Omega(\sigma^2 d/\eps^2)$, by Assouad
  with Pinsker. \proven
  \item \textbf{Combined $\max$-form} (Corollary~\ref{cor:combined}), with the hard-instance scale made
  explicit. \proven
  \item \textbf{Product-form bound} (Theorem~\ref{thm:product}), \emph{unconditional}:
  $T=\Omega\!\big((\sigma^2 d/\eps^2)\max\{1,d/B\}\big)$. The reduction places us in the Gaussian-location
  model; the Fisher-information--mutual-information constraint of \citet{barnes2021fisher} bounds the
  transcript's Fisher trace by $O(TB/\sigma^2)$, and the multivariate van Trees inequality
  \citep{gilllevit1995} converts this to the minimax estimation bound, which the reduction turns into the
  optimization bound. The Fisher-trace and van Trees inequalities are \emph{not} ours
  \citep{barnes2021fisher,barneshanozgur2020,gilllevit1995}; the assembly and the reduction are. \proven
  \item \textbf{Self-contained Fisher-trace proof and an alternative route} (Appendix~\ref{app:partial}):
  we give a short proof of the Gaussian $T=1$ Fisher-trace bound via a Donsker--Varadhan centroid
  inequality, and record an alternative mutual-information/SDPI route (with the $\chi^2$/maximal-correlation
  contraction shown to be $O(\mathrm{SNR})$) together with why converting it to the needed KL contraction
  is harder than the Fisher route used in the main text. \proven (centroid) $/$ \partialtag (MI route)
  \item \textbf{Achievability matching the lower bound up to logarithms} (Theorem~\ref{thm:upper}),
  under a \emph{bounded-dynamic-range} oracle: public-coin rand-$k$ sparsification $+$ fixed-grid
  stochastic quantization (with an optional random rotation, used in practice but not needed for the
  bound), spending \emph{exactly} $B$ bits per round. We prove an unbiased $B$-bit compressor with second
  moment inflated by $\omega_B=O(\max\{1,d\log d/B\})$ (Lemma~\ref{lem:compress}) and feed it to the
  strongly convex SGD rate of \citet{bottou2018optimization}, giving
  $T=\widetilde O((\sigma^2 d/\eps^2)\max\{1,d/B\})$. This matches Theorem~\ref{thm:product} up to a
  $\log d$ factor for a bounded-gradient oracle; the lower bound is for unbounded Gaussian gradients, so
  the match is up to this oracle gap, which we state rather than hide.
  \item \textbf{An extension and a correction} (Corollary~\ref{cor:correlated},
  Appendix~\ref{app:seqrd}, presented as a perspective). A dynamic counterpart of the reduction
  corrects an earlier conjecture: positive noise correlation \emph{raises} the bound by
  $\tfrac{1+\rho}{1-\rho}$ (the relaxing quantity is trajectory predictability, not noise correlation).
  The appendix also records a self-contained tracking lower bound and, in an \emph{expected-rate} sense,
  an innovation-quantization achievability; the exact fixed-length gap (L3$'$) remains open. The
  sequential-RD machinery is imported; the reduction and the correction are ours.
  \item \textbf{Numerical sanity checks and practical reading} (Section~\ref{sec:discussion}, Appendix~\ref{app:exp}).
  Experiments in the model's native setting confirm the product-form rate, show that the max-form
  scaling underestimates the cost, that several unbiased $B$-bit schemes all respect the lower bound,
  that bits (not dimension) cap the transcript Fisher trace, and that the reduction is exact; the
  practical reading is recorded conservatively. No claim is made about non-quadratic or transformer
  training.
\end{enumerate}

\section{Related Work}
\label{sec:related}

\paragraph{Low-precision training (practice).} \citet{tseng2025mxfp4} give a near-lossless MXFP4
recipe whose key device is a random Hadamard transform bounding stochastic-rounding variance;
\citet{castro2025quartet} present end-to-end FP4 with empirical scaling laws; the NVFP4 report
\citep{nvidia2025nvfp4} demonstrates trillion-token FP4 training and notes that stochastic rounding is
essential specifically on the gradient path. These descend from distributed gradient-quantization
schemes---QSGD \citep{alistarh2017qsgd}, limited-communication mean estimation \citep{suresh2017dme},
error feedback \citep{karimireddy2019errorfeedback}, and sparsification
\citep{stich2018sparsified,wangni2018gradient}---and are all achievability results.

\paragraph{Convergence theory under quantization.} \citet{tang2025adaptive} give convergence
guarantees for Adam and Muon under floating-point quantization of gradients, weights, and optimizer
states. Such results are upper bounds; our lower bounds are the converse, and meet a near-matching
achievability result under a bounded-dynamic-range oracle (Section~\ref{sec:upper}).

\paragraph{Oracle-complexity lower bounds (no quantization).} The reduction-to-estimation methodology
originates with \citet{nemirovski1983problem} and was sharpened by \citet{agarwal2012lower}; parallel
work covers stationary points of nonconvex functions \citep{carmon2020stationary}, nonconvex
stochastic optimization \citep{arjevani2023nonconvex}, and nonsmooth convex optimization
\citep{braun2017nonsmooth}. These grant unquantized gradients; $B$ does not appear.
Theorem~\ref{thm:stat} is the $B=\infty$ specialization.

\paragraph{Bit-constrained first-order optimization (closest prior work).} Most directly related is the
line of \citet{mayekartyagi2020ratq,mayekartyagi2020limits}, who study single-machine first-order
stochastic optimization in which each gradient is quantized to $r$ bits and characterize the minimum
precision $r^\star$ needed to retain the unquantized rate---$\Theta(d)$ bits for $\ell_2$ and
$\Theta(\log d)$ for $\ell_\infty$---with matching fixed-length quantizers (RATQ: a Hadamard rotation
followed by adaptive uniform quantization, the same ``rotation $+$ rounding'' motif as our
Lemma~\ref{lem:compress} and the rotated scheme of Appendix~\ref{app:exp:envelope}). Concurrently,
\citet{menartnikolov2025gradient} prove, in a differentially-private setting, a bit-budget oracle
lower bound $\Omega(\min\{d/(\alpha^2\Gamma),\,d/\log(1/\alpha)\})$ with the same two-regime $\min\{\cdot\}$
structure as our $\min\{2\ln2\,B,d\}$. Three differences position the present paper. (i) Their lower
bounds assume an \emph{almost-surely bounded} oracle ($\|\hat g\|_q\le B$ a.s.); ours
(Theorem~\ref{thm:product}) is unconditional on the \emph{true Gaussian} oracle, with no a.s.\ norm
bound---precisely the regime in which the bounded-range assumption of our own achievability
(Section~\ref{sec:upper}) is a genuine restriction (limitation L3$'$). (ii) We make the
optimization-to-estimation equivalence \emph{explicit} (Lemma~\ref{lem:reduction}), which is what lets
the communication-constrained estimation machinery be imported verbatim. (iii) The \emph{dynamic}
extension---temporally correlated or drifting oracles via sequential rate--distortion
(Appendix~\ref{app:seqrd})---is, to our knowledge, not treated by these works; there, quantizing the
innovation rather than the raw gradient suggests an \emph{expected-rate} route that needs only finite
differential entropy (not an a.s.\ bound), though the exact fixed-length per-round oracle gap remains
open.

\paragraph{Communication-constrained estimation/optimization.} The quantitative bit penalty comes from
distributed statistics: \citet{zhang2013distributed} (lower bounds under communication constraints),
\citet{braverman2016ddpi} (a distributed data-processing inequality giving tight error--communication
trade-offs for high-dimensional Gaussian mean estimation), and \citet{han2018geometric} (the same
effective-sample-size reduction by a factor $d$, via a geometric argument, and explicitly covering
\emph{sequential / blackboard} interactive protocols). For optimization,
\citet{arjevani2015communication} characterize distributed communication complexity. Our
Lemma~\ref{lem:reduction} is the bridge that lets these estimation-side results be read as statements
about single-device, sequential, bit-constrained optimization.

\paragraph{Information-theoretic tools.} We use Fano/Assouad/Le~Cam arguments as in
\citep{cover2006elements,wainwright2019high,tsybakov2009introduction,yu1997assouad,polyanskiy2025information};
for the conjectural correlated extension, decoder-side-information rate-distortion
\citep{wyner1976sideinfo}, directed information \citep{massey1990directed,tatikonda2009capacity}, and
the heavy-tailed gradient-noise model \citep{simsekli2019tail}.

\section{Problem Setup}
\label{sec:setup}

\paragraph{Conventions (fixed once, used throughout).} $\norm{\cdot}$ is Euclidean. Logarithms and
information quantities are in bits; $\hb(p)=-p\log p-(1-p)\log(1-p)$. We adopt the \textbf{per-coordinate
variance convention}: the oracle noise is $\Ncal(0,\sigma^2 I_d)$, so each coordinate has variance
$\sigma^2$ and the total variance is $\sigma^2 d$. (Under a total-variance-$\sigma^2$ convention every
rate below relocates a factor of $d$.) For $v\in\{\pm1\}^d$ we write $\theta_v=\delta v$ for a scale
$\delta>0$ chosen per theorem. ``Accuracy $\eps$'' always means the optimization gap
$f(\hat x)-f^\star\le\eps^2$. We fix the tie-break $\sign(0):=+1$; no bound below depends on this choice.

\begin{definition}[$B$-bit quantized stochastic first-order oracle]
\label{def:oracle}
Fix $f:\Xcal\to\R$, $\Xcal\subseteq\R^d$, noise level $\sigma>0$, bit budget $B\in\mathbb{N}$, horizon
$T$, and shared randomness $U$. For $t=1,\dots,T$: (i) the optimizer chooses $x_t\in\Xcal$ as a
function of $U,M_{1:t-1}$; (ii) the oracle returns $g_t$ with $\E[g_t\mid x_t]=\nabla f(x_t)$, noise
independent across $t$; (iii) an encoder emits $M_t=Q_t(g_t,x_t,M_{1:t-1},U)\in\{0,1\}^{B}$. The output
$\hat x=\hat x(M_{1:T},U)$ depends on the transcript only.
\end{definition}

$B=\infty$ recovers the oracle of \citet{agarwal2012lower}. The dependence of $Q_t,x_t$ on $M_{1:t-1}$
makes the protocol \emph{interactive}; with shared randomness $U$ this is the public-coin blackboard
model of communication complexity.

\begin{remark}[The encoder is deliberately strong]
\label{rem:strong}
Because $Q_t$ sees $x_t$, in the quadratic reduction below it can subtract the query and access
$\theta_v-\xi_t$ directly. This is \emph{stronger} than a practical low-precision quantizer, which does
not center the gradient using the true query. Granting a stronger encoder only strengthens a lower
bound, so this is safe for Sections~\ref{sec:lower}--\ref{sec:product}; but it is one reason the
upper bound and any practical reading (Section~\ref{sec:discussion}) must be stated with care.
\end{remark}

\section{The Reduction}
\label{sec:reduction}

\begin{lemma}[Optimization $\Longleftrightarrow$ compressed Gaussian mean estimation]
\label{lem:reduction}
For $\theta\in\R^d$ set $f_\theta(x)=\tfrac12\norm{x-\theta}^2$; each $f_\theta$ is $1$-strongly convex,
$1$-smooth, with minimizer $\theta$ and $f_\theta(\theta)=0$. If the oracle noise is
$\xi_t\sim\Ncal(0,\sigma^2 I_d)$ i.i.d., then the $B$-bit oracle for $f_\theta$ is
information-theoretically equivalent to one where the encoder observes $Y_t=\theta+\xi_t'$,
$\xi_t'\sim\Ncal(0,\sigma^2 I_d)$ i.i.d., and emits $M_t=\tilde Q_t(Y_t,M_{1:t-1},U)$. The query $x_t$
carries no information about $\theta$ beyond $(U,M_{1:t-1})$, and for any estimator $\hat x$,
$f_\theta(\hat x)-f_\theta(\theta)=\tfrac12\norm{\hat x-\theta}^2$.
\end{lemma}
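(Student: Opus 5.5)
The plan is to build explicit simulations in both directions, showing that any protocol in one model is matched, with identical joint law of $(\theta,U,M_{1:T},\hat x)$, by a protocol in the other. The elementary facts go first. For $f_\theta(x)=\tfrac12\norm{x-\theta}^2$ we have $\nabla f_\theta(x)=x-\theta$ and $\nabla^2 f_\theta=I_d$. This gives $1$-strong convexity, $1$-smoothness, the unique minimizer $\theta$, and $f_\theta(\theta)=0$, so $f_\theta(\hat x)-f_\theta(\theta)=\tfrac12\norm{\hat x-\theta}^2$ for every $\hat x$. With Gaussian noise the oracle returns $g_t=x_t-\theta+\xi_t$, where $\xi_t\sim\Ncal(0,\sigma^2I_d)$ is i.i.d.\ and independent of $(U,M_{1:t-1})$.

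\emph{Optimization $\Rightarrow$ estimation.} Given an optimization protocol $(x_t,Q_t,\hat x)$, define the estimation encoder as follows. It recomputes $x_t=x_t(U,M_{1:t-1})$, which it can do because it sees $U$ and the past transcript. It then forms $\tilde g_t:=x_t-Y_t$ and emits $\tilde Q_t(Y_t,M_{1:t-1},U):=Q_t(\tilde g_t,x_t,M_{1:t-1},U)$. Since $Y_t=\theta+\xi_t'$, we get $\tilde g_t=x_t-\theta-\xi_t'$. The noise $-\xi_t'$ is again $\Ncal(0,\sigma^2I_d)$ by symmetry, and it is i.i.d.\ and independent of the past. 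So conditionally on $(\theta,U,M_{1:t-1})$, the pair $(\tilde g_t,x_t)$ has exactly the law of $(g_t,x_t)$ in the original oracle.

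\emph{Estimation $\Rightarrow$ optimization.} Given $\tilde Q_t$, let the optimization encoder compute $Y_t:=x_t-g_t=\theta-\xi_t$, which it can do because it sees $x_t$ (Remark~\ref{rem:strong}). It then emits $\tilde Q_t(Y_t,M_{1:t-1},U)$. The query can be chosen arbitrarily, for example $x_t\equiv 0$.

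The core step is an induction on $t$ showing that the joint law of $(\theta,U,M_{1:t})$ agrees between the two models. At each step the conditional law of $M_t$ given $(\theta,U,M_{1:t-1})$ is the law of a fixed measurable function applied to $(\theta+\text{fresh }\Ncal(0,\sigma^2 I_d),M_{1:t-1},U)$. This holds in both constructions because the fresh noise is independent of everything before it. Since $\hat x$ is a function of $(M_{1:T},U)$ only, the losses also agree in law, and so minimax and Bayes risks coincide.

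The claim that the query carries no information is then immediate: $x_t$ is a deterministic function of $(U,M_{1:t-1})$. Hence $I(\theta;x_t\mid U,M_{1:t-1})=0$, and more strongly $\theta\to(U,M_{1:t-1})\to x_t$ is a Markov chain. If randomized queries are allowed, that randomness can be absorbed into $U$.

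The main obstacle is measure-theoretic bookkeeping rather than any inequality. First, "information-theoretically equivalent" must be stated precisely as equality of the joint laws of $(\theta,U,M_{1:T})$, and these laws must be shown to match for every $\theta$ and not merely on average. Second, the encoders are possibly randomized, so I would make any private encoder randomness part of $U$ or handle it as independent auxiliary input; this keeps the coupling valid without enlarging the information available.
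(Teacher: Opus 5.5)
Your proposal is correct and follows the paper's argument. The encoder uses its knowledge of $x_t$ to pass to $Y_t=x_t-g_t=\theta-\xi_t$ with $\xi_t'=-\xi_t$, so that $Q_t(g_t,x_t,\cdot)$ and $\tilde Q_t(Y_t,\cdot)$ determine each other, and the query is inert because $x_t$ is a deterministic function of $(U,M_{1:t-1})$. The paper writes this as a single pathwise identity: coupling $\xi_t'=-\xi_t$ makes the two transcripts coincide almost surely. Your two-way simulation with an induction on joint laws is a more explicit version of the same reduction.
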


\noindent The reduction holds verbatim for any parameter set $\Theta\subseteq\R^d$. We instantiate it
over two classes: the \emph{hypercube} $\Theta=\{\delta v:v\in\{\pm1\}^d\}$ (with $\theta_v=\delta v$ and
$f_v:=f_{\theta_v}$),
used for the Fano and Assouad bounds of \S\ref{sec:lower}, and the \emph{continuous cube}
$\Theta=[-\delta,\delta]^d$, used for the van Trees bound of \S\ref{sec:product}. Nothing in the proof
below uses the discreteness of $\theta$.

\begin{proof}
$\nabla f_\theta(x_t)=x_t-\theta$, so realize $g_t=(x_t-\theta)+\xi_t$; then $\E[g_t\mid x_t]=\nabla f_\theta(x_t)$.
The encoder knows $x_t$ and forms $Y_t:=x_t-g_t=\theta-\xi_t$; writing $\xi_t':=-\xi_t\sim\Ncal(0,\sigma^2 I_d)$
gives $Y_t=\theta+\xi_t'$, and $Q_t(g_t,x_t,\cdot)=\tilde Q_t(Y_t,\cdot)$ since $g_t=x_t-Y_t$. The law
of $Y_t$ is $\Ncal(\theta,\sigma^2 I_d)$, independent of $x_t$. Since $x_t=x_t(U,M_{1:t-1})$ is
deterministic given $(U,M_{1:t-1})$, the conditional law of $Y_t$ does not depend on $x_t$, so $x_t$ adds
no information about $\theta$ beyond $(U,M_{1:t-1})$. The identity is immediate from $f_\theta(\theta)=0$.
\end{proof}

\noindent For the hypercube instantiation $\theta_v=\delta v$, since each $\theta_{v,j}=\pm\delta$,
projecting $\hat x$ onto $[-\delta,\delta]^d$ only decreases $\norm{\hat x-\theta_v}$; assume this WLOG.

\section{Base Lower Bounds}
\label{sec:lower}

\begin{theorem}[Communication bound]
\label{thm:comm}
In the setting of Lemma~\ref{lem:reduction}, let $V\sim\Unif(\{\pm1\}^d)$, $\hat V_j:=\sign(\hat x_j)$,
$p_j:=\Prob[\hat V_j\neq V_j]$, $\bar p:=\tfrac1d\sum_j p_j$. Then $TB\ge d(1-\hb(\bar p))$.
Consequently, if $\E_V\E\norm{\hat x-\theta_V}^2\le\alpha\,d\,\delta^2$ with $\alpha<\tfrac12$, then
$TB\ge d(1-\hb(\alpha))=\Omega(d)$, i.e.\ $T=\Omega(d/B)$.
\end{theorem}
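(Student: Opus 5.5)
The plan is to run the classical chain-rule-plus-Fano argument on the transcript, using Lemma~\ref{lem:reduction} to reduce to the estimation problem in which the encoder sees $Y_t=\theta_V+\xi_t'$. First, I would upper bound the information the transcript carries about $V$. Since $U$ is independent of $V$,
\[
I(V;M_{1:T},U)=I(V;M_{1:T}\mid U)=\sum_{t=1}^T I(V;M_t\mid M_{1:t-1},U)\le\sum_{t=1}^T H(M_t)\le TB,
\]
because each $M_t\in\{0,1\}^B$. Interactivity causes no difficulty here: the bound $H(M_t\mid\cdot)\le B$ holds whatever the adaptive encoder does.

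Second, I would lower bound the same quantity coordinatewise. Because $\hat x$, and hence $\hat V=(\hat V_j)_j$, is a function of $(M_{1:T},U)$, data processing gives $I(V;M_{1:T},U)\ge I(V;\hat V)$. The $V_j$ are independent uniform bits, so $H(V)=d$ and
\[
I(V;\hat V)=d-H(V\mid\hat V)\ge d-\sum_j H(V_j\mid\hat V_j).
\]
For each $j$, binary Fano gives $H(V_j\mid\hat V_j)\le\hb(p_j)$. Concavity of $\hb$ (Jensen) then yields $\sum_j\hb(p_j)\le d\,\hb(\bar p)$. Combining the two steps gives $TB\ge d(1-\hb(\bar p))$.

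Third, I would convert the mean-squared-error hypothesis into a bound on $\bar p$. By the WLOG projection onto $[-\delta,\delta]^d$ recorded after Lemma~\ref{lem:reduction}, if $\hat V_j\neq V_j$ then $\hat x_j$ lies on the opposite side of $0$ from $\theta_{V,j}=\pm\delta$. The tie-break $\sign(0)=+1$ matters only at $\hat x_j=0$, where the error is still $\delta^2$. Hence $(\hat x_j-\theta_{V,j})^2\ge\delta^2\Ind[\hat V_j\neq V_j]$, and summing and taking expectations gives
\[
d\,\delta^2\,\bar p\le\E\norm{\hat x-\theta_V}^2\le\alpha d\delta^2,
\]
so $\bar p\le\alpha<\tfrac12$. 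Since $\hb$ is increasing on $[0,\tfrac12]$, $1-\hb(\bar p)\ge1-\hb(\alpha)>0$, which gives $TB\ge d(1-\hb(\alpha))=\Omega(d)$.

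None of these steps is deep. The point needing the most care is the first one: conditioning on $U$ correctly and checking that the queries $x_t$ add nothing, which is exactly what Lemma~\ref{lem:reduction} supplies. The small sign/projection argument in the third step also needs the stated convention to be handled cleanly.
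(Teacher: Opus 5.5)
Your proposal is correct and follows the paper's proof essentially step for step. Both run the chain rule conditioned on $U$ with the per-round $B$-bit entropy ceiling, then apply data processing to $\hat V$, then subadditivity with binary Fano and concavity of $\hb$, and finally the sign-error inequality $(\hat x_j-\theta_{V,j})^2\ge\delta^2\Ind[\hat V_j\neq V_j]$ to get $\bar p\le\alpha$. One small point: the projection onto $[-\delta,\delta]^d$ is not actually needed in that last step, since a sign disagreement alone already forces $|\hat x_j-\theta_{V,j}|\ge\delta$.
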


\begin{proof}
Since $\hat V$ is a function of the public-coin transcript $(M_{1:T},U)$, the chain
$V\to(M_{1:T},U)\to\hat V$ is Markov. As
$U\perp V$, $I(V;M_{1:T},U)=I(V;M_{1:T}\mid U)$. By DPI, the chain rule (now conditioned on $U$), and
$M_t\in\{0,1\}^B$,
\begin{equation}
I(V;\hat V)\le I(V;M_{1:T}\mid U)=\sum_{t=1}^T I(V;M_t\mid M_{1:t-1},U)\le\sum_{t=1}^T H(M_t\mid M_{1:t-1},U)\le TB.
\label{eq:upperMI}
\end{equation}
Since the $V_j$ are i.i.d.\ uniform, $H(V)=d$, and by subadditivity, monotonicity of conditioning, and
binary Fano,
\begin{equation}
I(V;\hat V)=d-H(V\mid\hat V)\ge d-\sum_j H(V_j\mid\hat V_j)\ge\sum_j(1-\hb(p_j))\ge d(1-\hb(\bar p)),
\label{eq:lowerMI}
\end{equation}
the last step by concavity of $\hb$. Combining gives $TB\ge d(1-\hb(\bar p))$. If $\hat V_j\neq V_j$ then
$\hat x_j,\theta_{v,j}$ have opposite signs (or $\hat x_j=0$), so $(\hat x_j-\theta_{v,j})^2\ge\delta^2$;
hence $\E_V\E\norm{\hat x-\theta_V}^2\ge\delta^2 d\bar p$, forcing $\bar p\le\alpha$ and the claim.
\end{proof}

\begin{theorem}[Statistical bound]
\label{thm:stat}
For every horizon $T$ there is a scale $\delta^2=\sigma^2/(4T)$ such that, over the family
$\{f_v:v\in\{\pm1\}^d\}$ of Lemma~\ref{lem:reduction} at that scale and for any $B$ (including $\infty$),
$\inf_{\hat x}\max_v\E_v\norm{\hat x-\theta_v}^2\ge\sigma^2 d/(16T)$. Hence achieving
$\max_v\E_v[f_v(\hat x)-f_v^\star]\le\eps^2$ requires $T\ge\sigma^2 d/(32\eps^2)=\Omega(\sigma^2 d/\eps^2)$.
\end{theorem}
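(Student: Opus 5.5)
The plan is Assouad's lemma on the hypercube $\{\theta_v=\delta v\}$, with the $B$-bit constraint discarded. By Lemma~\ref{lem:reduction}, any $B$-bit protocol yields a transcript $(M_{1:T},U)$ that is a (randomized) function of $Y_{1:T}$, where $Y_t=\theta_v+\xi_t'$ i.i.d. By data processing, every divergence between transcript laws under $v$ and $v'$ is bounded by the corresponding divergence between the laws of $Y_{1:T}$. So it suffices to prove the bound for $B=\infty$, where the estimator sees $Y_{1:T}$ directly; this also covers the ``including $\infty$'' clause.

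First, lower bound the loss coordinatewise. Project $\hat x$ onto $[-\delta,\delta]^d$ as noted after Lemma~\ref{lem:reduction}, and set $\hat V_j=\sign(\hat x_j)$. A sign error in coordinate $j$ forces $(\hat x_j-\theta_{v,j})^2\ge\delta^2$, so
\[
\|\hat x-\theta_v\|^2\ge\delta^2\sum_j\Ind[\hat V_j\neq v_j].
\]
Assouad's lemma then gives
\[
\max_v\E_v\|\hat x-\theta_v\|^2\ge\frac{\delta^2 d}{2}\Bigl(1-\max_{v\sim v'}\TV(P_v,P_{v'})\Bigr),
\]
where $v\sim v'$ means $v,v'$ differ in exactly one coordinate and $P_v$ is the transcript law. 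Bounding by the average over $v$ before taking the max is the standard route.

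Second, bound the total variation with Pinsker. For neighbors, $\KL(P_v\|P_{v'})\le\KL(\Ncal(\theta_v,\sigma^2I)^{\otimes T}\|\Ncal(\theta_{v'},\sigma^2I)^{\otimes T})=T(2\delta)^2/(2\sigma^2)=2T\delta^2/\sigma^2$, in nats. Then
\[
\TV\le\sqrt{\KL/2}=\sqrt{T\delta^2/\sigma^2}.
\]
With $\delta^2=\sigma^2/(4T)$ this gives $\TV\le 1/2$. Hence the minimax risk is at least $\delta^2 d/4=\sigma^2 d/(16T)$, as stated. Note that Pinsker is in natural units, whereas the paper's convention is bits; I will state the KL in nats at this step.

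Third, convert to optimization. By Lemma~\ref{lem:reduction}, $f_v(\hat x)-f_v^\star=\tfrac12\|\hat x-\theta_v\|^2$, so the worst-case expected gap is at least $\sigma^2 d/(32T)$. Demanding that this be $\le\eps^2$ gives $T\ge\sigma^2 d/(32\eps^2)$.

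The main obstacle is purely bookkeeping: the constants in Assouad (the factor $\tfrac12$ and whether the sum runs over $d$ coordinates with per-coordinate separation $\delta^2$) must combine with Pinsker to give exactly $1/16$. The Assouad form I use has risk at least $\tfrac{d}{2}\cdot\delta^2\cdot(1-\TV)$, with per-coordinate loss $\delta^2$ on sign error. I will also justify that the projection and the sign estimator do not increase the risk, so that Assouad can be applied to the induced sign test.
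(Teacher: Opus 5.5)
Your proposal is correct and follows the paper's proof essentially step for step: Assouad on the sign test with per-coordinate loss $\delta^2$, the Gaussian $\KL=2T\delta^2/\sigma^2$ (nats) plus Pinsker at $\delta^2=\sigma^2/(4T)$ to get $\TV\le\tfrac12$, data processing to pass from $(U,Y_{1:T})$ to the transcript, and the identity of Lemma~\ref{lem:reduction} to convert to the optimization gap. The only cosmetic difference is the order of the last two tools: you apply data processing to $\KL$ and then Pinsker, whereas the paper applies Pinsker to the uncompressed laws and then data processing to $\TV$; the constants come out the same either way.
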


\begin{proof}
Assouad: since $(\hat x_j-\theta_{v,j})^2\ge\delta^2\Ind[\sign(\hat x_j)\neq v_j]$,
$\max_v\E_v\norm{\hat x-\theta_v}^2\ge\tfrac{d\delta^2}{2}\min_{\mathrm{Ham}(v,v')=1}(1-\TV(P_v,P_{v'}))$,
where $P_v$ is the joint law of the full transcript $(U,M_{1:T})$ under $\theta_v$. For neighbors differing
in coordinate $j$, only the $T$ observations in coordinate $j$ differ, so for the uncompressed laws
$(U,Y_{1:T})$, using $U\perp(V,Y)$,
$\KL(P^Y_v\Vert P^Y_{v'})=T\cdot\frac{(2\delta)^2}{2\sigma^2}=\frac{2T\delta^2}{\sigma^2}$. Pinsker gives
$\TV(P^Y_v,P^Y_{v'})\le\delta\sqrt{T}/\sigma$; choosing $\delta^2=\sigma^2/(4T)$ yields $\TV\le\tfrac12$,
so $\max_v\E_v\norm{\hat x-\theta_v}^2\ge\tfrac{d\delta^2}{4}=\sigma^2 d/(16T)$. For any compression the
transcript $(U,M_{1:T})$ is a function of $(U,Y_{1:T})$, so $\TV(P_v,P_{v'})\le\TV(P^Y_v,P^Y_{v'})$ by DPI
and the bound persists. The gap conversion is the identity of Lemma~\ref{lem:reduction}.
\end{proof}

\begin{corollary}[Combined $\max$-form bound]
\label{cor:combined}
Fix $\alpha\in(0,\tfrac12)$ with $1-\hb(\alpha)\ge\tfrac12$ (e.g.\ $\alpha=0.11$). Every algorithm
achieving $\max_v\E_v[f_v(\hat x)-f_v^\star]\le\eps^2$ obeys $T\ge c'\max\{\sigma^2 d/\eps^2,\ d/B\}$ for
an absolute $c'>0$.
\end{corollary}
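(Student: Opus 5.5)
The plan is to prove the two halves of the maximum separately and then combine them via $\max\{a,b\}\le a+b$, or simply by taking the minimum of the two constants. The statistical half is immediate from Theorem~\ref{thm:stat}. That theorem shows that for every $T$, at scale $\delta^2=\sigma^2/(4T)$, any algorithm with $\max_v\E_v[f_v(\hat x)-f_v^\star]\le\eps^2$ must satisfy $T\ge\sigma^2 d/(32\eps^2)$. Since the guarantee in the corollary is assumed to hold over the hard family at this scale (the family is part of the class of $1$-strongly convex, $1$-smooth quadratics), we get $T\ge\sigma^2 d/(32\eps^2)$ directly.

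For the communication half I will invoke Theorem~\ref{thm:comm} at a second, explicit scale. Choose $\delta$ so that the accuracy requirement forces a small per-coordinate sign-error rate. By Lemma~\ref{lem:reduction}, $\max_v\E_v[f_v(\hat x)-f_v^\star]\le\eps^2$ gives $\E_V\E\norm{\hat x-\theta_V}^2\le 2\eps^2$, since the average over a uniform $V$ is at most the maximum. Set $\delta^2:=2\eps^2/(\alpha d)$. Then $\E_V\E\norm{\hat x-\theta_V}^2\le\alpha d\delta^2$ with $\alpha=0.11<\tfrac12$, and Theorem~\ref{thm:comm} yields $TB\ge d(1-\hb(\alpha))\ge d/2$, i.e.\ $T\ge d/(2B)$. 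Here $\hb$ is in bits, so $\hb(0.11)\approx0.4999$ and the stated condition $1-\hb(\alpha)\ge\tfrac12$ holds. Theorem~\ref{thm:comm} places no restriction on $\delta$ relative to $\sigma$ or $T$, so this scale is admissible for every horizon.

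Combining, $T\ge\max\{\sigma^2 d/(32\eps^2),\,d/(2B)\}$, so $c'=1/32$ works. The one point needing care is that the two bounds use \emph{different} hard-instance scales: $\delta^2=\sigma^2/(4T)$ for the statistical part and $\delta^2=2\eps^2/(\alpha d)$ for the communication part. The corollary therefore has to be read as a statement about algorithms that succeed uniformly over the quadratic class $\{f_\theta\}$ (or at least over both hypercube families). I would make this explicit, in line with the phrase ``hard-instance scale made explicit,'' by stating the guarantee as uniform over $\theta\in\R^d$. This is the only real obstacle: there is no single-instance calculation to do, just the bookkeeping that each lower bound applies to its own subfamily and that an algorithm meeting the uniform guarantee must meet both. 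If a single scale is required, note that the communication argument works for any $\delta$ with $d\delta^2\ge 2\eps^2/\alpha$. One can therefore also take $\delta^2=\max\{\sigma^2/(4T),\,2\eps^2/(\alpha d)\}$. The Assouad bound only needs $\TV\le\tfrac12$, which still holds when $\delta$ is decreased but not when it is increased. So I would keep the two scales separate and not try to force a common one.
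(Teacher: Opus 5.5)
Your proposal is correct and follows the paper's proof essentially step for step. You take the statistical branch from Theorem~\ref{thm:stat} at $\delta^2=\sigma^2/(4T)$, and the communication branch from a separate hypercube instance at $\delta_0^2=2\eps^2/(\alpha d)$ fed into Theorem~\ref{thm:comm}; you also read the guarantee, as the paper does, as uniform over the union of quadratic instances across scales. Your closing aside about a common scale can simply be dropped: as you note yourself, the larger scale breaks the Assouad branch, and the paper likewise keeps the two scales separate.
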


\begin{proof}
The statistical term is Theorem~\ref{thm:stat} (its instance uses $\delta^2=\sigma^2/(4T)$). For the
communication term, use a \emph{separate} instance of the same quadratic class with scale
$\delta_0^2:=2\eps^2/(\alpha d)$: the gap bound $\tfrac12\delta_0^2 d\bar p\le\E_V\E[f_V(\hat x)-f_V^\star]\le\eps^2$
gives $\bar p\le\alpha$ (valid for any scale $\delta_0$; no SNR restriction is needed for the
communication branch), and Theorem~\ref{thm:comm} yields $TB\ge d(1-\hb(\alpha))\ge d/2$, i.e.\
$T\ge d/(2B)$. We regard the lower bound as holding over the union of the quadratic instances
$\{f_v:v\in\{\pm1\}^d\}$ across all scales $\delta>0$; the two branches simply select different scales
from this class, so their maximum is valid. Tightness fails because the two bottlenecks are treated
separately; Section~\ref{sec:product} removes this. Appendix~\ref{app:exp:product} illustrates the gap
between the max-form and product-form scalings numerically.
\end{proof}

\section{The Product-Form Bound}
\label{sec:product}

The $\max$-form does not capture that a coarse message inflates the \emph{effective variance}, giving a
\emph{product} $\max\{1,d/B\}$. We first show why elementary arguments cannot reach the product, then
isolate the exact missing ingredient, then prove the product form modulo that single ingredient.

\begin{proposition}[Elementary per-coordinate information bound]
\label{prop:elem}
For $V_j\sim\Unif\{\pm1\}$ and $Y_j=\delta V_j+\Ncal(0,\sigma^2)$, $I(V_j;Y_j)\le\frac{1}{2\ln2}\cdot\frac{\delta^2}{\sigma^2}$
bits. Consequently $I(V;M_{1:T}\mid U)\le I(V;Y_{1:T})\le T\,I(V;Y_1)\le\frac{T d\delta^2}{2\sigma^2\ln2}$.
\end{proposition}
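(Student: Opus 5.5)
The plan has two parts: a scalar bound on $I(V_j;Y_j)$, then a chain of data-processing and independence steps that turns it into the bound on $I(V;M_{1:T}\mid U)$.

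\textbf{Scalar bound.} Write $I(V_j;Y_j)$ in nats as an average KL divergence to the mixture:
$$I(V_j;Y_j)=\tfrac12\sum_{s=\pm1}\KL\bigl(\Ncal(s\delta,\sigma^2)\,\Vert\,\tfrac12\Ncal(\delta,\sigma^2)+\tfrac12\Ncal(-\delta,\sigma^2)\bigr).$$
Mutual information is the minimum over reference laws $Q$ of the average $\KL(P_{Y\mid V}\Vert Q)$. Choosing the suboptimal reference $Q=\Ncal(0,\sigma^2)$ gives, in nats,
$$I(V_j;Y_j)\le \tfrac12\sum_s \KL\bigl(\Ncal(s\delta,\sigma^2)\Vert\Ncal(0,\sigma^2)\bigr)=\frac{\delta^2}{2\sigma^2}.$$
Dividing by $\ln 2$ converts this to bits and gives the stated constant $\frac{1}{2\ln2}\cdot\frac{\delta^2}{\sigma^2}$. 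An equivalent route is the Gaussian-channel capacity bound $\tfrac12\log(1+\delta^2/\sigma^2)\le\frac{\delta^2}{2\sigma^2\ln2}$, since $\E[(\delta V_j)^2]=\delta^2$; I would mention it as a cross-check.

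\textbf{Transcript bound.} The chain has four steps.
\begin{enumerate}
\item Since $U\perp(V,Y_{1:T})$ and the transcript $M_{1:T}$ is a function of $(Y_{1:T},U)$ by Lemma~\ref{lem:reduction}, data processing gives
$$I(V;M_{1:T}\mid U)\le I(V;Y_{1:T}\mid U)=I(V;Y_{1:T}).$$
Here the encoder's adaptivity is irrelevant: each $M_t$ is still a deterministic function of $(Y_{1:t},U)$.
\item Given $V$, the $Y_t$ are conditionally i.i.d. Hence $I(V;Y_{1:T})\le\sum_t I(V;Y_t)=T\,I(V;Y_1)$. This uses the standard fact that for conditionally independent outputs, $H(Y_{1:T}\mid V)=\sum_t H(Y_t\mid V)$ while $H(Y_{1:T})\le\sum_t H(Y_t)$. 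These are differential entropies, which is fine because the Gaussians have finite entropy.
\item Given $V$, the coordinates of $Y_1$ are independent, and the $V_j$ are independent. So $Y_1$ splits into $d$ independent pairs $(V_j,Y_{1,j})$, and $I(V;Y_1)=\sum_j I(V_j;Y_{1,j})$ exactly.
\item Applying the scalar bound to each coordinate and combining the previous steps gives $\frac{Td\delta^2}{2\sigma^2\ln2}$.
\end{enumerate}

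\textbf{Main obstacle.} Nothing here is deep; the only place needing care is step 1. I would state explicitly that conditioning on the independent public coin does not increase information about $V$ carried by $Y$. Concretely, $I(V;Y,U)=I(V;Y)+I(V;U\mid Y)=I(V;Y)$, because $U$ is independent of $(V,Y)$. I would also note that the resulting bound is dimension-proportional and has no $B$-dependence, which is exactly the point the proposition is meant to make.
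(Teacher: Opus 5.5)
Your proof is correct. The transcript half coincides with the paper's: conditional data processing using $U\perp(V,Y_{1:T})$, subadditivity of $h(Y_{1:T})$ against the exact split of $h(Y_{1:T}\mid V)$, and the exact coordinate decomposition of $I(V;Y_1)$.

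The difference is in the scalar bound. The paper writes $I(V_j;Y_j)=h(Y_j)-h(Y_j\mid V_j)$ and applies the maximum-entropy inequality with $\mathrm{Var}(Y_j)=\sigma^2+\delta^2$. That gives $\tfrac12\log_2(1+\delta^2/\sigma^2)$, which is then linearized; this is exactly your ``cross-check''. Your main route instead uses the identity $\E_V\KL(P_{Y\mid V}\Vert Q)=I(V;Y)+\KL(P_Y\Vert Q)$ with reference $Q=\Ncal(0,\sigma^2)$. It lands on $\delta^2/(2\sigma^2)$ nats directly, with no linearization step.

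What each route buys:
\begin{itemize}
\item The paper's route gives the sharper capacity-form intermediate. The paper reuses it in Lemma~\ref{lem:single}, and it is noticeably better when $\delta^2/\sigma^2$ is not small.
\item Your route avoids entropy maximization and differential-entropy bookkeeping. With the product reference $Q=\Ncal(0,\sigma^2 I_d)^{\otimes T}$, it bounds $I(V;Y_{1:T})$ by $Td\delta^2/(2\sigma^2)$ nats in one line.
\end{itemize}
The statement also asserts the intermediate link $I(V;Y_{1:T})\le T\,I(V;Y_1)$, so your step 2 is still needed for that link, though not for the end-to-end bound.
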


\begin{proof}
$I(V_j;Y_j)=h(Y_j)-h(Y_j\mid V_j)$ with $h(Y_j\mid V_j)=\tfrac12\log(2\pi e\sigma^2)$. Since
$\mathrm{Var}(Y_j)=\sigma^2+\delta^2$, the maximum-entropy bound gives $h(Y_j)\le\tfrac12\log(2\pi e(\sigma^2+\delta^2))$,
so $I(V_j;Y_j)\le\tfrac12\log(1+\delta^2/\sigma^2)\le\frac{\delta^2}{2\sigma^2\ln2}$ (using $\log_2(1+x)\le x/\ln2$).
For the second claim, since $U\perp(V,Y_{1:T})$ and $M_{1:T}$ is a function of $(Y_{1:T},U)$, the conditional
DPI gives $I(V;M_{1:T}\mid U)\le I(V;Y_{1:T}\mid U)=I(V;Y_{1:T})$; with $Y_t$ conditionally i.i.d.\ given $V$,
$I(V;Y_{1:T})=H(Y_{1:T})-\sum_t H(Y_t\mid V)\le\sum_t[H(Y_t)-H(Y_t\mid V)]=T\,I(V;Y_1)$ by subadditivity,
and $I(V;Y_1)=\sum_j I(V_j;Y_{1,j})\le d\delta^2/(2\sigma^2\ln2)$ by coordinate independence.
\end{proof}

\begin{remark}[Why elementary bounds give only the $\max$-form]
Proposition~\ref{prop:elem} and \eqref{eq:upperMI} give $I(V;M_{1:T}\mid U)\le\min\{TB,\,Td\delta^2/(2\sigma^2\ln2)\}$,
a \emph{minimum} of two ceilings; with Fano this reproduces only Corollary~\ref{cor:combined}. The
product form needs the strictly stronger statement that a single $B$-bit message extracts at most
$\sim(B/d)$ of the per-round information---i.e.\ $\sim B\delta^2/\sigma^2$ bits, the \emph{product} of
bits and per-coordinate SNR. That is a strong data-processing phenomenon, not an elementary one.
\end{remark}

The product-form bound is obtained \emph{unconditionally} by combining the reduction
(Lemma~\ref{lem:reduction}) with two facts about the Gaussian-location model that are due to
\citet{barnes2021fisher}: a strong data-processing bound limiting how much Fisher information a
bit-constrained transcript carries, and the multivariate van Trees inequality
\citep{gilllevit1995}. We state the Fisher-trace bound in the exact form we use; we give a short,
self-contained proof of the Gaussian case in Appendix~\ref{app:partial}
(Lemma~\ref{lem:fishertrace}), and note that the general (sub-Gaussian-score) statement, the optimal
preconstant $2/\sigma^2$, and the interactive-blackboard version are due to \citet{barnes2021fisher}.

\begin{lemma}[Bits limit transcript Fisher trace; {\citealp[Thm.~1, Cor.~1--2]{barnes2021fisher}}]
\label{lem:fishertranscript}
Consider the reduced problem of Lemma~\ref{lem:reduction}: $P_\theta=\Ncal(\theta,\sigma^2 I_d)$, and at
round $t$ the protocol observes a fresh $Y_t\sim P_\theta$ and writes an adaptive, public-coin,
$B$-bit message $M_t=Q_t(Y_t,M_{1:t-1},U)$. Write $\Pi=M_{1:T}$ for the transcript and $I_\Pi(\theta\mid U)$
for its Fisher information about $\theta$. Then at every $\theta$ and for every public coin value,
\[
  \Tr I_\Pi(\theta\mid U)\;\le\;\frac{T}{\sigma^2}\,\min\{\,2\ln2\cdot B,\;d\,\},
\]
where the $B$-branch is the Fisher--mutual-information bound of \citet{barnes2021fisher} together with
$I_\theta(Y_{1:T};\Pi\mid U)\le H(\Pi\mid U)\le TB$ bits, and the $d$-branch is the Fisher
data-processing inequality.
\end{lemma}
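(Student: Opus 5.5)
The plan is to prove the two branches separately and take the minimum. Throughout I condition on a fixed public-coin value $U=u$. Given $u$, the transcript $\Pi=M_{1:T}$ is a (possibly randomized) function of $Y_{1:T}$, built sequentially. I will use the chain rule for Fisher information over the rounds:
\[
  I_\Pi(\theta\mid u)=\sum_{t=1}^T \E\big[I_{M_t\mid M_{1:t-1}}(\theta\mid u,M_{1:t-1})\big].
\]
This holds because the score of $\Pi$ is the sum of the conditional scores of the $M_t$, and these form a martingale-difference sequence. Conditioned on $(u,M_{1:t-1})$, the message $M_t$ is a $B$-bit quantization of the single fresh sample $Y_t\sim\Ncal(\theta,\sigma^2 I_d)$. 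Here I use that $Y_t$ is independent of the past given $\theta$, which Lemma~\ref{lem:reduction} guarantees. So it suffices to bound the Fisher trace of one round, i.e.\ a $T=1$ statement, uniformly over the conditioning, and then sum over $t$.

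\emph{The $d$-branch.} For a single round I apply the Fisher data-processing inequality, $I_{M_t\mid\cdot}(\theta)\preceq I_{Y_t}(\theta)=\sigma^{-2}I_d$. This holds because $M_t$ is a Markov kernel applied to $Y_t$ whose kernel does not depend on $\theta$. Taking traces gives $d/\sigma^2$ per round, and summing over $t$ gives $Td/\sigma^2$.

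\emph{The $B$-branch.} For a single round with quantizer $q$, the score of $M=q(Y)$ is
\[
  S_M(m)=\E[S_Y(Y)\mid M=m]=\sigma^{-2}\big(\E[Y\mid M=m]-\theta\big).
\]
So the Fisher trace equals $\sigma^{-2}\sum_m \Prob(m)\,\|\mu_m-\theta\|^2/\sigma^2$, where $\mu_m$ is the centroid of cell $m$. I then use the Donsker--Varadhan / Gaussian centroid inequality. For any event $A$ with probability $p$ under $\Ncal(\theta,\sigma^2 I)$, and any unit vector $w$, the variational formula with $\lambda\langle w,Y-\theta\rangle$ gives $\lambda\langle w,\mu_A-\theta\rangle\le \log(1/p)+\lambda^2\sigma^2/2$ (natural log). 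Optimizing over $\lambda$ yields $\|\mu_A-\theta\|^2\le 2\sigma^2\ln(1/p)$. For randomized $q$ the same bound holds with $\KL$ of the conditional law. Summing,
\[
  \Tr I_M\le \frac{2}{\sigma^2}\sum_m \Prob(m)\ln\frac{1}{\Prob(m)}=\frac{2\ln2}{\sigma^2}H(M)\le\frac{2\ln2\,B}{\sigma^2}.
\]
More sharply, the right-hand side should be $\tfrac{2\ln 2}{\sigma^2}I(Y;M)$, which is the form in \citet{barnes2021fisher}, Thm.~1. Applied per round conditionally and summed, this gives $\Tr I_\Pi(\theta\mid u)\le \tfrac{2\ln2}{\sigma^2}\sum_t H(M_t\mid M_{1:t-1},u)\le 2\ln2\,TB/\sigma^2$, matching the chain-rule bound $H(\Pi\mid U)\le TB$ used in \eqref{eq:upperMI}.

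The main obstacle is the centroid step for randomized or interactive quantizers. With randomization, the cell becomes a soft kernel $K(m\mid y)$, and I need $\|\E[Y\mid m]-\theta\|^2\le 2\sigma^2\,\KL(P_{Y\mid m}\Vert P_Y)$. After averaging over $m$, the right-hand side is $I(Y;M)\le H(M)$ in nats. The Donsker--Varadhan formula gives exactly this, since the Gaussian log-MGF of a linear functional is $\lambda^2\sigma^2/2$. I would also need to justify exchanging differentiation and expectation for the chain rule. I would handle this by citing the regularity conditions in \citet{barnes2021fisher}, because Gaussian location families with finite message alphabets satisfy them.
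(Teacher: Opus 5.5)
Your proposal is correct and follows essentially the paper's own proof. You decompose the transcript score into martingale per-round increments, bound each round by $\frac{2}{\sigma^2}I(Y_t;M_t\mid M_{1:t-1},u)\le\frac{2\ln2}{\sigma^2}B$, and use the Fisher data-processing inequality for the $d$-branch; the paper cites that per-round bound from \citet{barnes2021fisher} and reproves it via the same Donsker--Varadhan centroid argument in Lemmas~\ref{lem:centroid}--\ref{lem:fishertrace}. The only cosmetic difference is that you apply the data-processing inequality round by round rather than to the whole transcript against $Y_{1:T}$, which yields the same $Td/\sigma^2$.
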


\begin{proof}
The score of $P_\theta=\Ncal(\theta,\sigma^2I_d)$ is $S_\theta(Y)=(Y-\theta)/\sigma^2$, so
$\langle u,S_\theta(Y)\rangle\sim\Ncal(0,\sigma^{-2})$ is sub-Gaussian with parameter $N=1/\sigma$; the
regularity conditions (i)--(iii) of \citet{barnes2021fisher} hold for the Gaussian-location family.
Their Theorem~1 and the interactive-transcript Corollary~1 give, at fixed $U$,
$\Tr I_\Pi(\theta\mid U)\le 2N^2 I_\theta(Y_{1:T};\Pi\mid U)=\tfrac{2}{\sigma^2}I_\theta(Y_{1:T};\Pi\mid U)$.
For completeness we spell out the interactive additivity, since applying Theorem~1 to the whole vector
$Y_{1:T}$ at once would inflate the sub-Gaussian parameter to $\sqrt{T}/\sigma$ and give a vacuous
$T^2$-type bound. Fix $U=u$. The transcript log-likelihood decomposes by the chain rule into per-round
increments,
\[
  \nabla_\theta\log p_\theta(M_{1:T}\mid u)=\sum_{t=1}^T \nabla_\theta\log p_\theta(M_t\mid M_{1:t-1},u),
\]
and each increment $s_t:=\nabla_\theta\log p_\theta(M_t\mid M_{1:t-1},u)$ has conditional mean zero given
$(M_{1:t-1},u)$ (score identity); hence the increments are uncorrelated martingale differences and the
cross terms vanish, giving
\[
  \Tr I_\Pi(\theta\mid u)=\sum_{t=1}^T \E_\theta\big[\Tr I(M_t\mid M_{1:t-1},u;\theta)\big].
\]
Each conditional round is a $B$-bit message of the \emph{single} fresh observation $Y_t\sim\Ncal(\theta,\sigma^2 I_d)$
(given the prefix), so Theorem~1 applies per round with $N=1/\sigma$ and
$\Tr I(M_t\mid M_{1:t-1},u;\theta)\le\tfrac{2}{\sigma^2}I_\theta(Y_t;M_t\mid M_{1:t-1},u)\le\tfrac{2\ln2}{\sigma^2}B$.
Summing over $t$ and averaging over $u$ yields $\Tr I_\Pi(\theta\mid U)\le 2\ln2\,TB/\sigma^2$.
Appendix~\ref{app:partial} reproves the per-round Gaussian bound from the Donsker--Varadhan inequality
(Lemmas~\ref{lem:centroid}--\ref{lem:fishertrace}), making the $B$-branch self-contained.
Alternatively the Fisher data-processing inequality \citep{zamir1998fisher} gives
$\Tr I_\Pi(\theta\mid U)\le\Tr I_{Y_{1:T}}(\theta)=Td/\sigma^2$ since each fresh sample contributes
$d/\sigma^2$. Taking the smaller of the two yields the stated minimum.
\end{proof}

\begin{theorem}[Product-form lower bound, unconditional]
\label{thm:product}
Consider the family $\{f_\theta(x)=\tfrac12\|x-\theta\|^2:\theta\in[-\delta,\delta]^d\}$ under the $B$-bit
oracle, with scale $\delta^2=4\pi^2\eps^2/d$. Every algorithm achieving
$\sup_{\theta}\E_\theta[f_\theta(\hat x)-f_\theta^\star]\le\eps^2$, for $\eps^2\le\sigma^2 d/(16\pi^2)$,
must use
\[
  T\;\ge\;\frac{\sigma^{2} d^{2}}{4\,\eps^{2}\,\min\{2\ln2\cdot B,\;d\}}
  \;=\;\Omega\!\Big(\frac{\sigma^2 d}{\eps^2}\,\max\!\Big\{1,\frac{d}{B}\Big\}\Big).
\]
The bound is unconditional; combined with the achievability result of \S\ref{sec:upper} it is tight up
to logarithmic factors for a bounded-dynamic-range oracle. Appendix~\ref{app:exp:envelope} checks
numerically that several distinct unbiased $B$-bit schemes all respect this lower bound.
\end{theorem}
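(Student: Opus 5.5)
The plan is to lower-bound the Bayes risk under a smooth prior on the continuous cube $[-\delta,\delta]^d$ by the multivariate van Trees inequality \citep{gilllevit1995}, and to control the data term by Lemma~\ref{lem:fishertranscript}. By Lemma~\ref{lem:reduction}, the optimization gap equals $\tfrac12\norm{\hat x-\theta}^2$ and the problem is exactly the reduced Gaussian-location problem with transcript $\Pi=M_{1:T}$ and public coin $U$. So $\sup_\theta\E_\theta[f_\theta(\hat x)-f_\theta^\star]\le\eps^2$ implies $\E_{\theta\sim\pi}\E_\theta\norm{\hat x-\theta}^2\le 2\eps^2$ for any prior $\pi$ supported on the cube.

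\emph{Prior and van Trees.} I would take the standard product prior $\pi(\theta)=\prod_{j}\delta^{-1}\cos^2\!\big(\pi\theta_j/(2\delta)\big)$ on $[-\delta,\delta]^d$. It vanishes at the boundary and has per-coordinate Fisher information $\pi^2/\delta^2$, so $J(\pi)=\Tr J(\pi)=d\pi^2/\delta^2$. The trace form of the multivariate van Trees inequality (with $C=I_d$) gives, for every estimator,
\[
\E_\pi\E_\theta\norm{\hat x-\theta}^2\;\ge\;\frac{d^2}{\E_\pi\Tr I_\Pi(\theta\mid U)+d\pi^2/\delta^2}.
\]
Because $\hat x$ may depend on $U$, I would apply this conditionally on $U=u$, with the estimator $\hat x(\cdot,u)$ and Fisher information $I_\Pi(\theta\mid u)$. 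Then I would average over $u$ using Jensen (convexity of $x\mapsto d^2/(x+c)$), so that only $\E_U\Tr I_\Pi(\theta\mid U)$ appears.

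\emph{Plugging in the Fisher-trace bound.} Lemma~\ref{lem:fishertranscript} gives $\Tr I_\Pi(\theta\mid u)\le Tm/\sigma^2$ uniformly in $\theta$ and $u$, where $m:=\min\{2\ln2\cdot B,d\}$. With $\delta^2=4\pi^2\eps^2/d$ the prior term becomes $d\pi^2/\delta^2=d^2/(4\eps^2)$. Hence
\[
2\eps^2\ge\frac{d^2}{Tm/\sigma^2+d^2/(4\eps^2)}
\quad\Longrightarrow\quad
\frac{Tm}{\sigma^2}\ge\frac{d^2}{2\eps^2}-\frac{d^2}{4\eps^2}=\frac{d^2}{4\eps^2},
\]
which is exactly $T\ge\sigma^2d^2/(4\eps^2 m)$. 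The $\Omega$ form then follows from $d/m\asymp\max\{1,d/B\}$. The side condition $\eps^2\le\sigma^2 d/(16\pi^2)$ is equivalent to $\delta^2\le\sigma^2/4$, so the hard instances stay in the low-SNR regime. I would note this, but the computation above does not appear to need it beyond keeping the instance class consistent with the other theorems.

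\emph{Main obstacle.} The delicate step is verifying the regularity hypotheses of van Trees for the transcript law. Specifically, I need that $\theta\mapsto p_\theta(M_{1:T}\mid u)$ is absolutely continuous, with a square-integrable score and interchangeable differentiation and summation. The transcript is a finite-alphabet randomized function of Gaussian samples, so $p_\theta(m\mid u)=\int \Ind[\text{protocol outputs }m]\prod_t\varphi_\sigma(y_t-\theta)\,dy$. This is smooth in $\theta$ by dominated convergence with Gaussian tails, which is the same regularity Lemma~\ref{lem:fishertranscript} already relies on. I would handle this first, and also confirm that the prior's boundary vanishing is what permits the integration by parts in van Trees.
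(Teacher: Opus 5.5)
Your proposal is correct and is essentially the paper's own proof. Both use the cosine (Gill--Levit) product prior with prior Fisher information $\pi^2 d/\delta^2$, the trace form of the multivariate van Trees inequality, and Lemma~\ref{lem:fishertranscript} to bound the data term by $T\min\{2\ln2\cdot B,d\}/\sigma^2$. They share the same choice $\delta^2=4\pi^2\eps^2/d$ and the same algebra, and both note that the side condition on $\eps$ is not needed for the inequality itself. The only cosmetic difference is that you condition on $U=u$ and average with Jensen, whereas the paper simply averages the Fisher bound over the coin. Since Lemma~\ref{lem:fishertranscript} holds uniformly in $u$, the Jensen step is not even needed.
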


\begin{proof}
By the reduction (Lemma~\ref{lem:reduction}), an algorithm producing $\hat x$ after $T$ rounds is an
estimator $\hat\theta:=\hat x$ of $\theta$ from the transcript $\Pi=M_{1:T}$, and
$f_\theta(\hat x)-f_\theta^\star=\tfrac12\|\hat x-\theta\|^2$, so the accuracy hypothesis reads
$\sup_\theta\E_\theta\|\hat\theta-\theta\|^2\le 2\eps^2$. Place the cosine (Gill--Levit) prior
$\lambda$ on $[-\delta,\delta]^d$, a product of the densities $\lambda_j(t)=\tfrac1\delta\cos^2(\tfrac{\pi t}{2\delta})$,
whose per-coordinate Fisher information is $\pi^2/\delta^2$, so $\mathcal I(\lambda)=\pi^2 d/\delta^2$.
The regularity conditions for van Trees hold: $\lambda$ is continuously differentiable and vanishes at the
boundary of the cube (so the prior boundary terms vanish), the Gaussian-location transcript family is
Fisher-regular (differentiable in quadratic mean, with the measurable quantization channels covered by
conditions (i)--(iii) of Lemma~\ref{lem:fishertranscript}), and the public coin $U$ is
parameter-independent, so it only averages the conditional Fisher information $I_\Pi(\theta\mid U)$. The
multivariate van Trees inequality
\citep[][as in Cor.~3 of \citealp{barnes2021fisher}]{gilllevit1995}, applied to the transcript and the
estimand $\psi_j(\theta)=\theta_j$ (so $\sum_j\E_\lambda[\partial_{\theta_j}\psi_j]=d$, numerator $d^2$),
gives
\[
  \sup_\theta\E_\theta\|\hat\theta-\theta\|^2
  \;\ge\;\E_\lambda\E_\theta\|\hat\theta-\theta\|^2
  \;\ge\;\frac{d^2}{\sup_\theta\Tr I_\Pi(\theta\mid U)+\mathcal I(\lambda)}
  \;\ge\;\frac{d^2}{\dfrac{T\min\{2\ln2\,B,d\}}{\sigma^2}+\dfrac{\pi^2 d}{\delta^2}},
\]
the last step by Lemma~\ref{lem:fishertranscript} (averaging the bound over the public coin $U$).
Choose $\delta^2=4\pi^2\eps^2/d$, which satisfies $\delta^2\le\sigma^2$ by the hypothesis on $\eps$ and
makes the prior term $\pi^2 d/\delta^2=d^2/(4\eps^2)$. Then $2\eps^2\ge d^2/\big(A+d^2/(4\eps^2)\big)$
with $A:=T\min\{2\ln2\,B,d\}/\sigma^2$ forces $A\ge d^2/(4\eps^2)$, i.e.\
$T\ge\sigma^2 d^2/\big(4\eps^2\min\{2\ln2\,B,d\}\big)$. For $B\lesssim d$ this is the
$\sigma^2 d^2/(B\eps^2)$ branch and for $B\gtrsim d$ the $\sigma^2 d/\eps^2$ branch, giving the displayed
$\max$. The condition $\eps^2\le\sigma^2 d/(16\pi^2)$ is \emph{not} needed for the Fisher--van~Trees
argument itself (neither Lemma~\ref{lem:fishertranscript} nor van Trees requires low SNR); it only keeps
the hard cube at scale $\delta^2\le\sigma^2/4$, i.e.\ in the per-coordinate regime $\delta^2/\sigma^2\le1$
used for the interpretation and the comparison with the achievability scheme of \S\ref{sec:upper}.
\end{proof}

\begin{corollary}[Correlated gradient noise]
\label{cor:correlated}
Suppose the gradient noise is stationary and temporally correlated,
$\xi_t=\rho\,\xi_{t-1}+\sqrt{1-\rho^2}\,\eta_t$ with $\eta_t\stackrel{\mathrm{iid}}{\sim}\Ncal(0,\sigma^2 I)$
and $|\rho|<1$ (stationary per-coordinate variance $\sigma^2$). Then the lower bound of
Theorem~\ref{thm:product} holds with $\sigma^2$ replaced by the effective variance
$\sigma^2_{\mathrm{eff}}=\tfrac{1+\rho}{1-\rho}\sigma^2$:
\[
T=\Omega\!\Big(\frac{1+\rho}{1-\rho}\cdot\frac{\sigma^2 d}{\eps^2}\,\max\{1,\,d/B\}\Big).
\]
\end{corollary}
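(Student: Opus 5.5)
We prove the corollary by rerunning the Fisher--van~Trees argument of Theorem~\ref{thm:product}. The only change is that the per-round Fisher ceiling in Lemma~\ref{lem:fishertranscript}, $\min\{2\ln2\,B,d\}/\sigma^2$, becomes $\min\{2\ln2\,B,d\}/\sigma^2_{\mathrm{eff}}$. The reduction of Lemma~\ref{lem:reduction} goes through unchanged with correlated noise. The encoder still forms $Y_t=x_t-g_t=\theta+\xi_t'$ with $\xi'_t=-\xi_t$, now an AR(1) process. The query is still a deterministic function of $(U,M_{1:t-1})$, so it carries no information about $\theta$, and the gap identity $f_\theta(\hat x)-f_\theta^\star=\tfrac12\|\hat x-\theta\|^2$ is unaffected. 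So the task is to bound the Fisher trace of an adaptive $B$-bit transcript of the AR(1) observations $Y_{1:T}$.

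\textbf{Whitening step.} We pass to innovations. Set $Z_1:=Y_1$ and, for $t\ge2$,
\[
Z_t:=\frac{Y_t-\rho Y_{t-1}}{1-\rho}=\theta+\frac{\sqrt{1-\rho^2}}{1-\rho}\,\eta_t .
\]
The map $Y_{1:T}\mapsto Z_{1:T}$ is invertible. For $t\ge2$ the $Z_t$ are i.i.d.\ $\Ncal(\theta,\sigma^2_{\mathrm{eff}}I_d)$ with $\sigma^2_{\mathrm{eff}}=\sigma^2(1-\rho^2)/(1-\rho)^2=\tfrac{1+\rho}{1-\rho}\sigma^2$, and they are independent of $Z_1\sim\Ncal(\theta,\sigma^2 I_d)$.

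We then grant the encoder more: at round $t$ it may use all of $Y_{1:t}$, hence $Z_{1:t}$. Granting more access only strengthens the lower bound, in the same spirit as Remark~\ref{rem:strong}. Because the $Z_t$ are independent given $\theta$, this stronger encoder at round $t$ has one \emph{fresh} Gaussian observation $Z_t$ plus past observations. The past observations are the delicate part. In the martingale decomposition of the proof of Lemma~\ref{lem:fishertranscript}, the increment $s_t$ conditions on $M_{1:t-1}$ but not on $Z_{1:t-1}$. The cleanest fix is to hand $Z_{1:t-1}$ to the decoder as well, i.e.\ bound the Fisher information of the pair $(\Pi,\text{side data})$. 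That would overcount, so we take a different route.

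\textbf{Route around the side data.} We first try the following. The transcript $\Pi$ is a randomized function of $Z_{1:T}$ with $T$ messages of $B$ bits each. The Barnes et al.\ Fisher--MI bound applied jointly to the product of independent Gaussians $Z_{2:T}$, each with score variance $1/\sigma^2_{\mathrm{eff}}$, can be made additive: write
\[
\Tr I_\Pi=\sum_{t}\E\,\Tr I\big(M_t\mid M_{1:t-1}\big),
\]
and, conditionally on the prefix, treat $M_t$ as a $B$-bit function of $(Z_t,\text{past})$. Given the prefix, the past is a random side input whose law depends on $\theta$, so Theorem~1 does not apply verbatim.

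If this fails, the fallback is the \emph{distributed-processing} form of Barnes et al.\ Cor.~1. In that form, each coordinate $Z_t$ is held by a separate ``machine'' that may read the blackboard. Reading $Y_{t-1}$, however, is exactly what is not allowed, so a machine holding only $Z_t$ cannot reproduce the original encoder. The honest version is therefore to use the Fisher--MI inequality for the whole independent product with per-block sub-Gaussian parameter $1/\sigma_{\mathrm{eff}}$ and total mutual information at most $TB$. The aim is $\Tr I_\Pi\le 2\ln2\,TB/\sigma^2_{\mathrm{eff}}+O(d/\sigma^2)$, where the $O(d/\sigma^2)$ term accounts for the single initial sample $Z_1$. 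The $d$-branch is immediate from Fisher DPI, since $\Tr I_{Z_{1:T}}=d/\sigma^2+(T-1)d/\sigma^2_{\mathrm{eff}}$.

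\textbf{Conclusion and the main obstacle.} Plugging the new ceiling into the van Trees step of Theorem~\ref{thm:product}, with the same cosine prior and the scale chosen with $\sigma^2_{\mathrm{eff}}$ in place of $\sigma^2$, gives $T\gtrsim\sigma^2_{\mathrm{eff}}d^2/(\eps^2\min\{B,d\})$. The additive $Z_1$ term is lower order once $T\ge2$. The main obstacle is the per-round Fisher--MI bound when the encoder sees the whole history rather than only the fresh innovation, as discussed in the two previous paragraphs; that is where the work lies.
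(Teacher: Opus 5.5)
\paragraph{Verdict: genuine gap, at the step you flag.}
Your route is the paper's route: reduction, then whitening $Z_t=(Y_t-\rho Y_{t-1})/(1-\rho)$, then granting the encoder the history $Y^t$. The gap is the $B$-branch ceiling $\Tr I_\Pi\le 2\ln2\,TB/\sigma^2_{\mathrm{eff}}+O(d/\sigma^2)$ for an encoder that sees $Z_{1:t}$. It is never proved, and your ``honest version'' does not prove it. Theorem~1 of \citet{barnes2021fisher} applied to the whole product $Z_{2:T}$ must use the sub-Gaussian parameter of the joint score $\sum_t(Z_t-\theta)/\sigma^2_{\mathrm{eff}}$. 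That parameter is $\sqrt{T-1}/\sigma_{\mathrm{eff}}$, not $1/\sigma_{\mathrm{eff}}$, so you get the vacuous $T^2B$-type bound that the proof of Lemma~\ref{lem:fishertranscript} warns about. Per-round additivity with parameter $1/\sigma_{\mathrm{eff}}$ comes only from the interactive Corollary~1. That result needs each message to be a function of one fresh sample given the blackboard, which is exactly what a history-access encoder violates.

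\paragraph{The step cannot be closed on this route.}
The inequality is false for the history-access encoder. An encoder that remembers $Z_{1:t}$ can spend rounds $T/2+1,\dots,T$ sending a fine quantization of the mean of $Z_{2:T/2}$, which is $TB/2$ bits for a $d$-vector. Its risk is then $O(\sigma^2_{\mathrm{eff}}d/T)$ as soon as $TB/d$ exceeds a logarithmic factor, so the product form fails for it even at $\rho=0$. ``A stronger encoder only strengthens a lower bound'' is logically sound, but it does not help when the target bound is false for the stronger encoder.

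\paragraph{Comparison with the paper, and what survives.}
The paper's sketch makes the same move and simply asserts that Theorem~\ref{thm:product} applies to the whitened stream. It never addresses the side-data problem you identified, so it does not supply the missing idea either. What survives of your proposal is the $d$-branch. Fisher DPI with $\Tr I_{Z_{1:T}}=d/\sigma^2+(T-1)d/\sigma^2_{\mathrm{eff}}$ holds for any encoder and, via van Trees, gives $T\gtrsim\sigma^2_{\mathrm{eff}}d/\eps^2$. This needs $\eps$ small enough that the $d/\sigma^2$ contributed by $Z_1$ is dominated.

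\paragraph{The $B$-branch fails even for the memoryless encoder.}
Take the encoder of Definition~\ref{def:oracle} with AR(1) noise.
\begin{itemize}[leftmargin=1.4em,itemsep=2pt]
\item At round $t$, send the signs of $Y_t$ on a cyclically chosen block of $B$ coordinates. Each coordinate is then revisited every $d/B$ rounds, with noise correlation $\rho^{d/B}$.
\item If $\rho^{d/B}\le\frac12$, Gaussian maximal correlation keeps the variance of each empirical sign frequency within a factor $3$ of its i.i.d.\ value.
\item Inverting $\Phi$ (with $\eps^2\le\sigma^2d/(16\pi^2)$, so $\delta\le\sigma/2$) reaches gap $\eps^2$ in $O(\sigma^2d^2/(B\eps^2))$ rounds, independently of $\rho$.
\item Choosing $1-\rho=(\ln2)B/d$ gives $\rho^{d/B}\le\frac12$, while $\frac{1+\rho}{1-\rho}\approx 2d/(B\ln2)$ grows without bound. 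This contradicts the displayed bound for any absolute constant.
\end{itemize}
So the missing step is not merely unproven. The $B$-branch as stated has to be weakened, and the whitening-plus-history route cannot deliver it.
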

\begin{proof}[Proof sketch]
By Lemma~\ref{lem:reduction} the encoder observes $Y_t=\theta-\xi_t$. Granting the encoder causal access
to its past raw observations $Y^t$ (the stronger oracle of Definition~\ref{def:drift};
cf.\ Remark~\ref{rem:strong}, a stronger encoder only strengthens a lower bound), it can form the
invertible transform $\tilde Y_t:=Y_t-\rho Y_{t-1}=(1-\rho)\theta-\sqrt{1-\rho^2}\,\eta_t$, which is
i.i.d.\ for $t\ge2$ and a sufficient statistic for $\theta$; after normalization it is an i.i.d.\
Gaussian-location observation with variance
$\sigma^2_{\mathrm{eff}}=\tfrac{1-\rho^2}{(1-\rho)^2}\sigma^2=\tfrac{1+\rho}{1-\rho}\sigma^2$, to which
Theorem~\ref{thm:product} applies. The resulting bound holds for this stronger encoder and hence
\emph{a fortiori} for the original memoryless encoder of Definition~\ref{def:oracle}; the single
boundary term $Y_1$ contributes $O(1/T)$. Three independent derivations of $\sigma^2_{\mathrm{eff}}$, the
general matrix spectral-density case, and the complementary \emph{drifting-optimum} analysis are
deferred to Appendix~\ref{app:seqrd}.
\end{proof}

\begin{remark}[Correlation hurts; predictability helps]
Positive correlation \emph{raises} the bound by the factor $\tfrac{1+\rho}{1-\rho}\ge1$, scaling every
term equally rather than relaxing the $d/B$ penalty in isolation. What does lower the bit requirement is
predictability of the optimum's \emph{trajectory} (slow drift), not correlation of the noise
(Appendix~\ref{app:seqrd}). This corrects a conjecture stated in an earlier version of this paper.
\end{remark}

\begin{remark}[What is ours, what is cited]
The optimization-to-estimation reduction (Lemma~\ref{lem:reduction}) and the assembly of the two facts
into the product-form optimization bound are ours. The Fisher-trace bound
(Lemma~\ref{lem:fishertranscript}) with its optimal preconstant and interactive-blackboard form, and the
van Trees application to communication-constrained Gaussian mean estimation, are due to
\citet{barnes2021fisher} (building on \citealp{barneshanozgur2020,gilllevit1995}); our
Appendix~\ref{app:partial} contributes only a short alternative proof of the Gaussian $T=1$ Fisher-trace
bound via a centroid inequality. This route is what makes Theorem~\ref{thm:product} unconditional: it
replaces the mutual-information/SDPI argument---which requires bounding a KL contraction we could not
establish (Appendix~\ref{app:partial})---by a Fisher-information argument that applies to the Gaussian
model directly, without the bounded-likelihood-ratio truncation of \citet{braverman2016ddpi}.
\end{remark}

\section{Achievability: A Matching Upper Bound under Bounded Dynamic Range}
\label{sec:upper}

We turn the assembly into a full proof. The scheme is a public-coin compression operator followed by
stochastic gradient descent; the two ingredients are an unbiased bounded-bit compressor with controlled
second moment (Lemma~\ref{lem:compress}) and a standard strongly convex SGD rate under a state-dependent
second-moment bound (Theorem~\ref{thm:upper}).

\begin{lemma}[Unbiased $B$-bit compression under bounded dynamic range]
\label{lem:compress}
Fix a known bound $G>0$ and let $q:=\lceil\log_2 d\rceil+1$. For every bit budget $B\ge 2q$ there is a
public-coin pair (encoder of \emph{exactly} at most $B$ bits, decoder) $\Ccal_B:\Rbb^d\to\Rbb^d$ such
that for every $g\in\Rbb^d$ with $\|g\|_2\le G$,
\[
  \E[\Ccal_B(g)\mid g]=g\qquad(\text{unbiased}),\qquad
  \E\big[\|\Ccal_B(g)\|^2\,\big|\,g\big]\le \omega_B\,\|g\|^2+\frac{G^2}{s},
\]
where $s:=\min\{d,\lfloor B/q\rfloor\}$ and $\omega_B:=d/s\le\max\{1,\,2q\,d/B\}=O\!\big(\max\{1,\tfrac{d\log d}{B}\}\big)$.
\end{lemma}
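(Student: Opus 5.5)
The plan is to build $\Ccal_B$ as rand-$s$ sparsification composed with a one-bit fixed-grid stochastic quantizer per kept coordinate. Each kept coordinate is sent by its index ($\lceil\log_2 d\rceil$ bits) plus one stochastic-rounding bit, i.e.\ $q$ bits per coordinate. When $s<d$ the index set is drawn from the public coin $U$ and need not be transmitted, but charging $q$ bits per coordinate keeps the accounting uniform and conservative. The total is $sq\le B$ bits, and we zero-pad to exactly $B$ bits. Concretely, using $U$, draw a uniformly random subset $S\subseteq[d]$ of size $s=\min\{d,\lfloor B/q\rfloor\}$; the hypothesis $B\ge 2q$ guarantees $s\ge 2\ge1$. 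For each $j\in S$, stochastically round $g_j\in[-G,G]$ to the two-point grid $\{-G,+G\}$: output $+G$ with probability $(1+g_j/G)/2$ and $-G$ otherwise, so $\E[\hat g_j\mid g]=g_j$ and $\hat g_j^2=G^2$. The decoder outputs $\Ccal_B(g)=\frac{d}{s}\sum_{j\in S}\hat g_j e_j$.

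Unbiasedness follows by the tower rule. Conditional on $S$, the expectation is $\frac ds\sum_{j\in S}g_je_j$, and since $\Prob[j\in S]=s/d$, averaging over $S$ returns $g$.

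For the second moment, $\E\|\Ccal_B(g)\|^2=\frac{d^2}{s^2}\sum_j \Prob[j\in S]\,\E[\hat g_j^2]=\frac ds\sum_j \E[\hat g_j^2]$. The key step is to split $\E[\hat g_j^2]=g_j^2+\mathrm{Var}(\hat g_j)$. With the crude two-point grid, $\mathrm{Var}(\hat g_j)=G^2-g_j^2$, which gives $\frac ds\cdot dG^2$. This is too large compared with the target $G^2/s$, and that is the main obstacle. The remedy is to make the quantization scale adapt to $\|g\|$ rather than to $G$ coordinatewise.

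I would first try spending the $q$ bits on a finer grid. Instead of one bit, use a uniform grid on $[-G,G]$ with spacing $\Delta$ and $2^{q'}$ levels, where $q'$ is absorbed by enlarging $q$ by a constant factor (harmless in the $O(\cdot)$). Stochastic rounding then has variance at most $\Delta^2/4$ per coordinate. Alternatively, keep one bit but use a per-coordinate threshold of scale $G/\sqrt d$ after a public random rotation, which makes every coordinate $O(G\sqrt{\log d/d})$ with high probability. Either way the target is $\mathrm{Var}(\hat g_j)\le G^2/d$ on average, so the variance term becomes $\frac ds\cdot d\cdot\frac{G^2}{d^2}\cdot d$. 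Tracking the constants, the goal is to choose the grid so that $\frac ds\sum_j\mathrm{Var}(\hat g_j)\le G^2/s$, which requires $\sum_j \mathrm{Var}(\hat g_j)\le G^2/d$, i.e.\ per-coordinate variance $G^2/d^2$. That needs grid spacing $\Delta\approx 2G/d$, hence about $\log_2 d+1$ bits for the value, matching the $q=\lceil\log_2 d\rceil+1$ allotment if the index is supplied by the public coin.

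So the final accounting is as follows. The index set comes free from $U$. Each of the $s$ coordinates costs $q$ bits for its value on a grid of $2^q\ge 2d$ levels over $[-G,G]$, with spacing at most $G/d$ and variance at most $G^2/(4d^2)$. This gives $\E\|\Ccal_B(g)\|^2\le\frac ds\|g\|^2+\frac ds\cdot d\cdot\frac{G^2}{4d^2}\le\omega_B\|g\|^2+\frac{G^2}{s}$. The inequality $\omega_B=d/s\le\max\{1,2qd/B\}$ follows from $\lfloor B/q\rfloor\ge B/(2q)$, valid since $B\ge2q$.
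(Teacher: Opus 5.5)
Your final scheme (public-coin random $s$-subset, unbiased stochastic rounding of each kept coordinate on a fixed $2^q$-level grid over $[-G,G]$, decoder rescaling by $d/s$, no index or scale bits sent) is the paper's construction, and your unbiasedness and second-moment computations match its proof; the paper also applies a public random rotation $H$, which it notes is not needed for the bound. One arithmetic fix: with $2^q$ levels the spacing is $2G/(2^q-1)\le 2G/(2d-1)$, not $G/d$, so the per-coordinate variance is at most $G^2/(2d-1)^2$ rather than $G^2/(4d^2)$, which still gives the $G^2/s$ term because $(2d-1)^2\ge d^2$ (the paper uses $\Delta^2/4\le G^2/2^{2q-2}\le G^2/d^2$); also drop the abandoned one-bit and ``adapt to $\|g\|$'' detours, since the final scheme uses a fixed grid and exactly $q$ bits per kept coordinate.
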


\noindent The proof (a fixed-grid stochastic quantizer on a public-coin random subset, with no scale
channel) is in Appendix~\ref{app:proofs}.

\begin{remark}[Role of the rotation; the $\log d$ factor]
The bound above uses only $\|Hg\|=\|g\|$, so it holds even for $H=I$; the random Hadamard rotation is
included because it is the practically relevant device for replacing the $\Theta(\log d)$-bit quantizer
by an $O(1)$-bit quantizer (the FP4 regime): after rotation the coordinates of $\tilde g$ are balanced,
$|\tilde g_i|=O(\|g\|\sqrt{\log d/d})$ with high probability, which suggests that with additional
high-probability coordinate-balancing and bounded-dynamic-range bookkeeping one could take
$s=\Theta(B)$ and remove the $\log d$ factor \citep{suresh2017dme,ailon2009fast,alistarh2017qsgd}. We do
\emph{not} prove this here; the self-contained $\Theta(\log d)$-bit version above already matches
Theorem~\ref{thm:product} up to the logarithmic factor acknowledged there.
\end{remark}

\begin{theorem}[Achievability under a bounded-dynamic-range oracle]
\label{thm:upper}
Let $f$ be $\mu$-strongly convex and $L$-smooth, with a stochastic first-order oracle
$g_t=\nabla f(x_t)+\xi_t$ satisfying $\E[\xi_t\mid x_t]=0$, $\E[\|\xi_t\|^2\mid x_t]\le\sigma^2 d$, and a
known almost-sure bound $\|g_t\|\le G$ (bounded gradients). Assume $B\ge 2(\lceil\log_2 d\rceil+1)$ and
run SGD with compressed gradients $\hat g_t=\Ccal_B(g_t)$ (Lemma~\ref{lem:compress} with this $G$) and
diminishing steps $\eta_t=\beta/(\gamma+t)$, $\beta>1/\mu$, $\gamma=\Theta(\omega_B L/\mu)$. Then $\hat g_t$
is unbiased for $\nabla f(x_t)$, satisfies $\E[\|\hat g_t\|^2\mid x_t]\le\omega_B\|\nabla f(x_t)\|^2+M$
with $M=\omega_B\sigma^2 d+G^2/s$, and \citep[Thm.~4.7]{bottou2018optimization}
\[
  \E[f(x_T)-f^\star]\;\le\;\frac{\nu}{\gamma+T},\qquad
  \nu=\max\Big\{\tfrac{\beta^2 L M}{2(\beta\mu-1)},\,(\gamma+1)\,\E[f(x_1)-f^\star]\Big\}.
\]
Consequently $\E[f(x_T)-f^\star]\le\eps^2$ after
$T=O\big(\tfrac{LM}{\mu^2\eps^2}+\tfrac{\gamma(f(x_1)-f^\star)}{\eps^2}\big)$ rounds. Specialize to the
quadratic family of Lemma~\ref{lem:reduction} ($\mu=L=1$) at the hard scale $\delta^2=4\pi^2\eps^2/d$,
started at $x_1=0$ (so $f(x_1)-f^\star\le\tfrac12\delta^2 d=2\pi^2\eps^2$), \emph{under the additional
bounded-gradient assumption} $\|g_t\|\le G$ a.s.\ with $G^2=O(\sigma^2 d)$ (so $M=O(\omega_B\sigma^2 d)$
and the initial term is dominated). Then
\[
  T=O\!\Big(\frac{\omega_B\,\sigma^2 d}{\eps^2}\Big)
  =O\!\Big(\frac{\sigma^2 d}{\eps^2}\,\max\!\Big\{1,\frac{d\log d}{B}\Big\}\Big)
  =\widetilde O\!\Big(\frac{\sigma^2 d}{\eps^2}\,\max\!\Big\{1,\frac dB\Big\}\Big),
\]
matching Theorem~\ref{thm:product} up to a $\log d$ factor. The bounded-gradient assumption $G^2=O(\sigma^2 d)$
is exactly the per-coordinate noise scale; it is an \emph{assumption}, not a consequence of the Gaussian
oracle, since Gaussian noise has no almost-sure norm bound. This is the one place where the achievability
oracle differs from the Gaussian lower-bound oracle (Section~\ref{sec:discussion}).
\end{theorem}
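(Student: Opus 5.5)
The plan has three stages: verify the two properties of the compressed gradient $\hat g_t=\Ccal_B(g_t)$, invoke the Bottou--Curtis--Nocedal strongly convex rate, and then specialize to the quadratic family at the hard scale.

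\textbf{Unbiasedness and second moment.} Since the public-coin randomness of $\Ccal_B$ is drawn fresh each round and is independent of $(x_t,\xi_t)$, the tower property and Lemma~\ref{lem:compress} give
\[
\E[\hat g_t\mid x_t]=\E\bigl[\E[\Ccal_B(g_t)\mid g_t]\bigm| x_t\bigr]=\E[g_t\mid x_t]=\nabla f(x_t).
\]
This uses the bounded-gradient assumption $\|g_t\|\le G$ a.s., which makes Lemma~\ref{lem:compress} applicable on every realization. For the second moment, Lemma~\ref{lem:compress} gives
\[
\E[\|\hat g_t\|^2\mid g_t]\le\omega_B\|g_t\|^2+G^2/s.
\]
Taking the conditional expectation given $x_t$ and using $\E[\xi_t\mid x_t]=0$ removes the cross term, so
\[
\E[\|g_t\|^2\mid x_t]=\|\nabla f(x_t)\|^2+\E[\|\xi_t\|^2\mid x_t]\le\|\nabla f(x_t)\|^2+\sigma^2 d.
\]
Combining the two displays yields $\E[\|\hat g_t\|^2\mid x_t]\le\omega_B\|\nabla f(x_t)\|^2+M$ with $M=\omega_B\sigma^2d+G^2/s$. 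This is the assumption of \citet[Thm.~4.7]{bottou2018optimization} with $M_V=\omega_B-1$, $M_G=\omega_B$ and $\mu_G=\mu_\cdot=1$.

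\textbf{SGD rate.} Their theorem requires the step-size condition $\eta_1\le 1/(LM_G)$, i.e.\ $\beta/(\gamma+1)\le 1/(L\omega_B)$. This is exactly why we take $\gamma=\Theta(\omega_B L/\mu)$: with $\beta$ a constant multiple of $1/\mu$ (say $\beta=2/\mu$), the choice $\gamma+1\ge 2\omega_B L/\mu$ suffices. The theorem then delivers the stated $\nu/(\gamma+T)$ bound directly. Solving $\nu/(\gamma+T)\le\eps^2$ with $\beta=2/\mu$ gives $\beta^2LM/(2(\beta\mu-1))=2LM/\mu^2$. Since $\nu$ is a max of two terms, it suffices that $T\ge\nu/\eps^2$, which yields $T=O\bigl(LM/(\mu^2\eps^2)+\gamma(f(x_1)-f^\star)/\eps^2\bigr)$.

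\textbf{Specialization to the quadratic family.} Set $\mu=L=1$ and start at $x_1=0$, so that $f(x_1)-f^\star=\tfrac12\|\theta\|^2\le\tfrac12\delta^2d=2\pi^2\eps^2$. The initial term then contributes $O(\gamma)=O(\omega_B)$ rounds. With $G^2=O(\sigma^2d)$ and $s\ge1$ we have
\[
\frac{G^2}{s}\le\frac{d}{s}\,\frac{G^2}{d}=\omega_B\,O(\sigma^2)\le O(\omega_B\sigma^2 d),
\]
so $M=O(\omega_B\sigma^2 d)$ and the first term is $O(\omega_B\sigma^2d/\eps^2)$. The initial term $O(\omega_B)$ is dominated by this whenever $\sigma^2d/\eps^2\gtrsim1$; this holds in the regime $\eps^2\le\sigma^2d/(16\pi^2)$ of Theorem~\ref{thm:product}, and I would state that condition explicitly. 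Finally, $\omega_B\le\max\{1,2qd/B\}$ with $q=O(\log d)$, which gives the displayed $O\bigl(\max\{1,d\log d/B\}\bigr)$ factor.

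\textbf{Main obstacle.} The delicate step is the bookkeeping linking $\gamma$ to $\omega_B$. We must make sure the step-size condition of the cited theorem holds, and that the resulting $\gamma$-dependent initial term does not dominate. A further subtlety is that the quadratic family's true gradient $x_t-\theta$ is not bounded a priori along the trajectory. The assumption $\|g_t\|\le G$ is imposed rather than derived, and I would flag, rather than prove, its consistency with the iterates, for instance by noting that projecting the iterates onto the cube $[-\delta,\delta]^d$ keeps $\|\nabla f\|\le2\delta\sqrt d=O(\eps)$.
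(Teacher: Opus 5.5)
Your proposal is correct and follows the paper's proof essentially step for step. It uses the tower property for unbiasedness, then combines the compressor's conditional second-moment bound with $\E[\|g_t\|^2\mid x_t]\le\|\nabla f(x_t)\|^2+\sigma^2 d$ to verify Bottou--Curtis--Nocedal's Assumption~4.3 with $M_G=\omega_B$; the step-size condition then forces $\gamma=\Theta(\omega_B L/\mu)$, and the quadratic specialization goes through $G^2/s=\omega_B G^2/d$ and an $O(\omega_B)$ initial term. Stating the condition $\sigma^2 d/\eps^2\gtrsim 1$ explicitly, which the paper uses silently, is a sensible tightening. Your projection remark is fine only as a side comment: it changes the algorithm to projected SGD, which Theorem~4.7 does not literally cover, and it does not bound the noise part of $g_t$.
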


\noindent The proof (the compressor's moments fed into the strongly convex SGD rate, then the quadratic
specialization) is in Appendix~\ref{app:proofs}.

\noindent\textbf{Consequence.} Theorems~\ref{thm:product} and \ref{thm:upper} together give
$T=\widetilde\Theta\!\big(\tfrac{\sigma^2 d}{\eps^2}\max\{1,d/B\}\big)$ up to a $\log d$ factor and
modulo the oracle gap of (L3$'$): the lower bound is for the Gaussian oracle, the upper bound for a
bounded-dynamic-range one. Up to logarithmic factors the constraint becomes free once $B\gtrsim d$.
The broader practical reading is deferred to Section~\ref{sec:discussion}; numerical sanity checks in
the model's native setting (confirming the rate, illustrating the oracle gap, and---in
Appendices~\ref{app:exp:product}--\ref{app:exp:reduction}---the product-vs-max scaling, the lower-bound
envelope, the Fisher-trace ceiling, and the reduction) are in Appendix~\ref{app:exp}.

\section{Discussion}
\label{sec:discussion}

We close with the practical reading of the bounds (Section~\ref{sec:discussion:reading}),
their limitations (Section~\ref{sec:discussion:lim}), and the open directions they leave
(Section~\ref{sec:discussion:open}).

\subsection{Practical reading}
\label{sec:discussion:reading}

The lower bound is for the Gaussian oracle and the upper bound for a bounded-dynamic-range oracle;
modern low-precision training recipes
\citep{tseng2025mxfp4,castro2025quartet,nvidia2025nvfp4,tang2025adaptive} sit between these
idealised models. The following implications follow without leaving the safe scope.
\begin{itemize}[leftmargin=1.4em,itemsep=2pt]
  \item \textbf{Bit-width alone is not the right metric.} The bound scales with the effective
        per-round payload, not nominal bit-width: scale metadata, block-scaling overhead,
        occasional high-precision fallbacks, and overflow codes all consume from the same $B$-bit
        budget. Comparing recipes by ``nominal FP4 vs.\ FP8'' is weaker than comparing them by
        \emph{effective bits per gradient update}.
  \item \textbf{Dynamic-range control is part of the information channel.} The bounded-range
        assumption of Theorem~\ref{thm:upper} is what lets the encoder spend exactly $B$ bits
        with no scale channel. Mechanisms practical recipes use---random Hadamard transforms,
        block scaling, clipping policies, and selective high-precision layers---are exactly the
        engineering moves that bring the empirical channel closer to that bounded-range model.
  \item \textbf{Stochastic rounding is not optional.} Theorem~\ref{thm:upper} uses an
        \emph{unbiased} compressor (Lemma~\ref{lem:compress}); deterministic rounding would
        introduce a per-step bias that no second-moment improvement can erase. This is consistent
        with empirical reports that stochastic rounding is essential on the gradient path
        \citep{tseng2025mxfp4,nvidia2025nvfp4}.
  \item \textbf{Correlation must be read with care.} Real gradient sequences are temporally and
        structurally correlated. Appendix~\ref{app:seqrd}
        (Corollary~\ref{cor:correlated}, Remark~\ref{rem:correction}) shows that correlation of the
        \emph{noise} \emph{raises} the bit cost, whereas predictability of the optimum's
        \emph{trajectory} lowers it; predictive or residual-coded compressors are the natural way to
        exploit the latter.
\end{itemize}

\subsection{Limitations}
\label{sec:discussion:lim}

\paragraph{(L1) Independent noise.} The main theorems assume noise independent across rounds.
Appendix~\ref{app:seqrd} removes this assumption via a sequential rate--distortion analysis and finds the
effect is the opposite of the naive intuition: correlated \emph{noise} raises the effective bit cost
(Corollary~\ref{cor:correlated}), whereas a predictable \emph{drift} lowers it
(Theorem~\ref{thm:seqlb}).
\paragraph{(L2) Variance convention.} Per-coordinate variance $\sigma^2$ (Section~\ref{sec:setup});
the alternative convention relocates a factor of $d$.
\paragraph{(L3) Strong encoder.} The oracle of Definition~\ref{def:oracle} centers the gradient
using the query (Remark~\ref{rem:strong}), stronger than a real quantizer; safe for lower bounds,
a caveat for interpretation.
\paragraph{(L3$'$) The two oracles differ.} The lower bound is for the Gaussian oracle with
$\xi_t\sim\Ncal(0,\sigma^2 I_d)$, whose gradient has no almost-sure norm bound. The achievability
(Theorem~\ref{thm:upper}) assumes a bounded-dynamic-range oracle, $\|g_t\|\le G$ a.s.\ with
$G^2=O(\sigma^2 d)$, which is what lets a fixed grid spend exactly $B$ bits with no scale channel.
The two rates match up to a $\log d$ factor, but the match is \emph{across two oracle models}, not
within one; closing it (see Section~\ref{sec:discussion:open}) is left open. We state this rather
than absorb it into the constants. Appendix~\ref{app:seqrd} addresses it only in an \emph{expected-rate}
sense and only for the drifting model; the fixed-length static gap remains open.
Appendix~\ref{app:exp:gap} illustrates the gap numerically.
\paragraph{(L4) Proved vs.\ imported.} Lemma~\ref{lem:reduction},
Theorems~\ref{thm:comm}--\ref{thm:stat}, Corollary~\ref{cor:combined},
Proposition~\ref{prop:elem}, and Theorem~\ref{thm:product} are proved here.
Theorem~\ref{thm:product} invokes two external inequalities, cited as such: the
Fisher-trace/mutual-information bound of \citet{barnes2021fisher} (whose Gaussian $T=1$ case we
reprove in Appendix~\ref{app:partial}) and the multivariate van Trees inequality
\citep{gilllevit1995}. Theorem~\ref{thm:upper} is proved here, invoking the strongly convex SGD
rate of \citet{bottou2018optimization} and standard quantization/sparsification primitives
\citep{alistarh2017qsgd,suresh2017dme,wangni2018gradient,stich2018sparsified,rakhlin2012making}.
\paragraph{(L5) Scope of the practical reading.} The idealised model omits optimizer state, block
scaling, accumulation precision, and non-quadratic landscapes. We do not claim that any deployed
FP4 or FP8 system is minimax-optimal; the claim is that the rotation-plus-stochastic-rounding
motif of practical recipes admits a clean information-theoretic rationale inside this model.

\subsection{Open directions}
\label{sec:discussion:open}

\paragraph{Closing the oracle gap.} The most direct technical step is to remove (L3$'$): give an
unbiased $B$-bit compressor for truly Gaussian (unbounded) gradients, matching
Theorem~\ref{thm:product} up to logarithms. Three routes: \emph{clipping with controlled bias}
(a high-probability analysis showing the bias decays fast enough to leave the rate intact);
\emph{overflow handling} (a rare ``escape'' code for out-of-range values, paid for at expected
$B$ bits per round); and \emph{variable-length coding} (spending more bits only when the value is
large, expected $B$). Each is a clean, well-posed problem.

\paragraph{Correlated and drifting gradients (resolved).} A first version of this paper posed the
correlated-gradient case as a conjecture, suggesting that temporal correlation with mixing coefficient
$\rho$ would \emph{relax} the $d/B$ penalty in proportion to the predictability $1-\rho$.
Appendix~\ref{app:seqrd} resolves the question through a sequential rate--distortion analysis---and
\emph{corrects} that intuition. For a fixed optimum with stationary AR(1) gradient noise, positive
correlation in fact \emph{raises} the bound,
\[
T=\Omega\!\Big(\tfrac{1+\rho}{1-\rho}\cdot\tfrac{\sigma^2 d}{\eps^2}\max\{1,d/B\}\Big)
\qquad(\text{Corollary~\ref{cor:correlated}}),
\]
scaling every term by $\sigma^2_{\mathrm{eff}}/\sigma^2=\tfrac{1+\rho}{1-\rho}$ rather than relaxing
$d/B$ alone (Remark~\ref{rem:correction}); the quantity that genuinely lowers the bit requirement is
predictability of the optimum's \emph{trajectory} (slow drift), not correlation of the noise. The same
appendix sketches a drifting optimum via the data-rate theorem \citep{nair2004stabilizability} and
remote Gaussian SRD \citep{tanaka2017sdp}, and---by quantizing the \emph{innovation} rather than the raw
gradient \citep{kostina2019ratecost}---points to an \emph{expected-rate} achievability route that needs
only finite differential entropy rather than an a.s.\ bound; the exact fixed-length per-round gap
(L3$'$) remains open. The heavy-tailed regime \citep{simsekli2019tail} remains a further axis.

\paragraph{A clean single-round information question.} Appendix~\ref{app:partial} reduces a
self-contained alternative proof of Theorem~\ref{thm:product} to one missing inequality: the
KL-contraction coefficient of the noisy Gaussian channel under a binary hypercube prior is
$O(\mathrm{SNR})$. The $\chi^2$-version is known (it equals the squared maximal correlation); the
KL-version is provably no smaller and remains open
(Conjecture~\ref{conj:singleround}).

\bibliographystyle{plainnat}
\bibliography{refs}

@inproceedings{tseng2025mxfp4,
  title     = {Training {LLM}s with {MXFP4}},
  author    = {Tseng, Albert and Yu, Tao and Park, Youngsuk},
  booktitle = {Proceedings of the 28th International Conference on Artificial Intelligence and Statistics (AISTATS)},
  series    = {PMLR},
  volume    = {258},
  pages     = {1630--1638},
  year      = {2025},
  note      = {arXiv:2502.20586}
}

@article{nvidia2025nvfp4,
  title   = {Pretraining Large Language Models with {NVFP4}},
  author  = {{NVIDIA}},
  journal = {arXiv preprint arXiv:2509.25149},
  year    = {2025}
}

@article{castro2025quartet,
  title   = {Quartet: Native {FP4} Training Can Be Optimal for Large Language Models},
  author  = {Castro, Roberto L. and Panferov, Andrei and others},
  journal = {arXiv preprint arXiv:2505.14669},
  year    = {2025}
}

@article{tang2025adaptive,
  title   = {A Convergence Analysis of Adaptive Optimizers under Floating-point Quantization},
  author  = {Tang, Xuan and Li, Jichu and Zou, Difan},
  journal = {arXiv preprint arXiv:2510.21314},
  year    = {2025}
}

@inproceedings{alistarh2017qsgd,
  title     = {{QSGD}: Communication-Efficient {SGD} via Gradient Quantization and Encoding},
  author    = {Alistarh, Dan and Grubic, Demjan and Li, Jerry and Tomioka, Ryota and Vojnovic, Milan},
  booktitle = {Advances in Neural Information Processing Systems (NeurIPS)},
  year      = {2017}
}

@inproceedings{suresh2017dme,
  title     = {Distributed Mean Estimation with Limited Communication},
  author    = {Suresh, Ananda Theertha and Yu, Felix X. and Kumar, Sanjiv and McMahan, H. Brendan},
  booktitle = {Proceedings of the 34th International Conference on Machine Learning (ICML)},
  series    = {PMLR},
  volume    = {70},
  pages     = {3329--3337},
  year      = {2017}
}

@inproceedings{karimireddy2019errorfeedback,
  title     = {Error Feedback Fixes {SignSGD} and other Gradient Compression Schemes},
  author    = {Karimireddy, Sai Praneeth and Rebjock, Quentin and Stich, Sebastian U. and Jaggi, Martin},
  booktitle = {Proceedings of the 36th International Conference on Machine Learning (ICML)},
  series    = {PMLR},
  volume    = {97},
  year      = {2019}
}

@book{nemirovski1983problem,
  title     = {Problem Complexity and Method Efficiency in Optimization},
  author    = {Nemirovski, Arkadi S. and Yudin, David B.},
  publisher = {Wiley},
  year      = {1983}
}

@article{agarwal2012lower,
  title   = {Information-Theoretic Lower Bounds on the Oracle Complexity of Stochastic Convex Optimization},
  author  = {Agarwal, Alekh and Bartlett, Peter L. and Ravikumar, Pradeep and Wainwright, Martin J.},
  journal = {IEEE Transactions on Information Theory},
  volume  = {58},
  number  = {5},
  pages   = {3235--3249},
  year    = {2012}
}

@article{carmon2020stationary,
  title   = {Lower Bounds for Finding Stationary Points {I}},
  author  = {Carmon, Yair and Duchi, John C. and Hinder, Oliver and Sidford, Aaron},
  journal = {Mathematical Programming},
  volume  = {184},
  pages   = {71--120},
  year    = {2020}
}

@article{arjevani2023nonconvex,
  title   = {Lower Bounds for Non-Convex Stochastic Optimization},
  author  = {Arjevani, Yossi and Carmon, Yair and Duchi, John C. and Foster, Dylan J. and Srebro, Nathan and Woodworth, Blake},
  journal = {Mathematical Programming},
  volume  = {199},
  pages   = {165--214},
  year    = {2023}
}

@article{braun2017nonsmooth,
  title   = {Lower Bounds on the Oracle Complexity of Nonsmooth Convex Optimization via Information Theory},
  author  = {Braun, G{\'a}bor and Guzm{\'a}n, Crist{\'o}bal and Pokutta, Sebastian},
  journal = {IEEE Transactions on Information Theory},
  volume  = {63},
  number  = {7},
  pages   = {4709--4724},
  year    = {2017}
}

@article{zhang2013distributed,
  title   = {Information-Theoretic Lower Bounds for Distributed Statistical Estimation with Communication Constraints},
  author  = {Zhang, Yuchen and Duchi, John C. and Jordan, Michael I. and Wainwright, Martin J.},
  journal = {Advances in Neural Information Processing Systems (NeurIPS)},
  year    = {2013},
  note    = {arXiv:1405.0782}
}

@inproceedings{braverman2016ddpi,
  title     = {Communication Lower Bounds for Statistical Estimation Problems via a Distributed Data Processing Inequality},
  author    = {Braverman, Mark and Garg, Ankit and Ma, Tengyu and Nguyen, Huy L. and Woodruff, David P.},
  booktitle = {Proceedings of the 48th Annual ACM Symposium on Theory of Computing (STOC)},
  pages     = {1011--1020},
  publisher = {ACM},
  year      = {2016},
  note      = {arXiv:1506.07216}
}

@inproceedings{han2018geometric,
  title     = {Geometric Lower Bounds for Distributed Parameter Estimation under Communication Constraints},
  author    = {Han, Yanjun and {\"O}zg{\"u}r, Ayfer and Weissman, Tsachy},
  booktitle = {Proceedings of the 31st Conference on Learning Theory (COLT)},
  series    = {PMLR},
  volume    = {75},
  pages     = {3163--3188},
  year      = {2018},
  note      = {Journal version in IEEE Transactions on Information Theory, 2021; arXiv:1802.08417}
}

@inproceedings{arjevani2015communication,
  title     = {Communication Complexity of Distributed Convex Learning and Optimization},
  author    = {Arjevani, Yossi and Shamir, Ohad},
  booktitle = {Advances in Neural Information Processing Systems (NeurIPS)},
  year      = {2015}
}

@article{wyner1976sideinfo,
  title   = {The Rate-Distortion Function for Source Coding with Side Information at the Decoder},
  author  = {Wyner, Aaron D. and Ziv, Jacob},
  journal = {IEEE Transactions on Information Theory},
  volume  = {22},
  number  = {1},
  pages   = {1--10},
  year    = {1976}
}

@book{cover2006elements,
  title     = {Elements of Information Theory},
  author    = {Cover, Thomas M. and Thomas, Joy A.},
  edition   = {2nd},
  publisher = {Wiley-Interscience},
  year      = {2006}
}

@book{wainwright2019high,
  title     = {High-Dimensional Statistics: A Non-Asymptotic Viewpoint},
  author    = {Wainwright, Martin J.},
  publisher = {Cambridge University Press},
  year      = {2019}
}

@book{tsybakov2009introduction,
  title     = {Introduction to Nonparametric Estimation},
  author    = {Tsybakov, Alexandre B.},
  publisher = {Springer},
  year      = {2009}
}

@book{polyanskiy2025information,
  title     = {Information Theory: From Coding to Learning},
  author    = {Polyanskiy, Yury and Wu, Yihong},
  publisher = {Cambridge University Press},
  year      = {2025}
}

@incollection{yu1997assouad,
  title     = {Assouad, Fano, and Le Cam},
  author    = {Yu, Bin},
  booktitle = {Festschrift for Lucien Le Cam},
  pages     = {423--435},
  publisher = {Springer},
  year      = {1997}
}

@inproceedings{massey1990directed,
  title     = {Causality, Feedback and Directed Information},
  author    = {Massey, James L.},
  booktitle = {Proceedings of the International Symposium on Information Theory and its Applications (ISITA)},
  year      = {1990}
}

@article{tatikonda2009capacity,
  title   = {The Capacity of Channels with Feedback},
  author  = {Tatikonda, Sekhar and Mitter, Sanjoy},
  journal = {IEEE Transactions on Information Theory},
  volume  = {55},
  number  = {1},
  pages   = {323--349},
  year    = {2009}
}

@inproceedings{simsekli2019tail,
  title     = {A Tail-Index Analysis of Stochastic Gradient Noise in Deep Neural Networks},
  author    = {{\c{S}}im{\c{s}}ekli, Umut and Sagun, Levent and G{\"u}rb{\"u}zbalaban, Mert},
  booktitle = {Proceedings of the 36th International Conference on Machine Learning (ICML)},
  series    = {PMLR},
  volume    = {97},
  year      = {2019}
}

@inproceedings{stich2018sparsified,
  title     = {Sparsified {SGD} with Memory},
  author    = {Stich, Sebastian U. and Cordonnier, Jean-Baptiste and Jaggi, Martin},
  booktitle = {Advances in Neural Information Processing Systems (NeurIPS)},
  year      = {2018}
}

@inproceedings{wangni2018gradient,
  title     = {Gradient Sparsification for Communication-Efficient Distributed Optimization},
  author    = {Wangni, Jianqiao and Wang, Jialei and Liu, Ji and Zhang, Tong},
  booktitle = {Advances in Neural Information Processing Systems (NeurIPS)},
  year      = {2018}
}

@inproceedings{rakhlin2012making,
  title     = {Making Gradient Descent Optimal for Strongly Convex Stochastic Optimization},
  author    = {Rakhlin, Alexander and Shamir, Ohad and Sridharan, Karthik},
  booktitle = {Proceedings of the 29th International Conference on Machine Learning (ICML)},
  year      = {2012}
}

@article{polyanskiy2017sdpi,
  title   = {Strong Data-Processing Inequalities for Channels and Bayesian Networks},
  author  = {Polyanskiy, Yury and Wu, Yihong},
  journal = {Convexity and Concentration, IMA Volumes in Mathematics and its Applications},
  volume  = {161},
  pages   = {211--249},
  year    = {2017},
  note    = {arXiv:1508.06025}
}

@inproceedings{barnes2021fisher,
  author    = {Barnes, Leighton Pate and {\"O}zg{\"u}r, Ayfer},
  title     = {Fisher Information and Mutual Information Constraints},
  booktitle = {2021 IEEE International Symposium on Information Theory (ISIT)},
  year      = {2021},
  note      = {arXiv:2102.05802}
}

@article{barneshanozgur2020,
  author  = {Barnes, Leighton Pate and Han, Yanjun and {\"O}zg{\"u}r, Ayfer},
  title   = {Lower Bounds for Learning Distributions under Communication Constraints via Fisher Information},
  journal = {Journal of Machine Learning Research},
  volume  = {21},
  number  = {236},
  pages   = {1--30},
  year    = {2020}
}

@article{gilllevit1995,
  author  = {Gill, Richard D. and Levit, Boris Y.},
  title   = {Applications of the van Trees Inequality: A {B}ayesian {C}ram{\'e}r--{R}ao Bound},
  journal = {Bernoulli},
  volume  = {1},
  number  = {1/2},
  pages   = {59--79},
  year    = {1995}
}

@article{zamir1998fisher,
  author  = {Zamir, Ram},
  title   = {A Proof of the {F}isher Information Inequality via a Data Processing Argument},
  journal = {IEEE Transactions on Information Theory},
  volume  = {44},
  number  = {3},
  pages   = {1246--1250},
  year    = {1998}
}

@article{bottou2018optimization,
  author  = {Bottou, L{\'e}on and Curtis, Frank E. and Nocedal, Jorge},
  title   = {Optimization Methods for Large-Scale Machine Learning},
  journal = {SIAM Review},
  volume  = {60},
  number  = {2},
  pages   = {223--311},
  year    = {2018}
}

@article{ailon2009fast,
  author  = {Ailon, Nir and Chazelle, Bernard},
  title   = {The Fast {J}ohnson--{L}indenstrauss Transform and Approximate Nearest Neighbors},
  journal = {SIAM Journal on Computing},
  volume  = {39},
  number  = {1},
  pages   = {302--322},
  year    = {2009}
}

@book{dembozeitouni1998,
  author    = {Dembo, Amir and Zeitouni, Ofer},
  title     = {Large Deviations Techniques and Applications},
  edition   = {2nd},
  publisher = {Springer},
  year      = {1998}
}

@article{nair2004stabilizability,
  author  = {Nair, Girish N. and Evans, Robin J.},
  title   = {Stabilizability of Stochastic Linear Systems with Finite Feedback Data Rates},
  journal = {SIAM Journal on Control and Optimization},
  volume  = {43},
  number  = {2},
  pages   = {413--436},
  year    = {2004}
}

@article{tanaka2017sdp,
  author  = {Tanaka, Takashi and Kim, Kwang-Ki K. and Parrilo, Pablo A. and Mitter, Sanjoy K.},
  title   = {Semidefinite Programming Approach to {Gaussian} Sequential Rate-Distortion Trade-offs},
  journal = {IEEE Transactions on Automatic Control},
  volume  = {62},
  number  = {4},
  pages   = {1896--1910},
  year    = {2017}
}

@article{stavrou2020revisited,
  author  = {Stavrou, Photios A. and Tanaka, Takashi and Tatikonda, Sekhar},
  title   = {The Time-Invariant Multidimensional {Gaussian} Sequential Rate-Distortion Problem Revisited},
  journal = {IEEE Transactions on Automatic Control},
  volume  = {65},
  number  = {5},
  pages   = {2245--2249},
  year    = {2020}
}

@article{kostina2019ratecost,
  author  = {Kostina, Victoria and Hassibi, Babak},
  title   = {Rate-Cost Tradeoffs in Control},
  journal = {IEEE Transactions on Automatic Control},
  volume  = {64},
  number  = {11},
  pages   = {4525--4540},
  year    = {2019}
}

@inproceedings{zamir1998nested,
  author    = {Zamir, Ram and Shamai, Shlomo},
  title     = {Nested Linear/Lattice Codes for {Wyner--Ziv} Encoding},
  booktitle = {Proc.\ IEEE Information Theory Workshop (ITW)},
  pages     = {92--93},
  year      = {1998}
}

@book{andersonmoore1979,
  author    = {Anderson, Brian D. O. and Moore, John B.},
  title     = {Optimal Filtering},
  publisher = {Prentice-Hall},
  year      = {1979}
}

@article{mayekartyagi2020limits,
  author  = {Mayekar, Prathamesh and Tyagi, Himanshu},
  title   = {Limits on Gradient Compression for Stochastic Optimization},
  journal = {arXiv preprint arXiv:2001.09032},
  year    = {2020},
  note    = {Also in Proc.\ IEEE ISIT 2020}
}

@inproceedings{mayekartyagi2020ratq,
  author    = {Mayekar, Prathamesh and Tyagi, Himanshu},
  title     = {{RATQ}: A Universal Fixed-Length Quantizer for Stochastic Optimization},
  booktitle = {Proc.\ Int.\ Conf.\ on Artificial Intelligence and Statistics (AISTATS)},
  year      = {2020}
}

@article{menartnikolov2025gradient,
  author  = {Menart, Michael and Nikolov, Aleksandar},
  title   = {On the Gradient Complexity of Private Optimization with Private Oracles},
  journal = {arXiv preprint arXiv:2511.13999},
  year    = {2025}
}

\appendix

\noindent The appendix is organized as follows. Appendix~\ref{app:proofs} contains the deferred proofs
of the achievability results (Lemma~\ref{lem:compress} and Theorem~\ref{thm:upper}).
Appendix~\ref{app:exp} reports the numerical sanity checks. Appendix~\ref{app:source} relates the
product-form bound to the distributed-estimation literature, and Appendix~\ref{app:partial} gives a
self-contained Gaussian Fisher-trace bound together with an alternative mutual-information route and
why it is harder. Appendix~\ref{app:seqrd} develops the sequential rate--distortion perspective on
correlated and drifting oracles (Corollary~\ref{cor:correlated} and the tracking bounds).

\section{Deferred Proofs}
\label{app:proofs}

\begin{proof}[Proof of Lemma~\ref{lem:compress}]
The public coin draws a uniformly random subset $S\subseteq[d]$ with $|S|=s$ and a random orthogonal $H$
(a random Hadamard rotation); both are shared by encoder and decoder, so neither costs bits. Crucially,
the quantization grid is the \emph{fixed, a priori} interval $[-G,G]$ determined by the known bound $G$,
\emph{not} by the realized $\|g\|$; hence no scale or norm is transmitted and the message consists only
of the $s$ quantized coordinate indices.

\emph{Encoder.} Compute $\tilde g=Hg$; orthogonality gives $\|\tilde g\|_2=\|g\|_2\le G$, so every
$\tilde g_i\in[-G,G]$. For each $i\in S$, transmit an unbiased stochastic quantization $Q(\tilde g_i)$ on
the uniform grid of $2^q$ points in $[-G,G]$ (spacing $\Delta=2G/(2^q-1)$): for adjacent grid points
$a\le\tilde g_i\le b$ write $\tilde g_i=(1-u)a+ub$ and set $Q(\tilde g_i)=b$ w.p.\ $u$, else $a$. Then
$\E_Q[Q(\tilde g_i)]=\tilde g_i$ and $\mathrm{Var}_Q(Q(\tilde g_i))\le\Delta^2/4\le G^2/2^{2q-2}$.

\emph{Decoder.} Set $\hat{\tilde g}_i=(d/s)\,Q(\tilde g_i)$ for $i\in S$ and $0$ otherwise; output
$\hat g:=\Ccal_B(g)=H^\top\hat{\tilde g}$.

\emph{Unbiasedness.} $\Pr[i\in S]=s/d$ gives $\E[(d/s)\mathbf 1\{i\in S\}]=1$; with
$\E_Q[Q(\tilde g_i)]=\tilde g_i$ and $S\perp Q$, $\E[\hat{\tilde g}]=\tilde g$ and $\E[\hat g]=H^\top\tilde g=g$.

\emph{Second moment.} Using $\|\hat g\|^2=\|\hat{\tilde g}\|^2$ and $\Pr[i\in S]=s/d$,
\[
  \E\|\hat g\|^2=\frac{d}{s}\sum_{i=1}^d\E_Q[Q(\tilde g_i)^2]
  \le\frac{d}{s}\sum_{i=1}^d\Big(\tilde g_i^2+\frac{G^2}{2^{2q-2}}\Big)
  =\frac{d}{s}\Big(\|g\|^2+\frac{dG^2}{2^{2q-2}}\Big)
  \le\frac{d}{s}\|g\|^2+\frac{G^2}{s},
\]
the last step by $2^{2q-2}\ge d^2$. Thus $\E\|\hat g\|^2\le\omega_B\|g\|^2+G^2/s$.

\emph{Bits.} The message is exactly $s\,q\le B$ bits ($s$ indices into a grid of $2^q$ points); no scale
is sent. If $B\ge qd$ then $s=d$, $\omega_B=1$; otherwise $s=\lfloor B/q\rfloor\ge B/(2q)$, so
$\omega_B=d/s\le 2qd/B$. This gives the stated $\omega_B$.
\end{proof}

\begin{proof}[Proof of Theorem~\ref{thm:upper}]
Unbiasedness is the tower property:
$\E[\hat g_t\mid x_t]=\E[\E[\Ccal_B(g_t)\mid g_t]\mid x_t]=\E[g_t\mid x_t]=\nabla f(x_t)$ (the compressor
applies since $\|g_t\|\le G$). For the second moment, Lemma~\ref{lem:compress} gives
$\E[\|\hat g_t\|^2\mid g_t]\le\omega_B\|g_t\|^2+G^2/s$; taking $\E[\cdot\mid x_t]$ and using
$\E[\|g_t\|^2\mid x_t]=\|\nabla f(x_t)\|^2+\E[\|\xi_t\|^2\mid x_t]\le\|\nabla f(x_t)\|^2+\sigma^2 d$,
\[
  \E[\|\hat g_t\|^2\mid x_t]\;\le\;\omega_B\|\nabla f(x_t)\|^2+\underbrace{\omega_B\sigma^2 d+G^2/s}_{=:M}.
\]
This is Assumption~4.3 of \citet{bottou2018optimization} with $M_G=\omega_B$, $\mu_G=1$. Their
Theorem~4.7 (diminishing steps with $\gamma$ chosen so $\eta_0 LM_G\le1$, i.e.\ $\gamma=\Theta(\omega_B L/\mu)$)
gives the displayed $\nu/(\gamma+T)$ bound. Solving $\nu/(\gamma+T)\le\eps^2$ gives
$T=O(\nu/\eps^2)=O\big(\tfrac{LM}{\mu^2\eps^2}+\tfrac{\gamma(f(x_1)-f^\star)}{\eps^2}\big)$. For the
quadratic family with the assumption $G^2=O(\sigma^2 d)$ we have $G^2/s\le\omega_B\sigma^2 d$ (since
$G^2/s=\omega_B G^2/d=O(\omega_B\sigma^2)$), so $M=O(\omega_B\sigma^2 d)$; with $x_1=0$ the initial term
is $O(\gamma\eps^2/\eps^2)=O(\omega_B)$, dominated by $\omega_B\sigma^2 d/\eps^2$ because
$\sigma^2 d/\eps^2\ge1$. Substituting $\omega_B=O(\max\{1,d\log d/B\})$ gives the complexity.
\end{proof}

\section{Numerical Sanity Checks}
\label{app:exp}

The lower bound is a theorem; the experiments below neither can nor attempt to \emph{validate} it.
They are \emph{sanity checks} in the idealized synthetic Gaussian quadratic model---the setting in which
the theory is stated---meant only to confirm that nothing in the bookkeeping is off and to make the
oracle gap concrete. Specifically we check that: the achievability rate of Theorem~\ref{thm:upper} is
observed (Appendix~\ref{app:exp:rate}); the bounded-range oracle gap (L3$'$) manifests as a real
dynamic-range trade-off (Appendix~\ref{app:exp:gap}); the product form, not the $\max$ form, matches the
measured cost (Appendix~\ref{app:exp:product}); three mechanistically different unbiased $B$-bit
compressors all stay on the admissible side of the bound (Appendix~\ref{app:exp:envelope}); bits rather
than dimension cap the transcript Fisher trace (Appendix~\ref{app:exp:fisher}); and the
optimization-to-estimation reduction holds numerically (Appendix~\ref{app:exp:reduction}). None of these
is evidence about non-quadratic or transformer training, about correlated-gradient regimes, or about the
unbounded Gaussian oracle on the achievability side---for that side we impose bounded dynamic range by
clipping before compression, so the algorithm operates on the bounded-range variant of the oracle.

\subsection{The product-form rate is observed}
\label{app:exp:rate}

Setup: $f_\theta(x)=\tfrac12\|x-\theta\|^2$ with $d=128$, $\sigma=1$, target gap $\eps^2=0.1$;
SGD with diminishing step $\eta_t=\beta/(\gamma+t)$, $\beta=2$, $\gamma=\omega_B$; the compressor of
Lemma~\ref{lem:compress} with $q=\lceil\log_2 d\rceil+1=8$ bits per coordinate,
$s=\lfloor B/q\rfloor$ selected coordinates, fixed grid on $[-G,G]$, $G=4\sigma\sqrt d$, and the
bounded-gradient assumption enforced by clipping $g$ to norm $G$ before compression. We sweep
$B\in\{16,32,64,128,256,512,1024\}$, all satisfying the Lemma~\ref{lem:compress} condition
$B\ge 2q=16$, and run $5$ seeds, recording the first iterate that achieves the target gap.

\begin{figure}[h]
  \centering
  \includegraphics[width=0.6\textwidth]{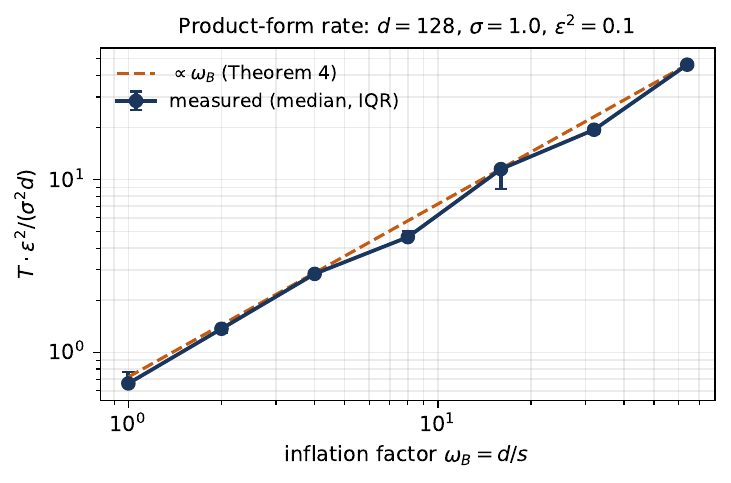}
  \caption{Normalised rounds-to-target track the inflation factor $\omega_B=d/s$ linearly (orange
  dashed line, slope $1$ in log--log), as predicted by Theorem~\ref{thm:upper}'s
  $\widetilde O(\omega_B\sigma^2 d/\eps^2)$. Error bars are interquartile ranges over $5$ seeds.}
  \label{fig:rate}
\end{figure}

Figure~\ref{fig:rate} shows the measured points track the predicted slope-$1$ line cleanly. This is a
consistency check in the model's native setting, not a claim about non-quadratic or
correlated-gradient regimes.

\subsection{The bounded-range oracle gap, illustrated}
\label{app:exp:gap}

We vary $G$ at fixed $B=16$ (so $s=\lfloor B/q\rfloor=2$ and $\omega_B=d/s=64$), $d=128$, target
$\eps^2=0.18$, budget $T=30000$. Two effects compete. For small $G$ almost every Gaussian gradient is
clipped before compression: the compressor of Lemma~\ref{lem:compress} remains unbiased \emph{for the
clipped vector}, but the clipped vector is a biased proxy for the original Gaussian gradient $g_t$, so
the iterate inherits a clipping-induced bias. For large $G$ the fixed-grid spacing $2G/(2^q-1)$ and the
additive variance term $G^2/s$ in $M$ grow, and the algorithm fails to reach the target within $T$.

\begin{figure}[h]
  \centering
  \includegraphics[width=0.6\textwidth]{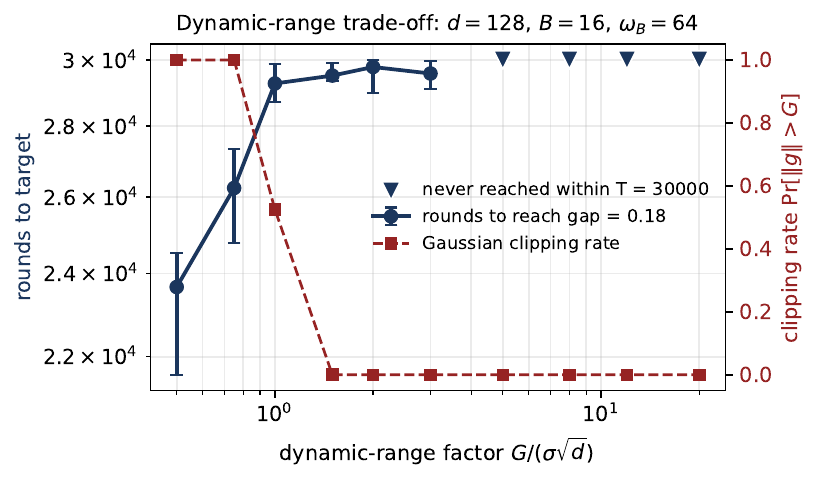}
  \caption{The dynamic-range $G$ is a hidden hyperparameter of the bounded-range oracle. Small
  $G$ (left): the Gaussian clipping rate (red dashed) is $1$, so the clipping operator (not the
  compressor) makes the transmitted gradient a biased proxy for $g_t$. Large $G$ (right, triangles):
  the algorithm never reaches the target gap within $T=30000$. The viable band is narrow,
  $G\approx(1\text{--}3)\,\sigma\sqrt d$. Median over $6$ seeds; bars are IQR.}
  \label{fig:gap}
\end{figure}

The viable $G$-band is narrow---large enough to avoid clipping (so the unbiased bounded-range oracle
applies), small enough to keep the grid fine. Outside the band the algorithm does not silently degrade:
at large $G$ it fails outright. Closing this gap is taken up in Section~\ref{sec:discussion:open}.

\subsection{The product form is not the max form}
\label{app:exp:product}

Section~\ref{sec:product} argues that the elementary max-form bound
$T=\Omega(\max\{\sigma^2 d/\eps^2,\,d/B\})$ (Corollary~\ref{cor:combined}) misses the multiplicative
factor $\max\{1,d/B\}$ that Theorem~\ref{thm:product} supplies. We make this concrete. With $d=128$,
$\sigma=1$, target $\eps^2=0.1$, we sweep $B\in\{16,\dots,1024\}$ and record the median rounds-to-target
over $8$ seeds (the achievability scheme of \S\ref{sec:upper}). Figure~\ref{fig:exp_product} plots the
normalised cost $T\eps^2/(\sigma^2 d)$ against $B$, together with the two competing \emph{scalings}
(constants and the $\log d$ factor dropped, since the point is the $B$-dependence).

\begin{figure}[h]
  \centering
  \includegraphics[width=0.62\textwidth]{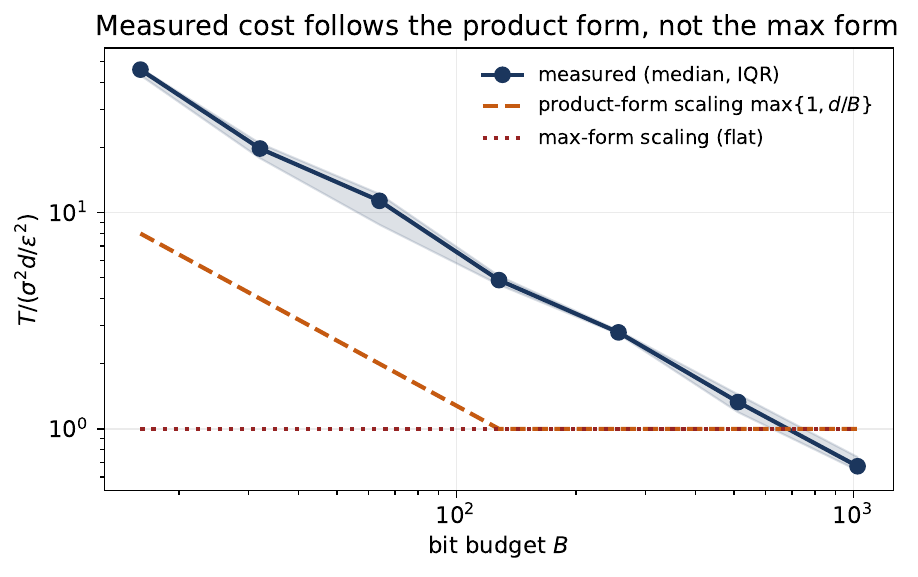}
  \caption{Measured cost tracks the product-form scaling $\max\{1,d/B\}$ (orange dashed) and peels away
  from the flat max-form scaling (red dotted). The vertical offset of the measured points above the
  product scaling is the $\log d$ factor and constants of Theorem~\ref{thm:upper}. Median over $8$ seeds;
  shaded band is IQR.}
  \label{fig:exp_product}
\end{figure}

The max-form scaling is essentially flat (the statistical floor $\sigma^2 d/\eps^2$ dominates the term
$d/B$ for these $B$), so it underestimates the measured cost by a factor that grows as $B$ shrinks:
empirically $45.8\times$ at $B=16$, falling to $\approx 1\times$ once $B\gtrsim d$. The measured points
instead rise parallel to the product-form scaling. This is exactly the deficiency that
\S\ref{sec:product} repairs: a coarse message inflates the effective variance, producing the
multiplicative $\max\{1,d/B\}$ that the max-form cannot see.

\subsection{The lower bound binds every reasonable algorithm}
\label{app:exp:envelope}

Theorem~\ref{thm:product} asserts that \emph{no} $B$-bit algorithm reaches accuracy $\eps$ in fewer than
$\Omega((\sigma^2 d/\eps^2)\max\{1,d/B\})$ rounds. We stress-test this against three mechanistically
different \emph{unbiased} $B$-bit compressors: (i) the rand-$k$ + stochastic-quantization scheme of
Lemma~\ref{lem:compress}; (ii) importance sampling (coordinates drawn with probability $\propto|g_i|$,
Horvitz--Thompson rescaled to stay unbiased) + stochastic quantization; and (iii) a random Hadamard
rotation followed by rand-$k$ + stochastic quantization (the FP4-style coordinate-balancing device).
Each is run to the target $\eps^2=0.1$ over $6$ seeds for $B\in\{32,\dots,512\}$.

\begin{figure}[h]
  \centering
  \includegraphics[width=0.62\textwidth]{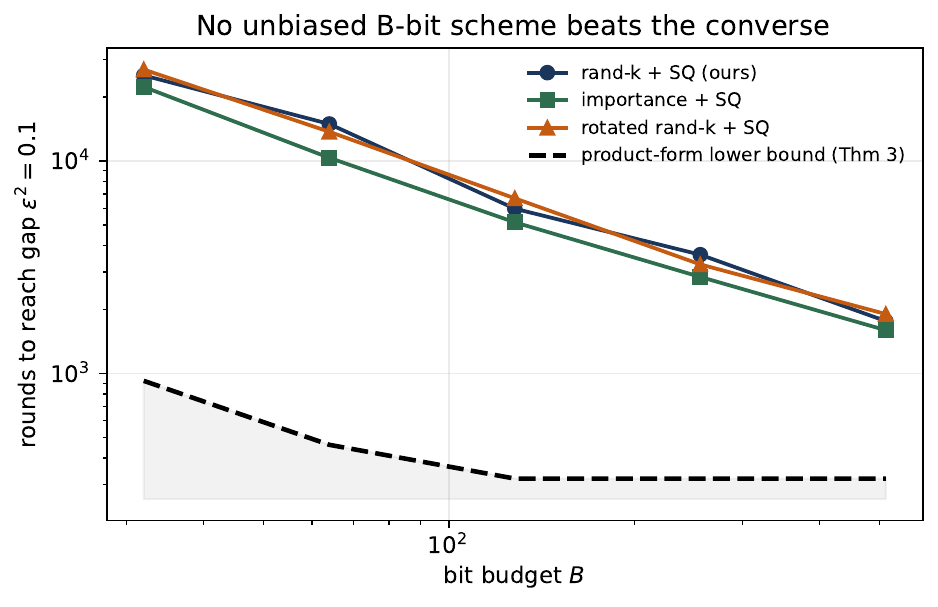}
  \caption{Three unbiased $B$-bit schemes (markers) all lie above the product-form lower bound of
  Theorem~\ref{thm:product} (dashed; shaded region is forbidden) at every $B$. The kink in the bound is
  the $\min\{2\ln2\,B,\,d\}$ crossover. Median over $6$ seeds.}
  \label{fig:exp_envelope}
\end{figure}

All three schemes sit above the lower-bound floor at every $B$ (Figure~\ref{fig:exp_envelope}); none
pierces the converse, and the best tracks it up to the logarithmic factor of
Theorem~\ref{thm:upper}. This is consistent with the bound being a property of the $B$-bit oracle rather
than of any particular compressor.

\subsection{Bits, not dimension, limit Fisher information}
\label{app:exp:fisher}

The engine of Theorem~\ref{thm:product} is Lemma~\ref{lem:fishertranscript}: a $B$-bit message of a
Gaussian observation carries Fisher trace at most $\min\{2\ln2\cdot B,\,d\}/\sigma^2$. Both branches are
verified exactly (no Monte-Carlo) on the scalar/independent Gaussian-location model in
Figure~\ref{fig:exp_fisher}: $B$ independent one-bit signs give Fisher trace growing linearly in $B$
with slope $2/\pi$, safely below the bound's slope $2\ln2$; a single coordinate quantized at $b$ bits
saturates at $1/\sigma^2$ regardless of $b$. The $\min\{2\ln2\,B,\,d\}$ is precisely the crossover
between these two regimes---the mechanism by which dimension $d$ is replaced by bits $B$.

\begin{figure}[h]
  \centering
  \includegraphics[width=0.6\textwidth]{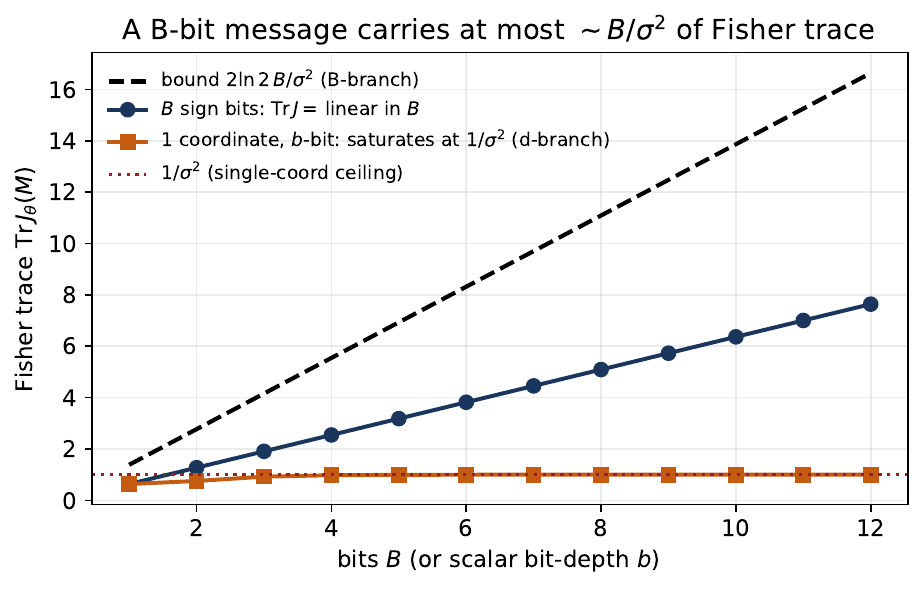}
  \caption{Exact Fisher trace of quantized Gaussian-location channels. The $B$-branch (independent
  sign bits) grows linearly under the $2\ln2\,B/\sigma^2$ bound; the $d$-branch (one coordinate, $b$
  bits) saturates at the single-coordinate ceiling $1/\sigma^2$.}
  \label{fig:exp_fisher}
\end{figure}

\subsection{The reduction is exact}
\label{app:exp:reduction}

Finally we confirm Lemma~\ref{lem:reduction} numerically. The optimization view (SGD on
$f_\theta(x)=\tfrac12\|x-\theta\|^2$ with a compressed gradient) and the compressed-Gaussian-mean-estimation
view are the same stochastic recursion; run on identical noise they coincide to machine precision
(Figure~\ref{fig:exp_reduction}, left). The reduction's premise---that the query carries no
information---is shown on the right: the encoder's observation $Y_t=x_t-g_t$ has the same
$\Ncal(\theta,\sigma^2 I)$ law whether the query $x_t$ is fixed at the origin, random, or drifting, so
the empirical $\|\bar Y-\theta\|$ is essentially identical ($0.051$, $0.055$, $0.057$) across the three
strategies.

\begin{figure}[h]
  \centering
  \includegraphics[width=0.92\textwidth]{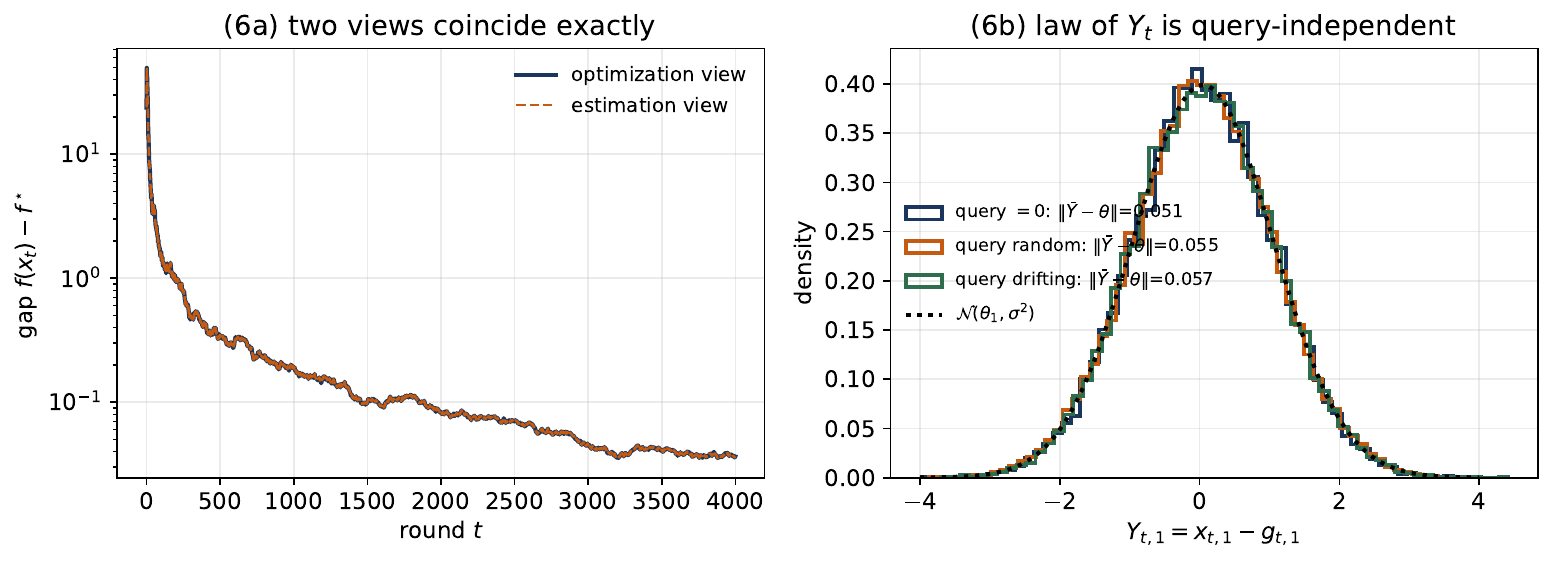}
  \caption{Left: the optimization-view and estimation-view trajectories coincide exactly (max difference
  $0$). Right: the law of the encoder's observation $Y_t=x_t-g_t$ is $\Ncal(\theta,\sigma^2 I)$
  independent of the query strategy, confirming the query is informationally inert.}
  \label{fig:exp_reduction}
\end{figure}

\FloatBarrier

\section{Relation to the Fisher-Information Route and Prior Distributed-Estimation Bounds}
\label{app:source}

The product-form bound (Theorem~\ref{thm:product}) is obtained through the Fisher-information route of
\citet{barnes2021fisher}. We record here how this relates to the distributed-estimation literature and
why we use the Fisher route rather than a direct mutual-information/SDPI argument.

\begin{itemize}[leftmargin=1.4em,itemsep=2pt]
  \item \citet{braverman2016ddpi} prove a \emph{distributed data-processing inequality} for decentralized
  estimation under the Gaussian location model, yielding a communication-vs-error tradeoff; their bound
  is stated through an SDPI constant and requires a bounded-likelihood-ratio assumption, which for the
  Gaussian model forces a truncation argument.
  \item \citet{han2018geometric} give a geometric / Fisher-information bound showing the communication
  constraint reduces the effective sample size in the small-bit regime, for general (including dense)
  families in the sequential/blackboard model; \citet{barneshanozgur2020} recover and extend these
  results explicitly through Fisher information.
  \item \citet{barnes2021fisher} prove the cleanest form for our purposes: if the score is sub-Gaussian
  (which the Gaussian location score is, exactly), the transcript Fisher trace is at most
  $2N^2$ times the transcript mutual information (their Theorem~1 and interactive Corollary~1), with the
  optimal preconstant $2/\sigma^2$ for the Gaussian model (their Corollary~2), and combine it with the
  multivariate van Trees inequality \citep{gilllevit1995} to obtain minimax lower bounds (their
  Corollary~3). Crucially this applies to the Gaussian model \emph{directly}, with no truncation.
\end{itemize}

\noindent\textbf{Why the Fisher route and not a mutual-information/SDPI argument.} A natural alternative route
bounds the product form through a transcript mutual-information contraction
$I(V;M_{1:T}\mid U)\lesssim TB\delta^2/\sigma^2$ for the hypercube prior. As Appendix~\ref{app:partial}
documents, the $\chi^2$ version of this contraction is controllable (it equals a maximal correlation and
is $O(\mathrm{SNR})$), but the mutual-information (KL) version is provably no smaller and is not bounded from above by the same
argument; moreover the naive Fisher-to-mutual-information conversion
$I\le\tfrac12\Tr(\Sigma_\pi I_\Pi)$ \emph{fails} at finite SNR because it omits the prior-information term
that van Trees supplies. The van Trees inequality resolves exactly this: its denominator carries the
prior Fisher information $\pi^2 d/\delta^2$ alongside $\Tr I_\Pi(\theta)$, which is what the naive
conversion dropped. This is why Theorem~\ref{thm:product} is unconditional while the
mutual-information route of Appendix~\ref{app:partial} remains partial.

\section{The Gaussian Fisher-Trace Bound, and an Alternative Mutual-Information Route}
\label{app:partial}

This appendix has two purposes. First (\S\ref{app:d:6}) it gives a short, self-contained proof of the Gaussian
single-round case of the Fisher-trace bound used in Lemma~\ref{lem:fishertranscript}, via a
Donsker--Varadhan centroid inequality. Second, it records an alternative route to the product-form bound
through a transcript \emph{mutual-information} contraction---the natural SDPI-style argument---and
documents precisely why that route is harder than the Fisher route of the main text, so that a reader
who prefers the mutual-information formulation can see exactly where it stalls. Throughout,
$\mathrm{SNR}:=\delta^2/\sigma^2$ and $\mathrm{SNR}\le1$. The target inequality of this alternative route
is the transcript contraction
\begin{equation}
  I(V;M_{1:T}\mid U)\;\le\;C_\star(d)\,\frac{TB\,\delta^2}{\sigma^2}\qquad(\text{bits}),\qquad C_\star(d)=\widetilde O(1),
  \tag{$\star$}\label{eq:sdpi}
\end{equation}
which would give the product form by the Fano argument of \S\ref{sec:product}; the main text instead
avoids \eqref{eq:sdpi} via van Trees.

\subsection{Single-coordinate facts (proved)}\label{app:d:1}

\begin{lemma}[Single-coordinate $\chi^2$ and mutual information]
\label{lem:single}
Let $V_1\sim\Unif\{\pm1\}$ and $Y_1=\delta V_1+\Ncal(0,\sigma^2)$. Then
\[
  \chi^2\!\big(\Ncal(\delta,\sigma^2)\,\big\|\,\Ncal(-\delta,\sigma^2)\big)=e^{4\delta^2/\sigma^2}-1
  =4\,\mathrm{SNR}+O(\mathrm{SNR}^2),
\]
and
\[
  I(V_1;Y_1)\le\tfrac12\log_2\!\big(1+\mathrm{SNR}\big)\le\frac{\mathrm{SNR}}{2\ln2}\quad(\text{bits}).
\]
\end{lemma}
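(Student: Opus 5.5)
The statement has two independent parts, and I would prove them separately by direct computation. Neither needs any earlier machinery beyond the closed form of the Gaussian $\chi^2$-divergence and the maximum-entropy bound already used in Proposition~\ref{prop:elem}.

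\emph{The $\chi^2$ identity.} Write $P=\Ncal(\delta,\sigma^2)$ and $Q=\Ncal(-\delta,\sigma^2)$, so that $\chi^2(P\Vert Q)=\int p^2/q\,dy-1$.

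\begin{itemize}[leftmargin=1.4em,itemsep=2pt]
  \item The exponent of $p^2/q$ is $-\big[2(y-\delta)^2-(y+\delta)^2\big]/(2\sigma^2)$.
  \item The bracket expands to $y^2-6\delta y+\delta^2=(y-3\delta)^2-8\delta^2$.
  \item Hence $p^2/q$ equals the $\Ncal(3\delta,\sigma^2)$ density times $e^{8\delta^2/(2\sigma^2)}=e^{4\delta^2/\sigma^2}$. The remaining Gaussian integral is $1$, giving $\chi^2=e^{4\delta^2/\sigma^2}-1$.
  \item Equivalently, one can cite the standard formula $\chi^2(\Ncal(\mu_1,\sigma^2)\Vert\Ncal(\mu_2,\sigma^2))=e^{(\mu_1-\mu_2)^2/\sigma^2}-1$ with $(\mu_1-\mu_2)^2=4\delta^2$.
  \item Taylor expansion of $e^x$ at $x=4\,\mathrm{SNR}$ gives $4\,\mathrm{SNR}+O(\mathrm{SNR}^2)$. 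Because $\mathrm{SNR}\le1$ throughout this appendix, the remainder is explicitly at most $8e^4\,\mathrm{SNR}^2$.
\end{itemize}

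\emph{The mutual-information bound.} This repeats the first step of the proof of Proposition~\ref{prop:elem}, and I would simply cite it.

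\begin{itemize}[leftmargin=1.4em,itemsep=2pt]
  \item Start from $I(V_1;Y_1)=h(Y_1)-h(Y_1\mid V_1)$, where $h(Y_1\mid V_1)=\tfrac12\log(2\pi e\sigma^2)$.
  \item $Y_1$ is a symmetric two-component mixture with mean zero and $\mathrm{Var}(Y_1)=\sigma^2+\delta^2$. The maximum-entropy property of the Gaussian gives $h(Y_1)\le\tfrac12\log(2\pi e(\sigma^2+\delta^2))$.
  \item Subtracting yields $I(V_1;Y_1)\le\tfrac12\log_2(1+\mathrm{SNR})$.
  \item The elementary inequality $\ln(1+x)\le x$, rewritten as $\log_2(1+x)\le x/\ln2$, gives the final $\mathrm{SNR}/(2\ln2)$ bits.
\end{itemize}

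\emph{Main obstacle.} There is no real obstacle here. The only point needing care is the completing-the-square algebra in the $\chi^2$ step, where a factor of $2$ in the exponent is easy to drop; I would double-check it against the standard closed form.
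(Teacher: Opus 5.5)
Your proposal is correct and follows the paper's own proof. The paper cites the equal-variance Gaussian $\chi^2$ formula with mean gap $2\delta$ and reuses the maximum-entropy step of Proposition~\ref{prop:elem}. Your completing-the-square derivation and explicit remainder bound $8e^4\,\mathrm{SNR}^2$ only make the cited closed form self-contained.
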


\begin{proof}
For equal-variance Gaussians, $\chi^2(\Ncal(\mu_1,\sigma^2)\|\Ncal(\mu_0,\sigma^2))=e^{(\mu_1-\mu_0)^2/\sigma^2}-1$;
here $\mu_1-\mu_0=2\delta$. The mutual-information bound is Proposition~\ref{prop:elem}: matching the
variance $\mathrm{Var}(Y_1)=\sigma^2+\delta^2$ in the maximum-entropy inequality gives
$I(V_1;Y_1)\le\tfrac12\log_2(1+\mathrm{SNR})$, and $\log_2(1+x)\le x/\ln2$.
\end{proof}

\subsection{The product bound is tight (proved)}\label{app:d:2}

\begin{proposition}[Achievability and tightness of the SNR-linear rate]
\label{prop:tight}
For the $d$-coordinate instance with $1\le B\le d$ (so the coordinates $Y_1,\dots,Y_B$ exist), the
$B$-bit message $M=(\sign Y_1,\dots,\sign Y_B)$ satisfies
\[
  I(V;M)=B\,\big(1-\hb(\Phi(-\delta/\sigma))\big)\;=\;c(\mathrm{SNR})\,\cdot\,\mathrm{SNR}\,\cdot\,B,
  \qquad c(\mathrm{SNR})\xrightarrow[\mathrm{SNR}\to0]{}\frac{1}{\pi\ln2}\approx0.459,
\]
where $\Phi$ is the standard normal CDF. Hence, in the single-round, low-SNR, $1\le B\le d$ regime, the
right-hand side of the target inequality~\eqref{eq:sdpi} is attained up to a constant: \emph{if} the converse holds,
then its constant is $\Theta(1)$; equivalently, the target upper bound in~\eqref{eq:sdpi},
if true, is sharp in its $B$ and $\mathrm{SNR}$ dependence.
\end{proposition}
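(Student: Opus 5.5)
The plan is to compute $I(V;M)$ exactly by exploiting independence, then expand it at low SNR.

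\emph{Step 1: factorization.} Under $V\sim\Unif(\{\pm1\}^d)$ the coordinates $(V_j,Y_j)$ are i.i.d. pairs, and $M_j=\sign Y_j$ depends only on the $j$-th pair. Hence the pairs $(V_j,M_j)$ are mutually independent, and $V_{B+1},\dots,V_d$ are independent of $M$. The chain rule then gives
\[
I(V;M)=I(V_{1:B};M)=\sum_{j=1}^B I(V_j;M_j)=B\,I(V_1;\sign Y_1).
\]

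\emph{Step 2: the single-coordinate channel.} Given $V_1=v$, we have $\sign Y_1\neq v$ exactly when the noise exceeds $\delta$ in the direction opposite to $v$. Both signs give the same probability $p:=\Phi(-\delta/\sigma)$; the tie-break $\sign(0)=+1$ is a null event. So $V_1\to\sign Y_1$ is a binary symmetric channel with crossover $p$ and uniform input. Its output is therefore uniform, and
\[
I=1-\hb(p).
\]
This yields the claimed exact formula.

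\emph{Step 3: the low-SNR constant.} Define $c(\mathrm{SNR}):=(1-\hb(p))/\mathrm{SNR}$; this is a definition, so the displayed equality holds trivially and the only content is the limit. Put $x=\delta/\sigma=\sqrt{\mathrm{SNR}}$.
\begin{itemize}
\item Expanding the normal CDF, $p=\tfrac12-\tfrac{x}{\sqrt{2\pi}}+O(x^3)$.
\item Around $p=\tfrac12$, with $p=\tfrac12-u$, we have $1-\hb(\tfrac12-u)=\tfrac{2u^2}{\ln2}+O(u^4)$ in bits.
\item Substituting $u=x/\sqrt{2\pi}+O(x^3)$ gives
\[
1-\hb(p)=\frac{2}{\ln2}\cdot\frac{x^2}{2\pi}+O(x^4)=\frac{\mathrm{SNR}}{\pi\ln2}+O(\mathrm{SNR}^2).
\]
\end{itemize}
Hence $c(\mathrm{SNR})\to1/(\pi\ln2)\approx0.459$.

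\emph{Step 4: the tightness consequence.} The message uses exactly $B$ bits in one round, and the right-hand side of \eqref{eq:sdpi} with $T=1$ is $C_\star B\,\mathrm{SNR}$. Our scheme attains $I(V;M)=\Theta(B\,\mathrm{SNR})$, so any valid $C_\star$ must satisfy $C_\star\ge c(\mathrm{SNR})\to1/(\pi\ln2)$. This shows the $B\cdot\mathrm{SNR}$ dependence of \eqref{eq:sdpi} cannot be improved, and that its constant, if the converse holds, is $\Theta(1)$.

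The only step needing care is the Taylor bookkeeping in Step 3, specifically checking that the $O(x^3)$ term in $p$ contributes only $O(x^4)$ to $1-\hb(p)$. I would confirm this via the convexity bound $1-\hb(\tfrac12-u)\ge 2u^2/\ln2$, together with the analytic expansion of $\hb$ near $\tfrac12$.
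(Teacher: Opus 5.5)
Your proposal is correct and follows the paper's proof essentially step for step. Independence reduces $I(V;M)$ to $B$ copies of a binary symmetric channel with crossover $p=\Phi(-\delta/\sigma)$, and the constant $1/(\pi\ln2)$ comes from the same two Taylor expansions (of $\Phi$ near $0$ and of $\hb$ near $\tfrac12$). Two points in your version are slightly more explicit than the paper's proof: you track the $O(x^3)$ correction in $p$ through to an $O(\mathrm{SNR}^2)$ error, and your Step 4 derives $C_\star\ge c(\mathrm{SNR})$ directly, whereas the paper leaves that tightness consequence implicit.
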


\begin{proof}
The coordinates are independent, so $I(V;M)=\sum_{j\le B}I(V_j;\sign Y_j)=B\,(1-\hb(p))$ with
$p=\Prob[\sign Y_1\neq V_1]=\Phi(-\delta/\sigma)$. As $\mathrm{SNR}\to0$, $p=\tfrac12-\tfrac{\delta/\sigma}{\sqrt{2\pi}}+O(\mathrm{SNR}^{3/2})$,
and $1-\hb(\tfrac12-\epsilon)=\tfrac{2}{\ln2}\epsilon^2+O(\epsilon^4)$ with $\epsilon=\tfrac{\delta/\sigma}{\sqrt{2\pi}}$,
giving $1-\hb(p)=\tfrac{1}{\pi\ln2}\,\mathrm{SNR}+O(\mathrm{SNR}^2)$.
\end{proof}

\noindent Thus the only question is the \emph{converse}: that \emph{no} $B$-bit function of
$Y_{1:d}$ (across $T$ adaptive rounds) extracts more than $O(\mathrm{SNR}\cdot B)$ per round.

\subsection{Why elementary arguments stop at the $\max$-form}\label{app:d:3}

Two ceilings are elementary: $I(V;M_{1:T}\mid U)\le TB$ (Theorem~\ref{thm:comm}) and
$I(V;M_{1:T}\mid U)\le\tfrac{Td\,\mathrm{SNR}}{2\ln2}$ (Proposition~\ref{prop:elem}). Their minimum yields only
Corollary~\ref{cor:combined}. In the regime $1\le B\le d$ the binding ceiling is the communication one,
$\min\{TB,\,Td\,\mathrm{SNR}\}=TB$ (when $B\le d\,\mathrm{SNR}$), whereas the product target is
$TB\,\mathrm{SNR}$, smaller by a factor $\mathrm{SNR}^{-1}$. Elementary arguments therefore overestimate
the surviving information by a factor $\Theta(\mathrm{SNR}^{-1})$: each transmitted bit is treated as
fully informative about $V$, but a bit computed from the noise-diluted $Y$ can carry only an
$\mathrm{SNR}$-fraction of signal information. Closing this gap is precisely the content of the
distributed data-processing inequality and is not elementary.

\subsection{A single-round contraction: what the maximal-correlation route does and does not give}\label{app:d:4}

Reversing the Markov chain turns the ``downstream SDPI'' obstacle into a question about the backward
channel $P_{V\mid Y}$, which we can analyze partially. The outcome is an honest one: the
\emph{$\chi^2$} contraction is controlled by maximal correlation and is $O(\mathrm{SNR})$, but the
\emph{KL/mutual-information} contraction we actually need is provably no smaller and we do not bound it
from above. We state the rigorous facts and the precise gap.

\begin{lemma}[Rigorous ingredients, single round]
\label{lem:singleround}
Let $V\sim\Unif(\{\pm1\}^d)$, $Y=\delta V+\Ncal(0,\sigma^2 I_d)$, $U\perp(V,Y)$, and $M=Q(Y,U)\in\{0,1\}^B$.
Write $\rho_m(V,Y)$ for the Hirschfeld--Gebelein--R\'enyi maximal correlation, and let
$\eta_{\mathrm{KL}}(P_Y,P_{V\mid Y})$ and $\eta_{\chi^2}(P_Y,P_{V\mid Y})$ denote the input-distribution-dependent
KL- and $\chi^2$-contraction coefficients of the backward channel $P_{V\mid Y}$ taken at the specific
input marginal $P_Y$ (i.e.\ $\eta_f=\sup_{Q_Y\neq P_Y}D_f(Q_YP_{V\mid Y}\Vert P_YP_{V\mid Y})/D_f(Q_Y\Vert P_Y)$).
Then:
\begin{enumerate}[leftmargin=1.6em,itemsep=2pt]
  \item \emph{(Reversal and SDPI.)} $V\to Y\to M$ is equivalent to $M\to Y\to V$, and the strong
  data-processing inequality \citep{polyanskiy2017sdpi} for the backward channel $P_{V\mid Y}$ gives
  $I(V;M\mid U)\le\eta_{\mathrm{KL}}(P_Y,P_{V\mid Y})\,I(Y;M\mid U)\le\eta_{\mathrm{KL}}(P_Y,P_{V\mid Y})\,B$,
  using $I(Y;M\mid U)\le H(M\mid U)\le B$ and $U\perp(V,Y)$.
  \item \emph{($\chi^2$ contraction $=$ maximal correlation.)} $\eta_{\chi^2}(P_Y,P_{V\mid Y})=\rho_m^2(V,Y)$,
  and maximal correlation tensorizes over the (identical, independent) coordinates,
  $\rho_m^2(V,Y)=\max_j\rho_m^2(V_j,Y_j)=\rho_m^2(V_1,Y_1)$.
  \item \emph{(Scalar bound.)} $\rho_m^2(V_1,Y_1)=\E[\tanh^2(\delta Y_1/\sigma^2)]\le\frac{\delta^2}{\sigma^4}\E[Y_1^2]
  =\mathrm{SNR}(1+\mathrm{SNR})\le2\,\mathrm{SNR}$ for $\mathrm{SNR}\le1$, using $\tanh^2(x)\le x^2$ and the
  posterior mean $\E[V_1\mid Y_1]=\tanh(\delta Y_1/\sigma^2)$.
\end{enumerate}
\end{lemma}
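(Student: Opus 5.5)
The three items are proved separately, in order. Throughout we condition on $U=u$, since $U\perp(V,Y)$ and every quantity is an average over $u$.

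\emph{Item 1.} For a Markov chain $V\to Y\to M$, the reversed chain $M\to Y\to V$ is also Markov, because conditional independence of $V$ and $M$ given $Y$ is symmetric. We apply the SDPI of \citet{polyanskiy2017sdpi} to the backward channel $P_{V\mid Y}$ with input marginal $P_Y$. For each value $m$, the law $P_{Y\mid M=m}$ is pushed through $P_{V\mid Y}$ to give $P_{V\mid M=m}$, so
\[
\KL(P_{V\mid M=m}\Vert P_V)\le\eta_{\mathrm{KL}}\,\KL(P_{Y\mid M=m}\Vert P_Y).
\]
Averaging over $m$ gives $I(V;M\mid U=u)\le\eta_{\mathrm{KL}}\,I(Y;M\mid U=u)$. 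The coefficient does not depend on $u$, because $U\perp(V,Y)$ leaves $P_Y$ and $P_{V\mid Y}$ unchanged. We then finish with $I(Y;M\mid U)\le H(M\mid U)\le B$.

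\emph{Item 2.} The identity $\eta_{\chi^2}(P_Y,P_{V\mid Y})=\rho_m^2(V,Y)$ is the standard characterization: $\chi^2$ contraction at a fixed input equals the squared second singular value of the conditional-expectation operator, i.e.\ the squared maximal correlation of the joint law $P_{VY}$. This holds in either direction of the channel, since $\rho_m$ is symmetric in its arguments. For tensorization, the pairs $(V_j,Y_j)$ are i.i.d.\ across $j$. Witsenhausen's tensorization theorem for independent pairs then gives $\rho_m(V,Y)=\max_j\rho_m(V_j,Y_j)=\rho_m(V_1,Y_1)$.

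\emph{Item 3.} This is the only computation, and I expect it to be the main (mild) obstacle. Because $V_1$ is binary, the $L^2_0(P_{V_1})$ space is one-dimensional, spanned by $V_1$ itself. Hence $\rho_m^2(V_1,Y_1)=\mathrm{Var}(\E[V_1\mid Y_1])/\mathrm{Var}(V_1)=\E[\E[V_1\mid Y_1]^2]$. The posterior mean comes from the likelihood ratio
\[
\frac{e^{\delta y/\sigma^2}}{e^{-\delta y/\sigma^2}}=e^{2\delta y/\sigma^2},
\]
which yields $\E[V_1\mid Y_1=y]=\tanh(\delta y/\sigma^2)$. We then apply $\tanh^2 x\le x^2$ and $\E Y_1^2=\delta^2+\sigma^2$ to get $\mathrm{SNR}(1+\mathrm{SNR})\le2\,\mathrm{SNR}$ when $\mathrm{SNR}\le1$.
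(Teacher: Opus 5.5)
Your proposal is correct and takes the same route as the paper, whose justifications appear inline in the statement: Markov reversal plus the Polyanskiy--Wu SDPI applied conditionally on $U$ with $I(Y;M\mid U)\le H(M\mid U)\le B$, the standard identity $\eta_{\chi^2}=\rho_m^2$ with Witsenhausen tensorization over the i.i.d.\ pairs $(V_j,Y_j)$, and the $\tanh$ posterior mean combined with $\tanh^2x\le x^2$ and $\E Y_1^2=\sigma^2+\delta^2$. Your argument that $L^2_0(P_{V_1})$ is one-dimensional, giving $\rho_m^2(V_1,Y_1)=\E[\E[V_1\mid Y_1]^2]$, usefully fills in a step the paper leaves implicit.
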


The following three statements are standard or directly verified. The difficulty is that they do
\emph{not} compose into the bound we want. The SDPI in item~1 carries the \emph{KL} contraction
coefficient $\eta_{\mathrm{KL}}(P_Y,P_{V\mid Y})$, whereas items~2--3 bound the \emph{$\chi^2$} coefficient
$\eta_{\chi^2}=\rho_m^2$. Under the distribution-dependent SDPI convention of
\citet{polyanskiy2017sdpi}, the local (small-perturbation) KL contraction at $P_Y$ equals the $\chi^2$
contraction, so the global coefficient satisfies $\eta_{\mathrm{KL}}(P_Y,P_{V\mid Y})\ge\eta_{\chi^2}(P_Y,P_{V\mid Y})=\rho_m^2$;
thus $\rho_m^2$ is a \emph{lower} bound on the relevant coefficient, not an upper bound. Concretely, the
operational identity $\eta_{\mathrm{KL}}(P_Y,P_{V\mid Y})=\sup_{M:\,V-Y-M} I(V;M)/I(Y;M)$ is realized by
skewed low-entropy (rare-event) messages, for which the ratio $I(V;M)/I(Y;M)$ strictly exceeds
$\rho_m^2$; thus the clean inequality $I(V;M)\le\rho_m^2\,H(M)$ is \emph{false}. What survives is the
implication chain $I(V;M\mid U)\le\eta_{\mathrm{KL}}(P_Y,P_{V\mid Y})\,B$, which yields the single-round
product bound \emph{iff} $\eta_{\mathrm{KL}}(P_Y,P_{V\mid Y})=O(\mathrm{SNR})$.

\begin{conjecture}[Single-round KL contraction is $O(\mathrm{SNR})$]
\label{conj:singleround}
For the binary hypercube Gaussian-location channel with $\mathrm{SNR}\le1$,
$\eta_{\mathrm{KL}}(P_Y,P_{V\mid Y})\le C\,\mathrm{SNR}$ for an absolute constant $C$. Under this conjecture,
$I(V;M\mid U)\le C\,\mathrm{SNR}\cdot B$, i.e.\ the single-round ($T=1$) case of
the target inequality~\eqref{eq:sdpi} holds with $C_\star(d)=C$ (no $\mathrm{polylog}$ at $T=1$).
\end{conjecture}

\noindent We do not prove Conjecture~\ref{conj:singleround}. We know $\eta_{\mathrm{KL}}\ge\rho_m^2=\Theta(\mathrm{SNR})$
(item~2--3), so if true the rate is the right one; bounding $\eta_{\mathrm{KL}}$ \emph{from above} by
$O(\mathrm{SNR})$ is an instance of the same Gaussian-channel strong-data-processing estimate that the
distributed-DPI literature establishes by more involved means. We flag this honestly: the
maximal-correlation argument settles the $\chi^2$ contraction but not the mutual-information contraction,
and therefore does \emph{not} by itself make even the single-round bound self-contained.

\subsection{The multi-round gap}\label{app:d:5}

Even granting Conjecture~\ref{conj:singleround}, the multi-round bound does not follow by naive
telescoping. By the chain rule $I(V;M_{1:T}\mid U)=\sum_t I(V;M_t\mid M_{1:t-1},U)$, and conditioned on a
prefix the round-$t$ term involves $V$ drawn from the posterior $\mu_m:=P(V\mid M_{1:t-1}=m,U)$ rather
than the uniform prior. The relevant coefficient is then that of the \emph{posterior} channel, and the
maximal correlation under $\mu_m$ need not stay $O(\mathrm{SNR})$: if a prefix could force the coordinates
of $V$ to be strongly correlated, a fresh Gaussian look would behave like many looks at one bit and the
per-round contraction would degrade. Controlling this is exactly the inductive information-cost step of
the distributed-DPI argument, which we do not carry out.

\begin{remark}[Numerical observations, not used in any proof]
\label{rem:numerics}
The following exploratory computations are not invoked anywhere above. For \emph{product} posteriors
(independent, possibly tilted coordinates) the maximal correlation appears to \emph{decrease} as the
posterior concentrates (e.g.\ at $\mathrm{SNR}=0.1$ it falls from $\approx0.09$ toward $0.01$ as a
coordinate's posterior mass moves from $0.5$ to $0.97$). For \emph{strongly correlated} posteriors it
appears to grow (e.g.\ at $d=8$, $\mathrm{SNR}=0.1$ it rises from $\approx0.09$ for independent
coordinates to $\approx0.44$ when all coordinates are forced equal, exceeding $2\,\mathrm{SNR}=0.2$).
These observations suggest that multi-round safety hinges on whether a bounded-bit prefix can induce
strongly correlated posteriors; we make no formal claim and use none of this in any proof.
\end{remark}

\noindent Hence multi-round safety reduces to a single inductive statement: \emph{a transcript of
$(t-1)B$ bits cannot drive the posterior $\mu_m$ to a maximal correlation with a fresh Gaussian look
exceeding $O(\mathrm{SNR})$.} This is precisely the content the distributed data-processing inequality of
\citet{braverman2016ddpi} supplies through its information-cost induction, and is the one ingredient we
do not establish. We note that an earlier draft of this analysis incorrectly claimed the per-round
contraction is automatically uniform over prefixes ``because the SDPI coefficient is a channel
property''; the maximal-correlation computation above shows that claim is false---the coefficient
depends on the (posterior) input distribution---and isolates the genuine difficulty.

\subsection{A rigorous compression bound via Gaussian centroids}\label{app:d:6}

The maximal-correlation route of \S\ref{app:d:4} stalls because it targets the contraction coefficient directly.
A different route proves cleanly the statement that ``$B$ bits carry only $B$ dimensions' worth of
Gaussian score energy,'' which is the mechanism by which $d$ is replaced by $B$. This does not by itself
close the target inequality~\eqref{eq:sdpi}, but it isolates the remaining gap more sharply.

\begin{lemma}[Gaussian centroid bound]
\label{lem:centroid}
Let $Z\sim\Ncal(0,I_d)$ and let $M$ be any random variable with $Z\to M$ (a channel output of $Z$).
Then $\E\big\|\E[Z\mid M]\big\|^2\le 2\,I(Z;M)\le 2\,H(M)$ (nats).
\end{lemma}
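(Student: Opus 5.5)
The plan is a Donsker--Varadhan change of measure applied conditionally on each value of $M$. We may assume $H(M)<\infty$ and $I(Z;M)<\infty$; otherwise there is nothing to prove. For each $m$ with $P(M=m)>0$ (or, for general $M$, a regular conditional law), write $P_m:=P_{Z\mid M=m}$, and let $\mu_m:=\E[Z\mid M=m]$ be its centroid. The target is the per-message inequality
\[
\tfrac12\|\mu_m\|^2\;\le\;\KL(P_m\Vert\Ncal(0,I_d)).
\]
Averaging this over $M$ gives $\tfrac12\E\|\E[Z\mid M]\|^2\le\E_M\KL(P_{Z\mid M}\Vert P_Z)=I(Z;M)$, with everything in nats. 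The second inequality of the lemma is then the standard bound $I(Z;M)\le H(M)$ for discrete $M$; for $M$ taking values in $\{0,1\}^B$ this in turn gives at most $B\ln 2$ nats.

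To prove the per-message inequality, apply Donsker--Varadhan with the reference measure $Q=\Ncal(0,I_d)$ and the linear test function $\phi(z)=\langle u,z\rangle$, $u\in\R^d$. This gives
\[
\KL(P_m\Vert Q)\;\ge\;\E_{P_m}\langle u,Z\rangle-\log\E_Q e^{\langle u,Z\rangle}\;=\;\langle u,\mu_m\rangle-\tfrac12\|u\|^2 .
\]
Choosing $u=\mu_m$ yields $\KL(P_m\Vert Q)\ge\tfrac12\|\mu_m\|^2$. The one integrability check is that $\E_{P_m}|\langle u,Z\rangle|<\infty$. This follows from $\KL(P_m\Vert Q)<\infty$, which holds for almost every $m$ when $I(Z;M)<\infty$, together with the Gaussian tails of $Q$ via the Young/entropy inequality; it can also be handled by truncating $\phi$ and passing to the limit by monotone convergence. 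The same argument in fact shows that the centroid is well defined.

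The main obstacle, such as it is, is keeping the measure-theoretic bookkeeping clean when $M$ is not discrete. In that case $H(M)$ should be read as $+\infty$ unless $M$ is discrete, and $I(Z;M)=\E_M\KL(P_{Z\mid M}\Vert P_Z)$ requires a disintegration. Because the statement is only used for $B$-bit messages, I will treat discrete $M$ in full and only remark on the general case. I will also note the scaling: applying the lemma to $Z=(Y-\theta)/\sigma$ turns the centroid bound into the per-round Fisher-trace bound of Lemma~\ref{lem:fishertranscript}, since $\Tr I_M(\theta)=\sigma^{-2}\E\|\E[Z\mid M]\|^2$ for the Gaussian location family.
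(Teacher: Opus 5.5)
Your proposal is correct and is essentially the paper's proof: Donsker--Varadhan against $\Ncal(0,I_d)$ with the linear test function $\langle u,z\rangle$ and $u=\E[Z\mid M=m]$, then averaging via $I(Z;M)=\E_M\KL(P_{Z\mid M}\Vert P_Z)$ with $P_Z=\Ncal(0,I_d)$, then $I(Z;M)\le H(M)$. Your integrability remarks are harmless additions the paper omits; for discrete $M$ they are automatic, since $\E\|Z\|<\infty$ makes every centroid $\E[Z\mid M=m]$ finite.
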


\begin{proof}
For any probability measure $Q\ll\gamma:=\Ncal(0,I_d)$, the Donsker--Varadhan variational formula
\citep[Lemma~6.2.13]{dembozeitouni1998} with the
linear test function $f(z)=\langle a,z\rangle$ (for which $\log\E_\gamma e^{f}=\tfrac12\|a\|^2$) gives
$D(Q\Vert\gamma)\ge\langle a,\E_Q Z\rangle-\tfrac12\|a\|^2$; maximizing over $a$ at $a=\E_Q Z$ yields
$D(Q\Vert\gamma)\ge\tfrac12\|\E_Q Z\|^2$. Apply this with $Q=P_{Z\mid M=m}$ and average:
$I(Z;M)=\E_m D(P_{Z\mid M=m}\Vert\gamma)\ge\tfrac12\E_m\|\E[Z\mid M=m]\|^2$. Finally $I(Z;M)\le H(M)$.
\end{proof}

\begin{lemma}[Bits limit Fisher trace]
\label{lem:fishertrace}
Let $Y=\theta+\sigma Z$ with $Z\sim\Ncal(0,I_d)$, and let $M=Q(Y,U)\in\{0,1\}^B$ with $U\perp(Z)$. Let
$J_\theta(M\mid U)$ be the Fisher information of the message about $\theta$. Then at every $\theta$,
\[
  \Tr J_\theta(M\mid U)\;=\;\frac{1}{\sigma^2}\,\E\big\|\E[Z\mid M,U]\big\|^2\;\le\;\frac{2}{\sigma^2}\,I(Z;M\mid U)\;\le\;\frac{2\ln2}{\sigma^2}\,B .
\]
For an adaptive $T$-round protocol with fresh noise each round, Fisher information adds across rounds, so
$\Tr J_\theta(M_{1:T}\mid U)\le 2\ln2\,TB/\sigma^2$.
\end{lemma}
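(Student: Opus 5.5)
The plan is to establish the identity first, then feed it into the centroid bound of Lemma~\ref{lem:centroid}, and finally sum over rounds.

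\emph{Step 1: the score of the message.} Fix $U=u$. The message has likelihood $p_\theta(m\mid u)=\E_Z[\Ind\{Q(\theta+\sigma Z,u)=m\}]$. Substituting $y=\theta+\sigma z$ gives
\[
p_\theta(m\mid u)=\int \Ind\{Q(y,u)=m\}\,\phi_\sigma(y-\theta)\,dy .
\]
Here only the Gaussian density depends on $\theta$, and we have $\nabla_\theta\phi_\sigma(y-\theta)=\phi_\sigma(y-\theta)(y-\theta)/\sigma^2$. I will differentiate under the integral, which is justified by dominated convergence because the Gaussian tails are integrable against $\|y\|$. This gives
\[
\nabla_\theta\log p_\theta(m\mid u)=\tfrac1\sigma\,\E[Z\mid M=m,U=u].
\]
So the message score is the conditional centroid of the normalized noise, scaled by $1/\sigma$. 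Squaring and averaging over $(M,U)$ yields
\[
\Tr J_\theta(M\mid U)=\sigma^{-2}\,\E\|\E[Z\mid M,U]\|^2 ,
\]
which is the claimed identity. This holds at every $\theta$, since the only role of $\theta$ is to shift the quantizer's input.

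\emph{Step 2: the single-round bound.} Conditionally on $U=u$ and for fixed $\theta$, the message $M$ is a channel output of $Z\sim\Ncal(0,I_d)$, because $Q(\theta+\sigma\cdot,u)$ is just some measurable map of $Z$. Lemma~\ref{lem:centroid} then gives
\[
\E\|\E[Z\mid M,U=u]\|^2\le 2I(Z;M\mid U=u)\ \text{nats}.
\]
Averaging over $u$ gives the middle inequality. For the last step I use $I(Z;M\mid U)\le H(M\mid U)\le B$ bits $=B\ln 2$ nats, which gives $2\ln2\,B/\sigma^2$.

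\emph{Step 3: the multi-round bound.} I will reuse the martingale decomposition already written out in the proof of Lemma~\ref{lem:fishertranscript}. The transcript score is a sum of per-round increments $s_t$. Each increment has zero conditional mean given $(M_{1:t-1},u)$, so the cross terms vanish and
\[
\Tr J_\theta(M_{1:T}\mid u)=\sum_t\E\,\Tr J_\theta(M_t\mid M_{1:t-1},u).
\]
Conditioned on the prefix, $M_t$ is a $B$-bit function of the fresh observation $Y_t=\theta+\sigma Z_t$ and of $(M_{1:t-1},u)$, which plays the role of the public coin. The key point is that $Z_t$ is independent of the prefix, because the noise is fresh in each round. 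Steps 1--2 therefore apply round by round with the prefix absorbed into $U$, and summing gives $2\ln2\,TB/\sigma^2$.

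The step I expect to be the main obstacle is Step 1's regularity, namely interchanging differentiation and expectation for arbitrary measurable quantizers, and confirming that the per-round conditional score is exactly the centroid of $Z_t$ alone. The second point needs care because the prefix depends on $\theta$ through earlier observations. It resolves by writing the conditional likelihood $p_\theta(m_t\mid m_{1:t-1},u)$ as a Gaussian integral over $y_t$ only: independence of $Z_t$ from the past makes this an integral of a fixed indicator against $\phi_\sigma(y_t-\theta)$.
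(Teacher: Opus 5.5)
Your proposal is correct and follows essentially the same route as the paper: identify the message score as the Gaussian centroid $\sigma^{-1}\E[Z\mid M,U]$, bound it by the Donsker--Varadhan centroid inequality (Lemma~\ref{lem:centroid}) together with $I(Z;M\mid U)\le H(M\mid U)\le B\ln 2$ nats, and then add the per-round conditional Fisher informations via the martingale decomposition of the transcript score. Your explicit dominated-convergence justification, and your observation that the per-round conditional likelihood is an integral over the fresh $y_t$ alone (which relies on the encoder seeing only $Y_t$ and the prefix, not past observations), fill in details that the paper leaves implicit.
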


\begin{proof}
At fixed $\theta$ and fixed $U=u$, the score for outcome $m$ is
$\nabla_\theta\log P_\theta(M=m)=\sigma^{-2}\E[Y-\theta\mid M=m]=\sigma^{-1}\E[Z\mid M=m]$, so
$\Tr J_\theta(M\mid U{=}u)=\sigma^{-2}\E\|\E[Z\mid M,u]\|^2$; Lemma~\ref{lem:centroid} (with $Z$ a function
of fresh noise at fixed $\theta,u$) bounds this by $\tfrac{2}{\sigma^2}I(Z;M\mid u)\le\tfrac{2\ln2}{\sigma^2}H(M\mid u)\le\tfrac{2\ln2}{\sigma^2}B$;
average over $u$. Across rounds, the transcript score is the sum of conditional per-round scores, which
are martingale increments, so the Fisher informations add and each round contributes at most
$2\ln2\,B/\sigma^2$.
\end{proof}

\noindent Lemma~\ref{lem:fishertrace} is the rigorous form of the compression mechanism: regardless of
dimension, a $B$-bit-per-round transcript carries at most $O(TB/\sigma^2)$ total Fisher trace about the
mean, not $O(Td/\sigma^2)$. \emph{What remains} to obtain the target inequality~\eqref{eq:sdpi} is a conversion from
this Fisher-trace bound to the hypercube-prior mutual information $I(V;M_{1:T}\mid U)$ with $\theta=\delta V$.
The natural Gaussian conversion $I(V;M)\le\tfrac12\Tr(\Sigma_\pi J)=\tfrac{\delta^2}{2}\Tr J$ (prior
covariance $\Sigma_\pi=\delta^2 I_d$) would finish the single-round case, but it is \emph{not} valid at
finite SNR: for the hypercube prior $I(V;M)$ can exceed $\tfrac{\delta^2}{2}\Tr J$ (the bound is the
small-perturbation limit, and the discrete prior contributes positive higher-order terms). The weaker
form $I(V;M)\le c\,\delta^2 H(M)$, and the coordinate-Hellinger sensitivity
$\sum_j\E_{V_{-j}}H^2(\,\cdot\,)\le c\,\tfrac{\delta^2}{\sigma^2}H(M)$, are the natural targets, but a rigorous
proof of either in the regime $\mathrm{SNR}\le1$ (rather than $\mathrm{SNR}\to0$) is the residual step. Thus Lemma~\ref{lem:fishertrace} reduces the otherwise opaque assumption~\eqref{eq:sdpi} to a single,
concrete finite-SNR Fisher-to-Bayes conversion (which the main text bypasses entirely via van Trees).

\subsection{Status summary}\label{app:d:7}

We have reduced the target mutual-information inequality~\eqref{eq:sdpi} as follows. The single-round case is \emph{not} proved: the
maximal-correlation route (Lemma~\ref{lem:singleround}) controls the $\chi^2$ contraction but not the
mutual-information contraction, leaving Conjecture~\ref{conj:singleround} (a single-round
$O(\mathrm{SNR})$ bound on $\eta_{\mathrm{KL}}$) open. The Gaussian-centroid route (\S\ref{app:d:6}) instead proves
rigorously that bits limit Fisher trace (Lemma~\ref{lem:fishertrace}), reducing the assumption to a
finite-SNR Fisher-to-Bayes conversion that we verify numerically but do not prove. The multi-round case
requires, in addition, an inductive control of how a bounded-bit prefix can correlate the posterior
(\S\ref{app:d:5}), which is exactly the distributed-DPI core. What is rigorous: the single-coordinate facts
(Lemma~\ref{lem:single}), the $\chi^2$/maximal-correlation contraction (Lemma~\ref{lem:singleround}), the
Gaussian centroid and Fisher-trace bounds (Lemmas~\ref{lem:centroid}--\ref{lem:fishertrace}), and the
tightness of the target rate (Proposition~\ref{prop:tight}, so the exponents cannot be improved). Closing
the finite-SNR conversion and the multi-round induction would complete this alternative
mutual-information route; the main-text product-form bound (Theorem~\ref{thm:product}) is already
unconditional via van Trees and does not depend on it.

\section{Sequential Rate--Distortion Theory for Drifting and Correlated Oracles}
\label{app:seqrd}

This appendix develops a dynamic counterpart of the static reduction (Lemma~\ref{lem:reduction}) and uses
it to (i) \emph{resolve and correct} the correlated-gradient conjecture of \S\ref{sec:discussion:open},
and (ii) outline how a drifting optimum can be tracked under a bit budget. We present it as a
\emph{perspective}, not as core theorems: the tracking \emph{lower bound}
(Theorem~\ref{thm:seqlb}) is self-contained, whereas the converse and achievability invoke
\emph{remote} (indirect) sequential rate--distortion and entropy-coded quantization, which depart from
the exact fixed-length $B$-bit oracle of Definition~\ref{def:oracle} in ways we flag explicitly. The
information-theoretic machinery is imported: directed information \citep{massey1990directed}, the
data-rate theorem \citep{nair2004stabilizability}, Gaussian sequential rate--distortion (SRD) and its
semidefinite representation \citep{tatikonda2009capacity,tanaka2017sdp,stavrou2020revisited}, and
innovation-based lattice achievability \citep{kostina2019ratecost,zamir1998nested}. The new
contributions are the reduction \emph{from} bit-constrained optimization (Lemma~\ref{lem:seqreduce}) and
the resulting correction (Corollary~\ref{cor:correlated}).

\subsection{A drifting quadratic oracle}\label{app:seqrd:setup}
\begin{definition}[Linear--Gaussian oracle]
\label{def:drift}
The optimum follows linear--Gaussian dynamics
$\theta_{t+1}=A\theta_t+w_t$, $w_t\stackrel{\mathrm{iid}}{\sim}\Ncal(0,W)$, with $A$ a real $d\times d$
matrix (we allow unstable modes; the typical optimization-tracking regime has $\rho(A)<1$) and
$W\succ0$ (the singular case follows by a limiting argument); the round-$t$ loss is
$f_t(x)=\tfrac12\|x-\theta_t\|^2$. The query $x_t$ is
$\mathcal F_{t-1}:=\sigma(M^{t-1},U)$-measurable; the oracle returns
$g_t=(x_t-\theta_t)+\xi_t$ with $\xi_t\stackrel{\mathrm{iid}}{\sim}\Ncal(0,V)$, $V\succ0$. The encoder
forms $Y_t:=x_t-g_t=\theta_t-\xi_t$ and sends $M_t=\mathcal E_t(Y^t,M^{t-1},U)\in\{0,1\}^B$; the
optimizer sets $x_{t+1}=\mathcal D_t(M^t,U)$, so $\hat\theta_t:=x_t$ is $\mathcal F_{t-1}$-measurable
(a one-step predictor). Write $D_t:=\tfrac12\E\|x_t-\theta_t\|^2$ and
$\bar D_\infty:=\limsup_{T}\tfrac1T\sum_{t\le T}D_t$.
\end{definition}

\begin{remark}[One model, two regimes]
\label{rem:augment}
Two regimes fit this template. The \emph{drifting optimum} (Model B; the focus of
\S\ref{app:seqrd:lb1}--\ref{app:seqrd:ach}) is Definition~\ref{def:drift} directly. The \emph{static
optimum with temporally correlated noise} (Model A) is handled in the main text
(Corollary~\ref{cor:correlated}) by whitening; equivalently, augmenting a Gauss--Markov noise process
into the state casts it in the form of Definition~\ref{def:drift} with a block-structured $A$ (so the
augmented $A$ is not the identity). The two regimes ask different questions: a drifting optimum never
``converges,'' so the natural quantity is the steady-state tracking error $\bar D_\infty(B)$, whereas a
static optimum has the rounds-to-accuracy quantity $T$ of the main text.
\end{remark}

\begin{lemma}[Sequential reduction; query irrelevance]
\label{lem:seqreduce}
Under Definition~\ref{def:drift}: (i) for every $\mathcal F_{t-1}$-measurable query sequence, the joint
law of $\{Y_t\}$ depends only on $\{\theta_t,\xi_t\}$ and not on the queries; and (ii)
$D_t=\tfrac12\E\|\hat\theta_t-\theta_t\|^2$ with $\hat\theta_t$ one-step predictable. Hence bit-constrained
drifting optimization is equivalent to causal source coding of the noisy Gauss--Markov source
$\{Y_t=\theta_t-\xi_t\}$ over a noiseless $B$-bit link with mean-square distortion.
\end{lemma}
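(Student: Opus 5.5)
The plan is to mirror the proof of Lemma~\ref{lem:reduction}, adding an induction over rounds to cope with the time-varying optimum and the encoder's memory. Part (ii) is immediate: $f_t(\theta_t)=0$, and $f_t(x_t)-f_t(\theta_t)=\tfrac12\|x_t-\theta_t\|^2$. By Definition~\ref{def:drift}, $x_t=\mathcal D_{t-1}(M^{t-1},U)$ is $\mathcal F_{t-1}$-measurable, so $\hat\theta_t:=x_t$ is a one-step predictor. The substance therefore lies in part (i) and in the concluding equivalence.

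For part (i), realize all primitive randomness on one probability space. This consists of $\theta_1$, the drift noises $\{w_t\}$, the oracle noises $\{\xi_t\}$ and the public coin $U$, all mutually independent, with $U$ also independent of everything else. By construction the oracle returns $g_t=(x_t-\theta_t)+\xi_t$. The identity $Y_t=x_t-g_t=\theta_t-\xi_t$ then holds pathwise, whatever value $x_t$ takes. The process $\{\theta_t\}$ is generated by $\theta_{t+1}=A\theta_t+w_t$, and this recursion never references queries or messages: the optimizer does not act on the optimum. Hence $\{Y_t\}_{t\ge1}$ is a fixed measurable function of $(\theta_1,\{w_t\},\{\xi_t\})$, and its joint law, together with its joint law with $\{\theta_t\}$, is the same for every admissible query policy.

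I would state this as an induction on $t$. Assume that $(\theta^t,Y^t,M^{t-1},U)$ has a law not depending on the queries beyond the choice of encoders and decoders. Then $M_t=\mathcal E_t(Y^t,M^{t-1},U)$, $x_{t+1}=\mathcal D_t(M^t,U)$ and $\theta_{t+1},Y_{t+1}$ are formed from fresh independent noise $(w_t,\xi_{t+1})$. This propagates the claim to $t+1$.

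The only delicate point is that the encoder uses $g_t$ rather than $Y_t$. I would handle this by noting that $x_t$ is a function of $(M^{t-1},U)$, which the encoder possesses. The maps $g_t\mapsto Y_t=x_t-g_t$ and back are therefore bijections computable by the encoder. Any encoder $Q_t(g^t,x^t,M^{t-1},U)$ thus equals some $\mathcal E_t(Y^t,M^{t-1},U)$, and conversely, so the two classes of protocols coincide.

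For the equivalence, I would put the two sides in correspondence. A bit-constrained drifting-optimization protocol is a tuple (encoders $\mathcal E_t$, decoders $\mathcal D_t$). A causal code for the source $\{Y_t\}$ over a noiseless $B$-bit link is a tuple with the same signature, whose reconstruction $\hat\theta_{t}$ is $\mathcal F_{t-1}$-measurable. By part (i) the law of $(\{\theta_t\},\{Y_t\},U)$ is identical in both views. Running the same tuple therefore produces identically distributed $(\theta_t,\hat\theta_t)$ sequences, and by (ii) the same $D_t$ and $\bar D_\infty$.

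The main obstacle I expect is bookkeeping rather than any inequality. The difficulty is in making rigorous that "the law does not depend on the queries" even though the queries are random and adaptive. I would handle this with the pathwise construction above rather than conditional-density arguments, which sidesteps any regularity issues with singular $W$ or with unstable $A$, since no stationarity is used.
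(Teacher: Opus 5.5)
Your proposal is correct and follows essentially the same route as the paper. The pathwise identity $Y_t=x_t-g_t=\theta_t-\xi_t$ eliminates the query, the drift $\theta_{t+1}=A\theta_t+w_t$ never references $x_t$, and (ii) is just the definition of $D_t$ with $\hat\theta_t:=x_t$ being $\mathcal F_{t-1}$-measurable. Your additions (the product-space construction, the induction over rounds, and the $g^t\leftrightarrow Y^t$ encoder-class bijection, which mirrors the static Lemma~\ref{lem:reduction}) are sound bookkeeping that the paper's one-line proof leaves implicit.
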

\begin{proof}
$Y_t=x_t-g_t=x_t-((x_t-\theta_t)+\xi_t)=\theta_t-\xi_t$ eliminates $x_t$, giving (i); (ii) is the
definition of $D_t$. The main-text Lemma~\ref{lem:reduction} is the case $A=I$, $W=0$.
\end{proof}

\subsection{Lower bound I: a self-contained entropy-power bound}\label{app:seqrd:lb1}
For a random vector $Z\in\R^d$ and a $\sigma$-algebra $\mathcal G$, write the conditional entropy power
(bits) $N(Z\mid\mathcal G):=\tfrac{1}{2\pi e}2^{2h(Z\mid\mathcal G)/d}$; for a Gaussian $\Ncal(\mu,\Sigma)$
one has $N=(\det\Sigma)^{1/d}$.

\begin{lemma}[Rate, observation, time, and distortion bounds]
\label{lem:entropypower}
Let $N_t:=N(\theta_t\mid\mathcal F_{t-1})$ and $N_t^+:=N(\theta_t\mid\mathcal F_t)$. For any encoder, with
no distributional assumption on the (non-Gaussian) posteriors:
\begin{enumerate}[leftmargin=2em,itemsep=1pt]
\item[(R)] $N_t^+\ge N_t\,2^{-2B/d}$, since
$h(\theta_t\mid\mathcal F_{t-1})-h(\theta_t\mid\mathcal F_t)=I(\theta_t;M_t\mid\mathcal F_{t-1})\le H(M_t)\le B$.
\item[(O)] $N_t\ge(\det \bar P^{\mathrm{KF}}_t)^{1/d}=:N^{\mathrm{KF}}_t$, where $\bar P^{\mathrm{KF}}_t$
is the Kalman one-step prediction covariance: since $\mathcal F_{t-1}\subseteq\sigma(Y^{t-1},U)$ and
$\theta_t\mid Y^{t-1}$ is Gaussian, no predictor measurable w.r.t.\ $\mathcal F_{t-1}$ beats the Kalman
prediction error.
\item[(T)] $N_{t+1}\ge|\det A|^{2/d}N_t^+ + \det(W)^{1/d}$, by the entropy-power inequality and
$w_t\perp(\theta_t,\mathcal F_t)$ (Remark~\ref{rem:augment} keeps this independence).
\item[(D)] $D_t\ge\tfrac d2\,N_t$, since $\inf_{\hat\theta_t\in\mathcal F_{t-1}}\E\|\hat\theta_t-\theta_t\|^2=\Tr\bar P_t\ge d(\det\bar P_t)^{1/d}\ge d\,N_t$.
\end{enumerate}
\end{lemma}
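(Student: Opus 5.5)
The plan is to prove the four items separately. Each rests on one standard information-theoretic fact plus the structure of Definition~\ref{def:drift}: $\mathcal F_t=\sigma(M^t,U)$, $U$ independent of everything, and $w_t$ independent of $(\theta_t,\mathcal F_t)$. Throughout I would work conditionally on $U=u$ and average at the end. The conditional entropy power is a concave-type functional of the conditional entropy, so the inequalities will be stated in terms of $h(\cdot\mid\mathcal G)$ first and then exponentiated.

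For (R), I would use the chain rule $h(\theta_t\mid\mathcal F_{t-1})-h(\theta_t\mid\mathcal F_t)=I(\theta_t;M_t\mid\mathcal F_{t-1})$. This is bounded by $H(M_t\mid\mathcal F_{t-1})\le B$, since $M_t$ is discrete and takes at most $2^B$ values. Exponentiating with the factor $2/d$ gives $N_t^+\ge N_t2^{-2B/d}$. No Gaussianity is used.

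For (O), the idea is the following. Enlarging the conditioning from $\mathcal F_{t-1}$ to $\sigma(Y^{t-1},U)\supseteq\mathcal F_{t-1}$ can only lower differential entropy. Given $(Y^{t-1},U)$, $\theta_t$ is Gaussian with the Kalman prediction covariance $\bar P^{\mathrm{KF}}_t$, which is deterministic. Hence $h(\theta_t\mid\mathcal F_{t-1})\ge h(\theta_t\mid Y^{t-1},U)=\tfrac12\log((2\pi e)^d\det\bar P^{\mathrm{KF}}_t)$, which is exactly $N_t\ge N^{\mathrm{KF}}_t$. The containment needs $M^{t-1}$ to be a function of $(Y^{t-1},U)$. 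This follows inductively, since $x_s$ is a function of $(M^{s-1},U)$ and $M_s=\mathcal E_s(Y^s,M^{s-1},U)$. Any private encoder randomness is absorbed into $U$.

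For (T), write $\theta_{t+1}=A\theta_t+w_t$. Conditionally on $\mathcal F_t$, $A\theta_t$ and $w_t$ are independent. The conditional entropy power inequality (valid conditionally because $w_t\perp(\theta_t,\mathcal F_t)$) gives $N(\theta_{t+1}\mid\mathcal F_t)\ge N(A\theta_t\mid\mathcal F_t)+N(w_t)$. I would then use $h(A\theta_t\mid\mathcal F_t)=h(\theta_t\mid\mathcal F_t)+\log|\det A|$ and $N(w_t)=(\det W)^{1/d}$. Singular $A$ gives $-\infty$ entropy for the first term, and the bound still holds trivially in entropy-power form.

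One subtlety is that $N_{t+1}$ in the statement is defined as $N(\theta_{t+1}\mid\mathcal F_t)$, matching the indexing; I would check this convention rather than conditioning on $\mathcal F_{t+1}$. The conditional EPI in the form $\E$-averaged over $\mathcal F_t$ requires Jensen. The pointwise EPI holds for each realization, and $\tfrac{1}{2\pi e}2^{2h/d}$ is convex in $h$. The averaged conditional entropy power defined via the averaged $h$ is therefore at most the average of the pointwise powers. This is the wrong direction, so this is the step I expect to be the main obstacle. I would handle it by applying the EPI in its entropy form, namely Lieb's inequality / the concave form $h(X+Y)\ge h(\sqrt\lambda X'+\sqrt{1-\lambda}Y')$ with the optimal $\lambda$ chosen using the averaged quantities. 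That version is linear in the conditional entropies and therefore averages correctly. This yields precisely the inequality for the averaged entropy powers.

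For (D), first take $\hat\theta_t$ to be $\mathcal F_{t-1}$-measurable. Its error is at least that of the conditional mean, whose error covariance is $\bar P_t=\E\,\mathrm{Cov}(\theta_t\mid\mathcal F_{t-1})$. Next apply AM–GM to the eigenvalues, $\Tr\bar P_t\ge d(\det\bar P_t)^{1/d}$. Finally use the maximum-entropy bound: for each realization, $h\le\tfrac12\log((2\pi e)^d\det\mathrm{Cov})$. Averaging and using concavity of $\log\det$ gives $h(\theta_t\mid\mathcal F_{t-1})\le\tfrac12\log((2\pi e)^d\det\bar P_t)$, i.e.\ $(\det\bar P_t)^{1/d}\ge N_t$. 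Multiplying by $\tfrac12$ gives $D_t\ge\tfrac d2N_t$.
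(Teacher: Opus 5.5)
Your proposal is correct. Items (R) and (D) follow the paper's proof: the chain rule plus $H(M_t)\le B$ for (R), and for (D) optimality of the conditional mean, AM--GM, and maximum entropy. In (D) you also supply a step the paper leaves implicit: the maximum-entropy bound holds realization by realization, and concavity of $\log\det$ is what carries it to the averaged $h(\theta_t\mid\mathcal F_{t-1})$ and the averaged covariance $\bar P_t$.

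The genuine difference is in (O). The paper argues through estimation error: since $\mathcal F_{t-1}\subseteq\sigma(Y^{t-1},U)$, no $\mathcal F_{t-1}$-measurable predictor beats the Kalman predictor, i.e.\ $\bar P_t\succeq\bar P^{\mathrm{KF}}_t$. That is an MSE statement. Combined with AM--GM it gives $D_t\ge\tfrac d2(\det\bar P^{\mathrm{KF}}_t)^{1/d}$ directly, which suffices for the observation floor in Theorem~\ref{thm:seqlb}. It does not by itself give the entropy-power inequality $N_t\ge N^{\mathrm{KF}}_t$ as stated, because (D) bounds $N_t$ only from above by $(\det\bar P_t)^{1/d}$. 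Your route proves exactly the stated form. Conditioning on the finer $\sigma(Y^{t-1},U)$ can only lower differential entropy, and there $\theta_t$ is Gaussian with deterministic covariance $\bar P^{\mathrm{KF}}_t$. You also check the inclusion $\mathcal F_{t-1}\subseteq\sigma(Y^{t-1},U)$ by induction rather than asserting it.

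In (T) you and the paper use the same EPI-plus-scaling idea. You correctly notice that the conditional entropy power is defined through the averaged conditional entropy, so the pointwise EPI cannot simply be averaged in entropy-power form. Your fix via Lieb's linear form with a deterministic $\lambda$ is valid. The obstacle is milder than you feared, though. Write the pointwise EPI as $h(X+Y\mid f)\ge\tfrac d2\log_2\bigl(2^{2h(X\mid f)/d}+2^{2h(Y\mid f)/d}\bigr)$; since $(a,b)\mapsto\log(2^a+2^b)$ is convex, Jensen then goes in the right direction in one line.

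Two small points. In your first paragraph you call entropy power a concave-type functional of $h$; it is convex, as you say later. Also, both you and the paper implicitly assume a Gaussian initial state $\theta_1$, which (O) needs for the Kalman posterior to be Gaussian.
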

\begin{proof}
(R) Differential-entropy reduction equals conditional mutual information, bounded by $H(M_t)\le B$ bits;
exponentiate by $2/d$. (O) Any $\mathcal F_{t-1}$-measurable predictor has error at least the Kalman
prediction error, as $\mathcal F_{t-1}\subseteq\sigma(Y^{t-1},U)$ and $(\theta_t,Y^{t-1})$ are jointly
Gaussian. (T) Entropy-power inequality \citep[Thm.~17.7.3]{cover2006elements} with $N(A\theta_t\mid\cdot)=|\det A|^{2/d}N(\theta_t\mid\cdot)$.
(D) Optimality of the conditional mean, then AM--GM $(\det)^{1/d}\le\Tr/d$ and the maximum-entropy
inequality $N\le(\det)^{1/d}$.
\end{proof}

\begin{theorem}[Self-contained tracking lower bound]
\label{thm:seqlb}
Let $a_\star:=|\det A|^{2/d}2^{-2B/d}$ and $w_\star:=\det(W)^{1/d}$. If $a_\star<1$, equivalently
$B>\log_2|\det A|$, then
\[
\bar D_\infty\ \ge\ \frac d2\,N_\star,\qquad
N_\star=\max\Big\{\underbrace{\tfrac{\det(W)^{1/d}}{1-|\det A|^{2/d}2^{-2B/d}}}_{\text{rate floor}},\ \underbrace{(\det\bar P^{\mathrm{KF}}_\infty)^{1/d}}_{\text{observation floor}}\Big\},
\]
where $\bar P^{\mathrm{KF}}_\infty$ is the stabilizing solution of the Kalman prediction Riccati
$\bar P=A(\bar P^{-1}+V^{-1})^{-1}A^\top+W$ (it exists for any $A$ since $C=I$ is detectable). For stable
$A$ ($|\det A|\le1$, the typical optimization-tracking regime) the rate floor is finite at every
$B\ge0$, so the binding constraint is usually the observation floor. The divergence
$\bar D_\infty\to\infty$ for $B<\log_2|\det A|$, and with it the Nair--Evans data-rate threshold
\citep{nair2004stabilizability}, arises only when $A$ has unstable modes ($|\det A|>1$; e.g.\ a scalar
source with $|a|>1$).
\end{theorem}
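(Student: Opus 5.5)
We prove Theorem~\ref{thm:seqlb} by running the four inequalities of Lemma~\ref{lem:entropypower} as a scalar recursion on the predictive entropy powers $N_t$, then passing to the Ces\`aro limit through (D).

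\textbf{Step 1 (rate floor).} Chain (R) and (T):
\[
N_{t+1}\ \ge\ |\det A|^{2/d}N_t^+ + w_\star\ \ge\ a_\star N_t + w_\star .
\]
This is an affine recursion $N_{t+1}\ge a_\star N_t+w_\star$ with $0\le a_\star<1$. By induction, $N_{t}\ge a_\star^{t-1}N_1+w_\star\sum_{k=0}^{t-2}a_\star^k$. Hence $\liminf_t N_t\ge w_\star/(1-a_\star)$, because $N_1\ge0$ and the geometric sum converges to $1/(1-a_\star)$. This is exactly the rate floor. It does not depend on the initial condition, and no Gaussianity of posteriors is needed because (R) and (T) are stated for arbitrary encoders.

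\textbf{Step 2 (observation floor).} Use (O), $N_t\ge N_t^{\mathrm{KF}}=(\det\bar P^{\mathrm{KF}}_t)^{1/d}$. Since $C=I$ is detectable and $W\succ0$ makes $(A,W^{1/2})$ stabilizable, standard Riccati convergence gives $\bar P^{\mathrm{KF}}_t\to\bar P^{\mathrm{KF}}_\infty$, the stabilizing solution. Continuity of $\det^{1/d}$ then yields $\liminf_t N_t\ge(\det\bar P^{\mathrm{KF}}_\infty)^{1/d}$. Combining with Step 1 gives $\liminf_t N_t\ge N_\star$.

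\textbf{Step 3 (distortion and averaging).} By (D), $D_t\ge\frac d2 N_t$. For a sequence whose $\liminf$ is at least $N_\star$, the Ces\`aro averages satisfy $\liminf_T\frac1T\sum_{t\le T}N_t\ge N_\star$: for any $\eta>0$ only finitely many terms lie below $N_\star-\eta$, and all terms are nonnegative. Therefore $\bar D_\infty\ge\limsup_T\frac1T\sum_t\frac d2N_t\ge\frac d2N_\star$.

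\textbf{Remaining claims.}
\begin{itemize}
\item For $|\det A|\le1$ we have $a_\star\le 2^{-2B/d}<1$ for every $B>0$, so the rate floor is finite. At $B=0$ the floor is infinite only when $|\det A|=1$; we will phrase this carefully, or note that $a_\star<1$ is the standing hypothesis.
\item For $a_\star\ge1$ (i.e.\ $B\le\log_2|\det A|$), the same recursion gives $N_t\ge w_\star\sum_k a_\star^k\to\infty$, so $\bar D_\infty=\infty$. This matches the Nair--Evans threshold.
\end{itemize}

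\textbf{Main obstacle.} The delicate step is justifying (T) and (O) as used. For (T), the independence $w_t\perp(\theta_t,\mathcal F_t)$ must hold, which it does since $M_t$ depends only on $Y^t,U$. The conditional EPI must also apply with $A$ possibly singular. If $\det A=0$, the term $|\det A|^{2/d}N_t^+$ is simply dropped and the bound $N_{t+1}\ge w_\star$ still holds, since adding independent Gaussian noise only increases entropy power. For (O), the predictor inclusion $\mathcal F_{t-1}\subseteq\sigma(Y^{t-1},U)$ requires the encoder to be a measurable function of $Y^t$ and shared randomness, which holds by Definition~\ref{def:drift}. The entropy-power version of (O) follows because conditioning on more information (the full $Y^{t-1}$) reduces $h(\theta_t\mid\cdot)$, and the Kalman posterior is Gaussian.
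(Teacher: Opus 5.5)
Your proposal is correct and follows the paper's proof essentially step for step. You compose (R) and (T) into the affine recursion $N_{t+1}\ge a_\star N_t+w_\star$ for the rate floor, use (O) with Riccati convergence for the observation floor, and pass to $\bar D_\infty$ through (D) and the Ces\`aro $\liminf$ argument. Your extra care is a genuine sharpening of the paper's looser wording: you add stabilizability of $(A,W^{1/2})$ for the Riccati convergence, handle singular $A$ in (T) via $h(A\theta_t+w_t\mid\mathcal F_t)\ge h(w_t)$, actually prove $\bar D_\infty=\infty$ when $a_\star\ge1$, and notice that the statement's ``finite at every $B\ge0$'' fails at $B=0$, $|\det A|=1$.
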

\begin{proof}
By (O), $N_t\ge(\det\bar P^{\mathrm{KF}}_t)^{1/d}\to(\det\bar P^{\mathrm{KF}}_\infty)^{1/d}$, so
$\liminf_t N_t\ge(\det\bar P^{\mathrm{KF}}_\infty)^{1/d}$. Composing (R) and (T) of
Lemma~\ref{lem:entropypower} gives $N_{t+1}\ge a_\star N_t+w_\star$; iterating from any $N_0\ge0$,
$N_t\ge a_\star^{t}N_0+w_\star\frac{1-a_\star^{t}}{1-a_\star}$, and when $a_\star<1$ this yields
$\liminf_t N_t\ge w_\star/(1-a_\star)$ (the unique fixed point of the contraction
$\Phi(N)=a_\star N+w_\star$). Combining the two,
$\liminf_t N_t\ge N_\star=\max\{w_\star/(1-a_\star),\,(\det\bar P^{\mathrm{KF}}_\infty)^{1/d}\}$. Finally,
$\bar D_\infty=\limsup_T\frac1T\sum_{t\le T}D_t\ge\frac d2\,\liminf_T\frac1T\sum_{t\le T}N_t\ge\frac d2\liminf_t N_t\ge\frac d2 N_\star$,
using (D) and that a Cesàro mean has $\liminf$ at least that of the sequence. The prediction Riccati has
a unique stabilizing solution $\bar P^{\mathrm{KF}}_\infty\succ0$ because $C=I$ makes $(A,C)$ detectable
\citep{andersonmoore1979}. When $a_\star\ge1$ (i.e.\ $B\le\log_2|\det A|$, which requires
$|\det A|>1$) the recursion has no finite fixed point and the rate floor diverges, recovering the
Nair--Evans threshold.
\end{proof}

\begin{remark}[Honestly loose]
\label{rem:loose}
Theorem~\ref{thm:seqlb} uses one scalar determinant per step, so it inherits the AM--GM gap in (D) and
the entropy-power gap in (T): it is a data-rate-theorem-style bound, not tight in general. The tight
object is the \emph{remote} nonanticipative RDF discussed next.
\end{remark}

\subsection{Lower bound II: the remote nonanticipative RDF}\label{app:seqrd:lb2}
\begin{lemma}[The relevant object is a remote NRDF]
\label{lem:srd}
Because the encoder observes the \emph{noisy} $Y_t=\theta_t-\xi_t$ but distortion is measured against
$\theta_t$, the relevant rate--distortion object is the \emph{remote} (indirect) nonanticipative RDF
$R^{\mathrm{rem}}_{\mathrm{na}}(D)$ for conveying $\theta_t$ from causally encoded noisy observations,
not the ordinary SRD of a directly observed source. It is semidefinite-representable
\citep{tatikonda2009capacity,tanaka2017sdp}; the vector dynamic reverse-waterfilling \emph{closed form}
is in general incorrect \citep{stavrou2020revisited}, so the SDP is the safe route. Two consequences are
worth isolating: (a) the scalar rate--distortion \emph{term} for the underlying Gauss--Markov source is
$\tfrac12\log_2(a^2+W/D)$; and (b) because the observation is noisy, no rate removes the Kalman
filtering floor, so $R^{\mathrm{rem}}_{\mathrm{na}}(D)=\infty$ for $D$ below that floor.
\end{lemma}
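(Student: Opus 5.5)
The plan is to treat the lemma as three separate claims. Claim (i) is that the \emph{remote} nonanticipative RDF is the right object. Claim (ii) is its semidefinite representability, which I would import. Claims (iii) are consequences (a) and (b). I would start from the sequential reduction (Lemma~\ref{lem:seqreduce}): any scheme is a causal code for the noisy source $\{Y_t\}$ with distortion measured against $\{\theta_t\}$.

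For the converse chain I would use directed information. With $\mathcal F_t=\sigma(M^t,U)$ and $U\perp(\theta,\xi)$, the chain rule gives
\[
\sum_{t\le T} I(Y^t;M_t\mid M^{t-1},U)\le TB .
\]
Via the Markov structure $\theta^t\to Y^t\to M_t$ given $(M^{t-1},U)$, the left side dominates the directed information $I(Y^T\to M^T\mid U)$. Minimizing this quantity over causal test channels subject to $\frac1T\sum_t D_t\le D$ is, by definition, $R^{\mathrm{rem}}_{\mathrm{na}}(D)$. Using $Y$ rather than $\theta$ on the source side is exactly what makes the problem remote.

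Next I would apply the standard estimation-theoretic decomposition (Wolf--Ziv / Witsenhausen) to reduce the remote problem to a direct one. Let $\tilde\theta_t:=\E[\theta_t\mid Y^{t-1}]$ be the Kalman one-step prediction. Since $\hat\theta_t$ is $\mathcal F_{t-1}\subseteq\sigma(Y^{t-1},U)$-measurable, orthogonality gives
\[
\E\|\hat\theta_t-\theta_t\|^2=\Tr\bar P^{\mathrm{KF}}_t+\E\|\hat\theta_t-\tilde\theta_t\|^2 .
\]
The filtered process $\tilde\theta_t$ is itself Gauss--Markov in innovations form, $\tilde\theta_{t+1}=A\tilde\theta_t+K_t\nu_t$. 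It is also a sufficient causal summary of $Y^t$ for the distortion. Hence $R^{\mathrm{rem}}_{\mathrm{na}}(D)$ equals the ordinary NRDF of $\{\tilde\theta_t\}$ at distortion $D-\tfrac12\Tr\bar P^{\mathrm{KF}}$.

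Semidefinite representability then follows by citing \citet{tanaka2017sdp} for Gauss--Markov sources, noting \citet{stavrou2020revisited}'s caveat about the reverse-waterfilling closed form. Consequence (b) drops out immediately: the distortion term is at least $\tfrac12\Tr\bar P^{\mathrm{KF}}_t$, matching (O) of Lemma~\ref{lem:entropypower}. So below that floor the feasible set is empty and the infimum is $+\infty$.

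For (a) I would specialize to the scalar direct source $\theta_{t+1}=a\theta_t+w_t$, $w_t\sim\Ncal(0,W)$. In steady state the one-step prediction variance is $N=a^2D+W$, where $D$ is the post-update error variance. Gaussian optimality of the per-step test channel gives rate $\tfrac12\log_2(N/D)=\tfrac12\log_2(a^2+W/D)$. The lower-bound direction is the fixed point of the (R)+(T) recursion in Theorem~\ref{thm:seqlb} with $d=1$, where the entropy-power inequality is tight for Gaussians. Achievability comes from a Gaussian additive test channel on the innovation.

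I expect the main obstacle to be the decomposition step. Two things must be checked. First, the orthogonality must hold even though the encoder may use $Y^t$ (not only $Y^{t-1}$) while the decoder's estimate is one-step predictive; this requires care about which Kalman estimate (filter versus predictor) enters. Second, the directed-information functional must be genuinely unchanged when $Y^t$ is replaced by $\tilde\theta$. I would handle both by conditioning on $(M^{t-1},U)$ and verifying that the error $\theta_t-\tilde\theta_t$ is independent of $(Y^{t-1},M^{t-1},U)$, which is jointly Gaussian innovations theory plus $U\perp$ everything.
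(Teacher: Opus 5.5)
Your proposal is correct, and it argues more than the paper does: the paper states this lemma without proof. It imports SDP representability from \citet{tanaka2017sdp} and the closed-form caveat from \citet{stavrou2020revisited}. Consequence (b) rests implicitly on item (O) of Lemma~\ref{lem:entropypower}, and (a) on the rate floor of Theorem~\ref{thm:seqlb}.

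Your route combines a directed-information converse with the orthogonal (Wolf--Ziv/Witsenhausen) decomposition $\E\|\hat\theta_t-\theta_t\|^2=\Tr\bar P^{\mathrm{KF}}_t+\E\|\hat\theta_t-\tilde\theta_t\|^2$. This reduces the remote NRDF to the direct NRDF of the innovations-form process $\tilde\theta$ at a shifted distortion, and it buys three things.
\begin{enumerate}
\item It is the step that licenses applying the cited direct-source SDP representation to the remote problem, which the paper leaves implicit.
\item It gives (b) with the exact floor $\tfrac12\Tr\bar P^{\mathrm{KF}}_\infty$. By AM--GM this is at least the determinant floor $\tfrac d2(\det\bar P^{\mathrm{KF}}_\infty)^{1/d}$ that the paper's entropy-power route and Proposition~\ref{thm:srdconverse} deliver.
\item It pins down that the $D$ in (a) is the post-update (filtering) variance. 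Under the paper's own halved one-step-prediction normalization, the same rate reads $\tfrac12\log_2\frac{a^2N}{N-W}$, with $N$ the prediction variance; this is exactly the rate floor of Theorem~\ref{thm:seqlb}.
\end{enumerate}
The paper's approach buys brevity, and its bounds are in determinant form, which composes directly with the EPI recursion.

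A few points to tidy when you write it up:
\begin{enumerate}
\item Your left side \emph{is} $I(Y^T\to M^T\mid U)$ by Massey's definition; it does not merely dominate it.
\item Handle the public coin by bounding at each $U=u$ and then averaging, using convexity of the NRDF in $D$.
\item Make the index shift explicit: the direct source is $S_t:=\tilde\theta_{t+1}$, reconstructed by $x_{t+1}$ from $M^t$.
\item The remote-equals-direct identity needs no invertibility of $A$. One direction is data processing, since $\tilde\theta^{t+1}$ is a function of $Y^t$; the other restricts to encoders that factor through $\tilde\theta$.
\item In (a), with time-varying per-round rates, the (R)+(T) fixed-point argument needs a Jensen step using convexity of $x\mapsto\log(a^2+W/x)$.
\end{enumerate}
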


\begin{proposition}[Directed-information converse, with the observation floor]
\label{thm:srdconverse}
Directed information satisfies $I(Y^T\!\to M^T)=\sum_t I(Y^t;M_t\mid M^{t-1})\le TB$
\citep{massey1990directed}, hence $\bar D_\infty\ge (R^{\mathrm{rem}}_{\mathrm{na}})^{-1}(B)$. In
particular the distortion cannot fall below \emph{either} the rate floor \emph{or} the observation
(Kalman) floor,
\[
\bar D_\infty\ \ge\ \tfrac d2\,N_\star
=\tfrac d2\max\Big\{\tfrac{\det(W)^{1/d}}{1-|\det A|^{2/d}2^{-2B/d}},\ (\det\bar P^{\mathrm{KF}}_\infty)^{1/d}\Big\},
\]
\emph{exactly the bound of Theorem~\ref{thm:seqlb}} (we use the same prediction-error normalization
$D_t=\tfrac12\E\|x_t-\theta_t\|^2$, with $x_t$ a one-step predictor). The remote (noisy-observation)
aspect is what keeps the Kalman floor present at every rate: even as $B\to\infty$ the decoder recovers
only $Y^t$, so the best one-step predictor of $\theta_t$ from $Y^{t-1}$ still has error
$\tfrac d2(\det\bar P^{\mathrm{KF}}_\infty)^{1/d}>0$. A \emph{direct}-source rate--distortion formula,
which ignores the observation noise and vanishes as $B\to\infty$, would therefore understate the
distortion at high rate; the remote correction is exactly this floor.
\end{proposition}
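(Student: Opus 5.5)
The plan is to prove the directed-information converse by the standard remote-source argument, and then show that its explicit form coincides with Theorem~\ref{thm:seqlb}. First, since $M_t=\mathcal E_t(Y^t,M^{t-1},U)$ takes values in $\{0,1\}^B$, I bound $I(Y^t;M_t\mid M^{t-1},U)\le H(M_t\mid M^{t-1},U)\le B$ and sum over $t$. By Massey's identity for causal encoders, $I(Y^T\!\to M^T\mid U)\le TB$; since $U\perp(\theta,\xi)$, conditioning on $U$ is harmless. I then use Lemma~\ref{lem:seqreduce} to show that the kernel from $Y^T$ to the reconstructions $\hat\theta_t=x_t$ is nonanticipative, because $x_t$ is a function of $(M^{t-1},U)$. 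It is therefore feasible in the optimization that defines $R^{\mathrm{rem}}_{\mathrm{na}}$. The data-processing inequality for directed information, $I(Y^T\!\to\hat\theta^T)\le I(Y^T\!\to M^T)$, then gives $TB\ge T\,R^{\mathrm{rem}}_{\mathrm{na}}(\bar D_T)$ up to $o(T)$ terms. Monotonicity and convexity of the remote NRDF let me average across time (Jensen) and take $\limsup$, which yields $\bar D_\infty\ge(R^{\mathrm{rem}}_{\mathrm{na}})^{-1}(B)$.

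For the explicit display, I do not evaluate the remote NRDF in closed form; Lemma~\ref{lem:srd} warns that reverse water-filling can fail here. Instead I lower bound $(R^{\mathrm{rem}}_{\mathrm{na}})^{-1}(B)$ separately by each of the two floors.

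For the observation floor, I use that any nonanticipative reconstruction $\hat\theta_t$ is measurable with respect to $\sigma(Y^{t-1},U)$ plus independent randomness. Because $(\theta_t,Y^{t-1})$ is jointly Gaussian, the Kalman predictor is optimal, which gives $D_t\ge\tfrac12\Tr\bar P^{\mathrm{KF}}_t\ge\tfrac d2(\det\bar P^{\mathrm{KF}}_t)^{1/d}$. This is step (O) of Lemma~\ref{lem:entropypower} followed by AM--GM, and it is why $R^{\mathrm{rem}}_{\mathrm{na}}(D)=\infty$ below this floor.

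For the rate floor, I rerun the entropy-power recursion (R)–(T)–(D) of Lemma~\ref{lem:entropypower}. The only change is that each per-step increment $I(\theta_t;M_t\mid\mathcal F_{t-1})$ is bounded by the corresponding directed-information increment $I(Y^t;M_t\mid M^{t-1},U)$, which follows from the Markov chain $\theta_t\to(Y^t,M^{t-1},U)\to M_t$. Since the increments sum to at most $TB$, concavity of $x\mapsto2^{-2x/d}$ lets me replace them by their average $B$ in the limiting fixed-point argument. Taking the maximum of the two floors then reproduces $N_\star$ exactly, and the $B\to\infty$ discussion follows because the observation floor does not depend on $B$.

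I expect the main obstacle to be this averaging step in the rate floor. Theorem~\ref{thm:seqlb} assumed a uniform per-round bound of $B$, whereas the directed-information bound only controls the sum. I would first try to handle the time-varying contraction $N_{t+1}\ge|\det A|^{2/d}2^{-2r_t/d}N_t+w_\star$, with $\sum_t r_t\le TB$, via a Cesàro/Jensen argument on $\log N_t$. If that fails, I would fall back on the trivial per-round bound $r_t\le B$, which still holds here because each message has $B$ bits, and recover Theorem~\ref{thm:seqlb} directly.
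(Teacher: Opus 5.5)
Your plan is correct and is essentially the paper's argument: the per-round bound $I(Y^t;M_t\mid M^{t-1},U)\le H(M_t\mid M^{t-1},U)\le B$ gives the directed-information converse and the inequality $\bar D_\infty\ge(R^{\mathrm{rem}}_{\mathrm{na}})^{-1}(B)$, and the explicit display is exactly the entropy-power recursion (R)--(O)--(T)--(D) of Lemma~\ref{lem:entropypower}, i.e.\ Theorem~\ref{thm:seqlb}, which is all the paper invokes. Your fallback to the per-round bound $r_t\le B$ therefore suffices for the display; the average-rate version (where, incidentally, $x\mapsto 2^{-2x/d}$ is convex, not concave) would only matter for lower-bounding $(R^{\mathrm{rem}}_{\mathrm{na}})^{-1}(B)$ itself, which the paper does not prove separately either.
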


\subsection{Achievability via innovation quantization (expected-rate)}\label{app:seqrd:ach}
\begin{lemma}[Dithered-lattice innovation quantizer]
\label{lem:innovation}
Let the encoder maintain the Kalman one-step prediction $\hat Y_{t\mid t-1}$ and quantize the innovation
$\nu_t:=Y_t-\hat Y_{t\mid t-1}$ (conditionally Gaussian, covariance $S_t\succ0$) with a subtractive-dither
nested lattice (ECDQ) followed by entropy coding. The reconstruction is unbiased,
$\E[\hat\nu_t\mid\nu_t]=\nu_t$, with $\E\|\hat\nu_t-\nu_t\|^2\le c_d\,2^{-2B/d}\det(S_t)^{1/d}d$
($c_d\le 2\pi e/12$ as $d\to\infty$) at \emph{average} rate $B$ bits per round. The scheme needs only
finite differential entropy of $\nu_t$ (equivalently $S_t\succ0$); no a.s.\ bound on $\|g_t\|$ and no
bounded innovation support is required \citep{zamir1998nested,kostina2019ratecost}.
\end{lemma}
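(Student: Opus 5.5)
The proof proceeds through the standard entropy-coded dithered quantization (ECDQ) analysis of Zamir--Feder, applied to the innovation process. There are four steps.

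\textbf{Step 1: the innovation is conditionally Gaussian given the decoder's information.} The encoder and decoder must share the prediction $\hat Y_{t\mid t-1}$. To arrange this, define the prediction from the decoder's reconstructions $\hat Y^{t-1}$, where $\hat Y_s=\hat Y_{s\mid s-1}+\hat\nu_s$; the encoder knows these because it knows $M^{t-1}$ and the public dither $U$. Subtractive dither makes the quantization error $e_t:=\hat\nu_t-\nu_t$ uniform on the Voronoi cell of the fine lattice and independent of $\nu_t$ and of the past. This is the Zamir--Feder crypto lemma, which requires the dither to be uniform on the cell and shared through $U$.

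Consequently the reconstructions behave like observations of $Y_t$ through an additive noise channel that is independent of the signal. We then run a Kalman-type linear predictor on $\hat Y^{t-1}$, treating $e_t$ as extra noise of covariance $\Delta_t$. The innovation $\nu_t$ is an affine function of the Gaussian $(\theta_t,\xi_t)$ and of past terms. The honest claim is that the linear-MMSE innovation has covariance $S_t\succ0$ and finite differential entropy. Exact conditional Gaussianity is not needed for the rate bound, only a second-moment bound; I flag this as a place where the statement's parenthetical should be read in that weaker sense.

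\textbf{Step 2: unbiasedness and the distortion bound.} Unbiasedness, $\E[\hat\nu_t\mid\nu_t]=\nu_t$, follows because $e_t$ is uniform on a cell symmetric about the origin and independent of $\nu_t$.

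For distortion, first whiten by $S_t^{-1/2}$. Then choose a lattice whose second moment per dimension is $G(\Lambda)\,\mathrm{vol}^{2/d}$, scaling the volume so that the rate matches $B$. This gives
\[
\E\|e_t\|^2=d\,G(\Lambda)\,V^{2/d}\det(S_t)^{1/d}.
\]
Any good quantizer sequence has $G(\Lambda)\to1/(2\pi e)$, and the cubic lattice has $G=1/12$; this is the source of the constant $c_d$.

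\textbf{Step 3: the rate bound, which is the main obstacle.} The ECDQ theorem states that the conditional entropy of the quantizer index given the dither equals $I(\nu_t;\nu_t+e_t)$. This quantity is at most $h(\nu_t+e_t)-h(e_t)$. Bound $h(\nu_t+e_t)$ by the Gaussian maximum-entropy bound with covariance $S_t+\Delta_t$, and use $h(e_t)=\log V$. This yields a rate of
\[
\tfrac d2\log\!\big(\det(S_t)^{1/d}/D'\big)+\tfrac d2\log(2\pi e\,G)+o(d)\ \text{bits}.
\]
Entropy coding, conditioned on the dither and the past, then achieves expected length within one bit of this value.

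The bookkeeping consists of absorbing the "+1 bit" and the $\log(2\pi eG)$ excess into $c_d$ while still meeting an \emph{average} rate of exactly $B$. I would handle it by allocating $B-1$ bits to the rate target and adjusting $c_d$ accordingly. Solving the rate equation for $D'$ then gives $\E\|e_t\|^2\le c_d2^{-2B/d}\det(S_t)^{1/d}d$.

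\textbf{Step 4: no almost-sure bound is needed.} Nothing in the argument uses an almost-sure bound on $\|g_t\|$ or bounded support of the innovation. The lattice index set is countably infinite, and only its entropy is controlled, which requires only $h(\nu_t)>-\infty$ and a finite second moment. This completes the expected-rate achievability.
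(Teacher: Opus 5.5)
Your route (subtractive dither, the crypto lemma, the Zamir--Feder identity $H(K_t\mid Z_t)=I(\nu_t;\nu_t+e_t)$, a Gaussian max-entropy bound, then entropy coding) is the standard ECDQ argument the paper relies on. The paper gives no proof of its own beyond the citations and the restatement of these ECDQ facts in the proof of Proposition~\ref{thm:seqach}. Your closed-loop bookkeeping in Step~1 is more careful than the paper's. But Step~2 is wrong as written. If you quantize $S_t^{-1/2}\nu_t$ with a lattice whose error is white, $GV^{2/d}I$, and map back, the error is $S_t^{1/2}\tilde e$, so $\E\|e_t\|^2=GV^{2/d}\Tr(S_t)$. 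At high rate the rate then fixes $V^{2/d}\approx 2\pi e\,2^{-2B/d}$, giving $2\pi eG\,2^{-2B/d}\Tr(S_t)$. The ratio $\Tr(S_t)/(d\det(S_t)^{1/d})$ is unbounded, so it cannot be absorbed into $c_d$. Your displayed formula, and the isotropic $\Delta_t=D'I$ you use in Step~3, instead describe quantizing $\nu_t$ in its own coordinates with an isotropic lattice. Drop the matrix whitening: the factor $\det(S_t)^{1/d}$ enters through $h(\nu_t)$ in the rate, not through a change of coordinates.

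The more serious gap is the ``$+o(d)$'' in Step~3. The max-entropy bound gives exactly $R\le\frac d2\log(2\pi eG)+\frac12\log\det(I+S_t/D')$, which exceeds your expression by $\frac12\log\det(I+D'S_t^{-1})$. This term is negligible only when $D'\ll\lambda_{\min}(S_t)$, and it is $\Theta(d)$ at any fixed $B/d$. It is also not slack that a sharper argument could recover. Since $e_t$ is independent of the conditionally Gaussian $\nu_t$, compare $h(\nu_t\mid\nu_t+e_t)$ with the entropy of the linear-MMSE error. This gives the worst-additive-noise bound $B\ge I(\nu_t;\nu_t+e_t)\ge\frac12\log\det(I+S_tK_e^{-1})$, where $K_e$ is the covariance of $e_t$. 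For $S_t=\lambda I$, convexity of $x\mapsto\log(1+\lambda/x)$ then forces $\E\|e_t\|^2\ge d\lambda/(2^{2B/d}-1)$ for any such unbiased scheme. This exceeds $c_d\,2^{-2B/d}\lambda d$ whenever $(c_d-1)2^{2B/d}<c_d$; for $c_d=2\pi e/12$ that means roughly $B<0.87d$, and the ratio diverges as $B/d\to0$. So the inequality cannot be proved for every $B$. Once the exact rate expression is carried through the step ``solve for $D'$'', your argument establishes the bound with $c_d=2\pi e\,G(\Lambda)\,4^{1/d}(1+o(1))$, where the $4^{1/d}$ comes from your $B-1$ allocation. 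It does so only under an explicit high-resolution hypothesis $2^{-2B/d}\det(S_t)^{1/d}\ll\lambda_{\min}(S_t)$. That is exactly the regime in which the paper uses the lemma (``in the high-rate regime'' in Proposition~\ref{thm:seqach}). You need to state that hypothesis rather than present the bound as unconditional in $B$, all the more since the paper's regime of interest elsewhere is $B\lesssim d$.
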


\begin{proposition}[Expected-rate achievability]
\label{thm:seqach}
ECDQ quantization of the innovation achieves \emph{average} rate $B$ bits per round---not a fixed-length
message $M_t\in\{0,1\}^B$ as in Definition~\ref{def:oracle}. In this \emph{expected-rate} model and in
the high-rate regime, the Kalman filter composed with the quantizer of Lemma~\ref{lem:innovation}
attains a steady-state distortion within the space-filling factor $c_d$ of the remote-NRDF lower bound
of Proposition~\ref{thm:srdconverse}, i.e.\ the $\max$ of a $c_d$-inflated rate-limited term and the
Kalman floor. The scheme requires only that the innovation have finite differential entropy (its
\emph{support} is unbounded, as for any Gaussian; only its conditional covariance is finite). It does
not require an a.s.\ bound on $\|g_t\|$. Consequently it addresses the oracle gap for the drifting model
in the \emph{expected-rate} sense; the exact fixed-length per-round gap (L3$'$) remains open.
\end{proposition}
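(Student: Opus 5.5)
The plan is to build the scheme explicitly, reduce its error to a linear covariance recursion, and then compare that recursion's fixed point with the two floors of Proposition~\ref{thm:srdconverse}. The starting point is Lemma~\ref{lem:seqreduce}: the problem is causal coding of $Y_t=\theta_t-\xi_t$ with a one-step predictor $x_t$. The encoder runs the Kalman filter on $Y^t$, giving the filtered estimate $\hat\theta^{\mathrm{KF}}_{t}$ and the prediction $\hat\theta^{\mathrm{KF}}_{t+1\mid t}=A\hat\theta^{\mathrm{KF}}_{t}$. Since $M^{t}$ and $U$ (which holds the dither) are public, the encoder can also replicate the decoder's state $x_t$.

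At round $t$ the encoder quantizes the discrepancy $\nu_t:=\hat\theta^{\mathrm{KF}}_{t}-x_t$ with the subtractive-dither ECDQ of Lemma~\ref{lem:innovation}. This is the decoder-side innovation, i.e.\ the Kalman innovation plus the accumulated quantization error. The decoder then sets $x_{t+1}=A(x_t+\hat\nu_t)$. Because $\hat\nu_t$ depends on $Y$ only through $\nu_t$, this fits Definition~\ref{def:drift}.

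The first step is an orthogonal decomposition. Write
\[
\theta_t-x_t=(\theta_t-\hat\theta^{\mathrm{KF}}_{t\mid t-1})+(\hat\theta^{\mathrm{KF}}_{t\mid t-1}-x_t).
\]
The first term is the Kalman prediction error, which is orthogonal to every $\sigma(Y^{t-1},U)$-measurable quantity. The second term is such a quantity, because $x_t$ is a function of $Y^{t-1}$ and the dither. Hence $2D_t=\Tr\bar P^{\mathrm{KF}}_t+\Tr\Sigma_t$ with $\Sigma_t:=\mathrm{Cov}(\hat\theta^{\mathrm{KF}}_{t\mid t-1}-x_t)$.

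The second step uses the subtractive dither. It makes the quantization error $\nu_t-\hat\nu_t$ independent of $\nu_t$ and of the past, zero-mean, and of covariance $\approx c_d\,2^{-2B/d}\det(S_t)^{1/d}I$ for a Voronoi cell scaled so that the average rate is $B$. Here $S_t=\mathrm{Cov}(\nu_t)=\Sigma_t+\bar P^{\mathrm{KF}}_t-P^{\mathrm{KF}}_t$. This gives the linear recursion
\[
\Sigma_{t+1}=A\,\mathrm{Cov}(\nu_t-\hat\nu_t)\,A^\top .
\]

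For the rate, I will use the standard ECDQ identity $H(Q\mid\text{dither})=I(\nu_t;\nu_t+E_t)$ together with the Gaussian-innovation bound. This gives $B\le\tfrac d2\log_2\!\big(\det(S_t)^{1/d}/D_q\big)+\tfrac d2\log_2(2\pi e G_d)$ plus $O(1)$ entropy-coding overhead. The constant $2\pi eG_d$ is exactly what produces $c_d$.

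The third step is the steady state. I take determinants to the power $1/d$, using AM--GM in the same direction as (D) of Lemma~\ref{lem:entropypower}. The recursion then becomes, to leading order at high rate,
\[
n_{t+1}\approx c_d\,|\det A|^{2/d}2^{-2B/d}\,(n_t+\text{Kalman innovation term}),
\]
which mirrors $N_{t+1}=a_\star N_t+w_\star$ with $a_\star$ inflated by $c_d$. Under $c_da_\star<1$ this is a contraction, so $\Sigma_t$ converges. In the limit, $D_\infty$ is at most the sum of the Kalman floor $\tfrac12\Tr\bar P^{\mathrm{KF}}_\infty$ and a $c_d$-inflated rate term. Since a sum of two terms is at most twice their max, and since at high rate the rate term is lower order relative to whichever floor binds, this yields the stated ``$\max$ within $c_d$.''

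The main obstacle is the matrix, non-isotropic case. The converse is phrased through $(\det)^{1/d}$, whereas the achievability distortion is a trace. Lattice noise is isotropic while $S_t$ is not, so the AM--GM gap does not cancel in general. I would first handle this by pre-whitening the innovation with $S_t^{-1/2}$, which is known to both sides since $S_t$ is deterministic, so that the lattice acts on an isotropic vector. Any remaining trace/determinant mismatch I would absorb into the high-rate qualifier, as the statement allows, rather than claim exact tightness. A secondary point is that ECDQ delivers rate $B$ only on average, which is why the result is stated in the expected-rate model rather than for fixed-length messages.
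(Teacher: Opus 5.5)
Your construction works at the level of rigor the statement carries, but it is organized differently from the paper's proof. The paper only sketches the argument. It ECDQ-quantizes the observation innovation $Y_t-\hat Y_{t\mid t-1}$ and notes that, with subtractive dither, the decoder sees that innovation through independent additive noise of covariance $\preceq c_d2^{-2B/d}\det(S_t)^{1/d}I$. It then asserts that the resulting perturbed prediction Riccati has a stabilizing solution within $c_d$ of the remote NRDF at high rate, and defers the details to Kostina--Hassibi.

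You instead estimate, then compress in closed loop: the encoder runs the exact Kalman filter, replicates $x_t$, and codes the decoder-side innovation $\hat\theta^{\mathrm{KF}}_t-x_t$. This buys three things:
\begin{itemize}
\item an exact orthogonal split $2D_t=\Tr\bar P^{\mathrm{KF}}_t+\Tr\Sigma_t$, so ``Kalman floor plus rate term'' becomes an identity rather than a Riccati comparison;
\item a linear recursion $\Sigma_{t+1}=A\,\mathrm{Cov}(\hat\nu_t-\nu_t)A^\top$ in place of a perturbed Riccati;
\item an explicit answer to whose prediction is subtracted. The sketch leaves this implicit. If it were the encoder's unquantized Kalman prediction, the decoder could not undo the subtraction, and quantization errors would propagate open-loop through $A$.
\end{itemize}
The paper's route is shorter and inherits the matrix case from the cited work. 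Yours has to do that bookkeeping by hand.

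That bookkeeping is where your third step slips.
\begin{itemize}
\item AM--GM gives $\det(S_t)^{1/d}\le\Tr S_t/d$. With $\kappa=c_d2^{-2B/d}$ this yields $\Tr\Sigma_{t+1}\le\kappa\,\tfrac{\|A\|_F^2}{d}\big(\Tr\Sigma_t+\Tr(\bar P^{\mathrm{KF}}_t-P^{\mathrm{KF}}_t)\big)$. The contraction factor is therefore $\kappa\|A\|_F^2/d$, not $c_da_\star$.
\item Pre-whitening makes the error covariance $\kappa S_t$, so $\Sigma_{t+1}=\kappa A(\Sigma_t+\bar P^{\mathrm{KF}}_t-P^{\mathrm{KF}}_t)A^\top$. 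This is stable only if $\kappa\rho(A)^2<1$, and since $\rho(A)^2\ge|\det A|^{2/d}$, it does not recover $a_\star$ either.
\end{itemize}

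You do not need $a_\star$. For large $B$ either condition holds and $\Tr\Sigma_\infty=O(2^{-2B/d})$. Meanwhile every $\mathcal F_{t-1}$-measurable predictor obeys $2D_t\ge\Tr\bar P^{\mathrm{KF}}_t$ (the trace form of (O)). So your scheme is within a factor $1+O(2^{-2B/d})$ of optimal, and the ``sum $\le 2\max$'' step is unnecessary.

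The trace/determinant mismatch you flag cannot be absorbed into the high-rate qualifier. No scheme beats $\tfrac12\Tr\bar P^{\mathrm{KF}}_\infty$, and this exceeds $c_d\cdot\tfrac d2(\det\bar P^{\mathrm{KF}}_\infty)^{1/d}$ whenever $\bar P^{\mathrm{KF}}_\infty$ is sufficiently anisotropic. The comparison must therefore be made against $(R^{\mathrm{rem}}_{\mathrm{na}})^{-1}(B)$ or the trace-form floor, not the determinant relaxation $\tfrac d2N_\star$.

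Finally, state the rate step as $H(Q\mid\text{dither})\le\tfrac d2\log_2\!\big(2\pi eG_d\det(S_t+D_qI)^{1/d}/D_q\big)$, with the cell scaled so that this equals $B$ minus the entropy-coding overhead. As written, it reads as an inequality on $B$ itself.
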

\begin{proof}
With subtractive dither, the ECDQ reconstruction error $\hat\nu_t-\nu_t$ is independent of $\nu_t$ and
has covariance $\preceq c_d\,2^{-2B/d}\det(S_t)^{1/d}I$ (Lemma~\ref{lem:innovation};
\citealp{zamir1998nested}). The decoder thus sees the innovation through an additive independent noise
of that covariance; substituting it into the Kalman recursion gives a perturbed prediction Riccati whose
stabilizing solution is, at high rate (where the lattice operates near the Gaussian rate--distortion
function, the overload probability is negligible, and the entropy-coded \emph{average} rate is $B$),
within $c_d$ of the remote-NRDF distortion of Proposition~\ref{thm:srdconverse}. This is the
innovation-quantization achievability of \citet{kostina2019ratecost}, valid whenever the noise has
finite differential entropy and hence requiring no a.s.\ bound on $g_t$. The matching is in the
average-rate sense; converting it to an exact fixed-length $\{0,1\}^B$ code per round is the open
fixed-length question (L3$'$).
\end{proof}

\begin{remark}
Existing single-machine quantizers \citep[e.g.][]{alistarh2017qsgd} quantize the \emph{raw} gradient
under an a.s.\ norm bound; quantizing the innovation is what simultaneously exploits temporal correlation
and removes that bound. The high-rate space-filling constant $c_d$ and an overload tail (absorbed into
$c_d$) prevent an exactly tight matching constant.
\end{remark}

\subsection{Specialization and the correlated-gradient correction}\label{app:seqrd:correct}
We now treat Model A: a \emph{fixed} optimum $\theta$ with stationary AR(1) gradient noise
$\xi_t=\rho\,\xi_{t-1}+\sqrt{1-\rho^2}\,\eta_t$, $\eta_t\stackrel{\mathrm{iid}}{\sim}\Ncal(0,\sigma^2 I)$,
stationary variance $\sigma^2$, $|\rho|<1$.

We restate the main-text corollary and give the full argument, then generalize it.
\begin{proof}[Proof of Corollary~\ref{cor:correlated}]
By Lemma~\ref{lem:reduction}, $Y_t=x_t-g_t=\theta-\xi_t$. The causal, invertible transform
$\tilde Y_t:=Y_t-\rho Y_{t-1}=(1-\rho)\theta-\sqrt{1-\rho^2}\,\eta_t$ ($t\ge2$) depends only on
$\{\eta_t\}$, so $\{\tilde Y_t\}_{t\ge2}$ is i.i.d.\ and $(Y_1,\tilde Y_2,\dots,\tilde Y_T)$ is a bijective
(hence sufficient) function of $Y^T$, losing no information about $\theta$. Normalizing,
$\tilde Y_t/(1-\rho)\sim\Ncal(\theta,\sigma^2_{\mathrm{eff}}I)$ with
\[
\sigma^2_{\mathrm{eff}}=\frac{1-\rho^2}{(1-\rho)^2}\,\sigma^2=\frac{1+\rho}{1-\rho}\,\sigma^2.
\]
The encoder (which under Definition~\ref{def:drift} sees $Y^t$) can form $\tilde Y_t$ causally and the
decoder can invert; the $B$-bit budget is unchanged, so under this causal-history encoder the two
minimax risks are equal and Theorem~\ref{thm:product} applies with $\sigma^2\mapsto\sigma^2_{\mathrm{eff}}$.
The map $Y^T\mapsto(Y_1,\tilde Y_2,\dots,\tilde Y_T)$ is a bijection, so the $T$-round correlated problem
has the same minimax risk as the $(T\!-\!1)$-sample i.i.d.\ problem at variance
$\sigma^2_{\mathrm{eff}}$ together with the single Gaussian observation $Y_1$ (variance $\sigma^2$); the
latter shifts the per-round risk by $O(1/T)$ and is dominated. Finally, a causal-history encoder is
stronger than the memoryless encoder of Definition~\ref{def:oracle} (Remark~\ref{rem:strong}), so the
lower bound transfers \emph{a fortiori} to the original oracle.
\end{proof}

\begin{remark}[What the correlated-gradient conjecture got wrong]
\label{rem:correction}
Positive noise correlation \emph{raises} the bound by the factor $\tfrac{1+\rho}{1-\rho}\ge1$: it does
not relax the $d/B$ penalty in isolation but scales \emph{all} terms by $\sigma^2_{\mathrm{eff}}/\sigma^2$.
Three independent computations agree: (a) the whitening above; (b) the AR(1) noise spectrum
$S_\xi(\omega)=\tfrac{(1-\rho^2)\sigma^2}{1-2\rho\cos\omega+\rho^2}$ evaluated at the DC component that
carries $\theta$, $S_\xi(0)=\tfrac{1+\rho}{1-\rho}\sigma^2$; and (c) the effective sample size
$\mathrm{Var}(\bar\xi)=\tfrac{\sigma^2}{T}\sum_k\rho^{|k|}=\tfrac{\sigma^2}{T}\tfrac{1+\rho}{1-\rho}$. What
\emph{does} relax the bit requirement is predictability of the optimum's \emph{trajectory} (slow drift,
$\rho(A)\to1$ with small $W$), via the rate floor of Theorem~\ref{thm:seqlb}---not correlation of the
noise. The original conjecture conflated the two.
\end{remark}

\begin{corollary}[General PSD noise; consistency]
\label{cor:psd}
If $\{\xi_t\}$ is stationary with matrix spectral density $S_\xi(\omega)$, estimating the fixed
$\theta$ (the DC component) has effective noise covariance $S_\xi(0)$, and the bound is obtained by
reverse-waterfilling over the eigenvalues of $S_\xi(0)$ as a corollary of the SDP of
Lemma~\ref{lem:srd}. Setting $\rho=0$ (or $A=I$, $W=0$) gives $\sigma^2_{\mathrm{eff}}=\sigma^2$ and
recovers Theorem~\ref{thm:product} exactly.
\end{corollary}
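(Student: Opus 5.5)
The statement to prove is the last one, Corollary~\ref{cor:psd}. My plan is to reduce it to the scalar whitening argument already given in the proof of Corollary~\ref{cor:correlated}. The structure is a two-sided sandwich around an i.i.d.\ Gaussian-location problem with covariance $S_\xi(0)$.

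\textbf{Reduction to a Gaussian-location problem.} By Lemma~\ref{lem:reduction}, the encoder sees $Y_t=\theta-\xi_t$. Stack the rounds as $Y^T=\mathbf 1_T\otimes\theta-\xi^T$, where $\xi^T$ has a block-Toeplitz covariance $\Gamma_T$ generated by $S_\xi$. Grant the encoder causal history, as in Remark~\ref{rem:strong} and Definition~\ref{def:drift}. It can then apply the causal block-Cholesky (innovations) whitening $L_T^{-1}$, with $\Gamma_T=L_TL_T^\top$. The whitened vector is a bijective, hence sufficient, function of $Y^T$. The transformed observations are independent, and round $t$ has mean $C_t\theta$ and identity covariance.

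\textbf{Fisher information of the whitened observations.} The total Fisher information about $\theta$ is $(\mathbf 1_T\otimes I)^\top\Gamma_T^{-1}(\mathbf 1_T\otimes I)$. By the standard Szeg\H{o}/Grenander limit for block-Toeplitz matrices, this equals $T\,S_\xi(0)^{-1}+O(1)$, provided $S_\xi$ is continuous and positive definite at $0$. This is the matrix analogue of items (a)--(c) of Remark~\ref{rem:correction}, and it identifies $S_\xi(0)$ as the effective per-round noise covariance.

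\textbf{Van Trees with anisotropic noise.} Rerun the proof of Theorem~\ref{thm:product} with $\sigma^2I$ replaced by the effective covariance. Diagonalize $S_\xi(0)=\sum_k\lambda_k u_ku_k^\top$ and rotate coordinates. The Fisher-trace bound of Lemma~\ref{lem:fishertrace} generalizes per round. Its centroid argument, Lemma~\ref{lem:centroid}, applied to the whitened noise $Z$, bounds the message Fisher information so that $\Tr(S_\xi(0)^{1/2}J\,S_\xi(0)^{1/2})\le 2\ln 2\,B$. Alongside this sits the data-processing ceiling $J\preceq S_\xi(0)^{-1}$ per round.

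\textbf{Main obstacle: the reverse-waterfilling optimization.} Plugging these bounds into the multivariate van Trees inequality with a product cosine prior of per-axis scales $\delta_k$ leaves an optimization over how the transcript allocates Fisher information across eigen-directions. Minimizing the van Trees bound $\sum_k 1/(J_k+\pi^2/\delta_k^2)$ subject to $\sum_k\lambda_kJ_k\le 2\ln2\,TB$ and $J_k\le T/\lambda_k$ is a convex water-filling problem. Its KKT solution is the claimed reverse-waterfilling over the eigenvalues of $S_\xi(0)$. I expect this step to be hardest, because the statement phrases the bound via the SDP of Lemma~\ref{lem:srd}, and matching constants to that representation may require care. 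If it does not match cleanly, I would state the water-filling bound directly as the corollary's content.

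\textbf{Consistency check.} Setting $\rho=0$, or $A=I$, $W=0$, gives $S_\xi(0)=\sigma^2I$. All $\lambda_k$ are then equal, the water-filling is uniform, and the computation collapses verbatim to that of Theorem~\ref{thm:product}. The boundary terms from the first rounds of whitening contribute $O(1/T)$, as in Corollary~\ref{cor:correlated}.
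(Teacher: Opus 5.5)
The step that fails is your claim that the Fisher-trace bound of Lemma~\ref{lem:fishertrace} ``generalizes per round'' after whitening. That lemma, and the per-round decomposition in the proof of Lemma~\ref{lem:fishertranscript}, require the round-$t$ message to be a $B$-bit function of a \emph{single fresh} sample whose law given $(M_{1:t-1},U)$ is a Gaussian location family. Independence of the whitened $\tilde Y_t$ does not give you this; what matters is what each message sees. Under the causal-history encoder you granted, $M_t$ depends on all of $\tilde Y^{t}$. Even under the memoryless encoder of Definition~\ref{def:oracle}, the input $Y_t=\sum_{s\le t}(L_T)_{ts}\tilde Y_s$ mixes the whole whitened past, and the prefix already depends on that past. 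What is left is the bound for the transcript as a function of all of $\tilde Y^T$. Its score has sub-Gaussian constant of order $\sqrt T$, so you get the vacuous $T^2B$-type bound that the proof of Lemma~\ref{lem:fishertranscript} itself warns about.

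This is not a technicality: for causal-history encoders the product form is false, already in your consistency case $S_\xi(0)=\sigma^2 I$. Take $B=1$, $\eps^2\le\sigma^2$, and the cube of Theorem~\ref{thm:product}. Let the encoder average $Y_1,\dots,Y_{T_0}$ with $T_0=\lceil 4\sigma^2 d/\eps^2\rceil$. It then uses the next $4d$ rounds to send each coordinate of $\bar Y$, projected onto $[-\delta,\delta]$ and rounded to a $16$-point grid. Each coordinate errs by at most $\delta/(4\pi)+\delta/15$ in $L^2$, so $\E\norm{\hat x-\theta}^2\le 0.85\,\eps^2$ after $T\le 8\sigma^2 d/\eps^2+1$ rounds. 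Theorem~\ref{thm:product} instead demands $T\ge\sigma^2 d^2/(8\ln2\,\eps^2)$, a contradiction once $d$ exceeds an absolute constant. So ``prove it for the stronger encoder, then transfer \emph{a fortiori}'' cannot yield any $d/B$ factor. Restricting instead to encoders that see only the fresh innovation $\tilde Y_t$ gives a class that does not contain the memoryless encoders, so nothing transfers either way.

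The missing ingredient is a per-round Fisher bound for a memoryless encoder whose input is correlated with the prefix. The paper does not supply it either. Corollary~\ref{cor:psd} is asserted without proof, on top of the same causal-history whitening used for Corollary~\ref{cor:correlated}. The SDP of Lemma~\ref{lem:srd} cannot help, because it rests only on the directed-information budget $\sum_t I(Y^t;M_t\mid M^{t-1})\le TB$, which the averaging scheme above also meets.

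Rigorous tools do give something weaker for memoryless encoders. Write $\xi=\alpha+\beta$ with $\beta_t$ i.i.d.\ $\Ncal(0,cI)$ independent of $\alpha$, where $c:=\inf_\omega\lambda_{\min}(S_\xi(\omega))$, and reveal $\alpha$ as public coin. Every round then becomes fresh, so Theorem~\ref{thm:product} holds with $\sigma^2$ replaced by $c$. For AR(1) this equals $S_\xi(0)$ when $\rho\le0$, but only $\tfrac{1-\rho}{1+\rho}\sigma^2$ when $\rho>0$. If you had a per-round bound $\Tr(S_\xi(0)J_t)\le 2\ln2\,B$, your water-filling step (via $\Tr X^{-1}\ge\sum_k 1/X_{kk}$ in the eigenbasis of $S_\xi(0)$) would go through.
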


\begin{figure}[t]
\centering
\includegraphics[width=0.92\linewidth]{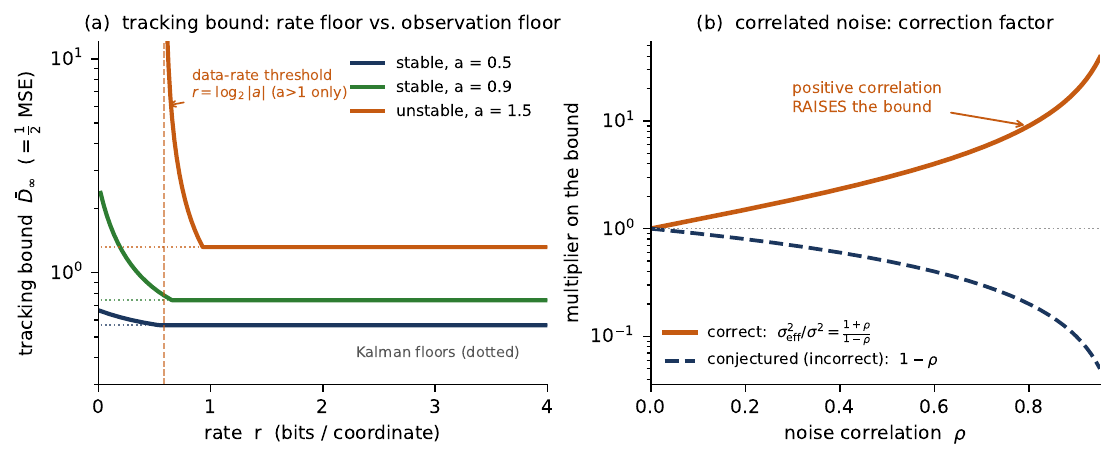}
\caption{Scalar specializations of this appendix (exact evaluations of the closed forms, not
simulations; $W=V=1$, prediction-error normalization $\bar D_\infty=\tfrac12\,$MSE).
\textbf{(a)} The tracking bound of Theorem~\ref{thm:seqlb},
$\bar D_\infty=\tfrac12\max\{W/(1-a^2 2^{-2r}),\,\bar P^{\mathrm{KF}}_\infty\}$, versus per-coordinate
rate $r$. For \emph{stable} $a$ ($|a|<1$) it is finite at every rate, saturating at the
observation-limited Kalman floor (dotted) at high rate. For an \emph{unstable} mode ($a=1.5$) it diverges
at the data-rate threshold $r=\log_2|a|$ (Nair--Evans), confirming that the threshold is an
unstable-mode phenomenon. \textbf{(b)} The correlated-noise correction factor of
Corollary~\ref{cor:correlated}: the effective-variance multiplier $\tfrac{1+\rho}{1-\rho}$ (solid)
\emph{rises} with correlation, correcting the conjectured $1-\rho$ relaxation (dashed).}
\label{fig:seqrd}
\end{figure}

\end{document}